\newcommand{\cO}{\mathcal{O}}
\newcommand{\ctO}{\mathcal{\tilde{O}}}
\def\dd{\mathinner{.\,.}}
\newcommand{\LCP}{\textsf{LCP}}
\newcommand{\occ}{\textsf{occ}}
\newcommand{\defDSproblem}[3]{
   \vspace{2mm}
 \noindent\fbox{
   \begin{minipage}{0.96\textwidth}
   \textsc{#1}\\
   {\bf{Preprocess:}} #2  \\
   {\bf{Query:}} #3
   \end{minipage}
   }
   \vspace{2mm}
}
  \theoremstyle{plain}
  \newtheorem{theorem}{Theorem}
  \newtheorem{lemma}{Lemma}
  \newtheorem{fact}{Fact}
  \theoremstyle{definition}
  \newtheorem{definition}{Definition}
  \newtheorem{example}{Example}
\title{Text indexing for long patterns using locally consistent anchors}
\author[1]{Lorraine A. K. Ayad}
\author[2]{Grigorios Loukides}
\author[3,4]{Solon P.\ Pissis}
\affil[1]{Brunel University London, London, UK}
\affil[2]{King's College London, London, UK}
\affil[3]{CWI, Amsterdam, The Netherlands}
\affil[4]{Vrije Universiteit, Amsterdam, The Netherlands}
\date{}
\begin{document}

\maketitle

\begin{abstract}
In many real-world database systems, a large fraction of the data is represented by strings: sequences of letters over some alphabet. This is because strings can easily encode data arising from different sources. It is often crucial to represent such string datasets in a compact form but also to \emph{simultaneously} enable fast pattern matching queries. This is the classic text indexing problem. The four absolute measures anyone should pay attention to when designing or implementing a text index are: \textbf{(i)} index space; \textbf{(ii)} query time; \textbf{(iii)} construction space; and \textbf{(iv)} construction time. Unfortunately, however, most (if not all) widely-used indexes (e.g., suffix tree, suffix array, or their compressed counterparts) are not optimized for all four measures simultaneously, as it is difficult to have the best of all four worlds. Here, we take an important step in this direction by showing that text indexing with sampling based on locally consistent anchors (lc-anchors) offers remarkably good performance in all four measures, when we have at hand a lower bound  $\ell$ on the length of the queried patterns --- which is arguably a quite reasonable assumption in practical applications. Specifically, we improve on a recently proposed index that is based on \emph{bidirectional string anchors} (bd-anchors), a new type of lc-anchors, by: \textbf{(i)} introducing a randomized counterpart of bd-anchors which outperforms bd-anchors; \textbf{(ii)} designing an average-case linear-time algorithm to compute (the randomized) bd-anchors; and \textbf{(iii)} developing a semi-external-memory implementation and an internal-memory implementation to construct the index in \emph{small space} using near-optimal work. 
Our index offers average-case guarantees. In our experiments using real benchmark datasets, we show that it compares favorably based on the four measures to all classic indexes: (compressed) suffix tree; (compressed) suffix array; and the FM-index.
Notably, we also present a counterpart of our index with worst-case guarantees based on the lc-anchors notion of \emph{partitioning sets}. To the best of our knowledge, this is the first index achieving the best of all worlds in the regime where we have at hand a lower bound $\ell$ on
the length of the queried patterns.
\end{abstract}

\section{Introduction}\label{sec:intro}

In many real-world database systems, including bioinformatics systems~\cite{DBLP:journals/nar/Rodriguez-TomeSCF96}, Enterprise Resource Planning (ERP) systems~\cite{DBLP:conf/edbt/0002RF14}, or Business Intelligence (BI) systems~\cite{DBLP:conf/sigmod/VogelsgesangHFK18}, a large fraction of the data is represented by strings: sequences of letters over some alphabet. This is because strings can easily encode data arising from different sources such as: nucleic acid sequences read by sequencing machines (e.g., short or long DNA reads); natural language text generated by humans (e.g., description or comment fields); or identifiers generated by machines (e.g., URLs, email addresses, IP addresses). Given the ever increasing size of such datasets, it is crucial to represent them compactly~\cite{DBLP:journals/pvldb/Boncz0L20} but also to \emph{simultaneously} enable fast pattern matching queries. This is the classic text indexing problem~\cite{DBLP:books/cu/Gusfield1997,DBLP:books/daglib/0020103,DBLP:books/daglib/0038982}. More formally, \emph{text indexing}  asks to preprocess a string $S$ of length $n$ over an alphabet $\Sigma$ of size $\sigma$, known as the \emph{text}, into a compact data structure that supports efficient pattern matching queries; i.e., \emph{decide} if a \emph{pattern} $P$ occurs or not in $S$ or \emph{report} the set of all positions in $S$ where an occurrence of $P$ starts. 

\subsection{Motivation and related work}

A considerable amount of algorithmic research has been devoted to text indexes over the past decades~\cite{DBLP:conf/focs/Weiner73,DBLP:journals/siamcomp/ManberM93,DBLP:conf/focs/Farach97,DBLP:journals/jacm/KarkkainenSB06,DBLP:journals/jacm/FerraginaM05,DBLP:journals/siamcomp/GrossiV05,DBLP:journals/siamcomp/HonSS09,DBLP:conf/stoc/Belazzougui14,DBLP:journals/algorithmica/0001KL15,DBLP:conf/soda/MunroNN17,DBLP:conf/stoc/KempaK19,DBLP:journals/talg/BelazzouguiCKM20,DBLP:journals/jacm/GagieNP20,DBLP:conf/esa/LoukidesP21}. This is mainly due to the fact that myriad string processing tasks (see~\cite{DBLP:books/cu/Gusfield1997,DBLP:journals/cacm/ApostolicoCFGM16} for comprehensive reviews) require fast access to the substrings of $S$. These tasks rely on such text indexes, which typically arrange the suffixes of $S$ lexicographically in an ordered tree or in an ordered array. The former is known as the \emph{suffix tree}~\cite{DBLP:conf/focs/Weiner73} and the latter is known as the \emph{suffix array}~\cite{DBLP:journals/siamcomp/ManberM93}.
These are classic data structures, which occupy $\Theta(n)$ words of \textbf{space} (or, equivalently, $\Theta(n \log n)$ bits) and can count the number of occurrences of $P$ in $S$ in $\ctO(|P|)$ time\footnote{The $\ctO(\cdot)$ notation suppresses polylogarithmic factors.}. The time for reporting is $\cO(1)$ per occurrence for both structures. Thus, if a pattern $P$ occurs $\occ$ times in $S$, the total \textbf{query time} for reporting is $\ctO(|P|+\occ)$. 

From early days, and in contrast to the traditional data
structure literature, where the focus is on space-query
time trade-offs, the main focus in text indexing has
been on the \textbf{construction time}. That was until the
breakthrough result of Farach~\cite{DBLP:conf/focs/Farach97}, who showed
that suffix trees (and thus suffix arrays, indirectly) can be constructed in $\cO(n)$ time when $\Sigma$ is an integer alphabet of size $\sigma=n^{\cO(1)}$. After Farach's result, more and more attention had been given to reducing the space of the index via compression techniques.
This is due to the fact that, although the space is linear in the number of words, there is an $\cO(\log_\sigma n)$ factor blowup when we consider the actual text size, which is $n\lceil \log \sigma \rceil$ bits. This factor is not negligible when $\sigma$ is considerably smaller than $n$. For instance, the space occupied by the suffix tree of the whole human genome, even with a very efficient implementation~\cite{DBLP:journals/spe/Kurtz99} is about $40$GB, whereas the genome occupies less than $1$GB. To address the above issue, Grossi and Vitter~\cite{DBLP:journals/siamcomp/GrossiV05} and Ferragina and Manzini~\cite{DBLP:journals/jacm/FerraginaM05}, and later Sadakane~\cite{DBLP:journals/mst/Sadakane07}, introduced, respectively, the \emph{compressed suffix array} (CSA), the \emph{FM-index}, and the \emph{compressed suffix tree} (CST).
These data structures occupy $\cO(n \log \sigma)$ bits, instead of $\cO(n \log n)$ bits, at the expense of a factor of $\cO(\log^{\epsilon} n)$ penalty in the query time, where $\epsilon > 0$ is an arbitrary predefined constant. These indexes have the dominant role in some of the most widely used bioinformatics tools~\cite{DBLP:journals/bioinformatics/LiD09,DBLP:journals/bioinformatics/LiYLLYKW09,Bowtie}; while mature and highly engineered implementations of these indexes are available through the \textsf{sdsl-lite} library of Gog et al.~\cite{sdsl}.

Nowadays, as the data volume grows rapidly, \textbf{construction space} is as well becoming crucial for several string processing tasks~\cite{DBLP:journals/talg/BelazzouguiCKM20,DBLP:conf/soda/KempaK23}. Suffix arrays can be constructed in optimal time using $\cO(1)$ words of extra space with the algorithm of Franceschini and  Muthukrishnan~\cite{DBLP:conf/icalp/FranceschiniM07} for general alphabets or with the algorithms of Goto~\cite{DBLP:conf/stringology/Goto19} or Li et al.~\cite{DBLP:journals/iandc/LiLH22} for integer alphabets. \emph{Extra} refers to the required space except for the space of $S$ and the output. Grossi and Vitter's~\cite{DBLP:journals/siamcomp/GrossiV05} original algorithm for constructing the CSA takes $\cO(n \log \sigma)$ time using $\cO(n \log n)$ bits of construction space. Hon et al.~\cite{DBLP:journals/siamcomp/HonSS09} showed an $\cO(n \log \log \sigma)$-time construction reducing the construction space to $\cO(n \log \sigma)$ bits. More recently, Belazzougui~\cite{DBLP:conf/stoc/Belazzougui14} and, independently, Munro et al.~\cite{DBLP:conf/soda/MunroNN17}, improved the time complexity of the CSA/CST construction to $\cO(n)$ using $\cO(n \log \sigma)$ bits of construction space. Recently, Kempa and Kociumaka~\cite{DBLP:conf/soda/KempaK23} have presented a new data structure that occupies $\cO(n \log \sigma)$ bits, can be constructed in $\cO(n \log \sigma/\sqrt{\log n})$ time using $\cO(n \log \sigma)$ bits of construction space, and has the same query time as the CSA and the FM-index. 

This completes the four absolute measures anyone should pay attention to when designing or implementing a text index: space (index size); query time; construction time; and construction space. 

Unfortunately, however, most (if not all) widely-used indexes are not optimized for all four measures \emph{simultaneously}, as it is
difficult to have the best of all four worlds; e.g.:
\begin{description}
\item The \emph{suffix array}~\cite{DBLP:journals/siamcomp/ManberM93} supports very fast pattern matching queries; it can be constructed very fast using very little construction space; but then it occupies $\Theta(n\log n)$ bits of space and that \emph{in any case}.
\item The \emph{CSA}~\cite{DBLP:journals/fttcs/Vitter06}, the \emph{CST}~\cite{DBLP:journals/mst/Sadakane07}, or the \emph{FM-index}~\cite{DBLP:journals/jacm/FerraginaM05} occupy $\cO(n\log \sigma)$ bits of space; they 
can be constructed very fast using $\cO(n\log \sigma)$ bits of construction space; but then they answer pattern matching queries much less efficiently than the suffix array.
\end{description}

\subsection{Our contributions}\label{sec:intro:contributions}

The purpose of this paper is to show that text indexing with \emph{locally consistent anchors} (lc-anchors, in short) offers remarkably good performance in all four measures, when we have at hand a lower bound $\ell$ on the length of the queried patterns. 
This is arguably a quite reasonable assumption in  practical applications.  For instance, in bioinformatics~\cite{Winnowmap2,Wenger2019,Logsdon2020}, the length of sequencing reads (patterns) ranges from a few hundreds to 30 thousand~\cite{Logsdon2020}. 
Even when at most $k$ errors must be accommodated for matching, at least one out of $k+1$ fragments must be matched exactly.
In natural language processing, the queried patterns can also be long~\cite{TOIS17}. Examples of such patterns are queries in question answering systems~\cite{longqueries0}, \emph{description queries} in TREC datasets~\cite{longqueries1,longqueries2}, and  representative phrases in documents~\cite{representativephrases1}. Similarly, a query pattern can be long when it  encodes an entire document (e.g., a webpage in the context of deduplication~\cite{webpageduplicate}), or  machine-generated messages~\cite{machinegenerated}.  

Informally, given a string $S$ and an integer $\ell>0$, the goal is to sample some of the positions of $S$ (the \textbf{lc-anchors}), so as to simultaneously satisfy the following: 
\begin{itemize}
    \item \emph{Property 1 (approximately uniform sampling):} Every fragment of length at least $\ell$ of $S$ has a representative position sampled. This ensures that any pattern $P$, $|P|\geq \ell$, will not be missed during search.
    \item \emph{Property 2 (local consistency):} Exact matches between fragments of length at least $\ell$ of $S$ are preserved unconditionally by having the same (relative) representative positions sampled. This ensures that similarity between similar strings of length at least $\ell$ will be preserved during search.
\end{itemize}

Loukides and Pissis~\cite{DBLP:conf/esa/LoukidesP21} have recently introduced \emph{bidirectional string anchors} (bd-anchors, in short), a new type of lc-anchors. The set $\mathcal{A}_{\ell}(S)$ of bd-anchors of order $\ell$ of a string $S$ of length $n$ is the set of starting positions of the leftmost lexicographically smallest rotation of every length-$\ell$ fragment of $S$ (see Section~\ref{sec:prel} for a formal definition). The authors have shown that bd-anchors are $\cO(n/\ell)$ in expectation~\cite{DBLP:conf/esa/LoukidesP21,TKDE2023} and that can be constructed in $\cO(n)$ worst-case time~\cite{DBLP:conf/esa/LoukidesP21}. Loukides and Pissis have also proposed a text index, which is based on $\mathcal{A}_{\ell}(S)$ and occupies linear extra space in the size $|\mathcal{A}_{\ell}(S)|$ of the sample; the index can be constructed in $\ctO(n)$ time and can report all $\occ$ occurrences of any pattern $P$ of length $|P|\geq \ell$ in $S$ in $\ctO(|P|+\occ)$ time (see Section~\ref{sec:prel} for a formal theorem). The authors of~\cite{DBLP:conf/esa/LoukidesP21} have also implemented their index and presented some very promising experimental results (see also~\cite{TKDE2023}). 

We identify here two important aspects for improving the construction of the bd-anchors index: \textbf{(i)} for computing the set $\mathcal{A}_{\ell}(S)$, which is required to construct the index, Loukides and Pissis implement a simple $\Theta(n\ell)$-time algorithm, because their $\cO(n)$-time worst-case algorithm seems too complicated to implement and unlikely to be efficient in practice~\cite{DBLP:conf/esa/LoukidesP21}; \textbf{(ii)} their index construction implementation uses $\Theta(n)$ construction space \emph{in any case}, which is much larger than the expected size $|\mathcal{A}_{\ell}(S)|=\cO(n/\ell)$ of the index.

Here, we improve on the index proposed by Loukides and Pissis~\cite{DBLP:conf/esa/LoukidesP21} by addressing these aspects as follows:
\begin{itemize}
\item We introduce the notion of randomized bd-anchors; the randomized counterpart of bd-anchors. Informally, the set $\mathcal{A}^{\text{ran}}_{\ell}(S)$ of \emph{randomized bd-anchors} of order $\ell$ of a string $S$ of length $n$ is the set of starting positions of the leftmost ``smallest'' rotation of every length-$\ell$ fragment of $S$ (see Section~\ref{sec:rr-bd-anchors} for a formal definition). The order of rotations is defined based on a random hash function \emph{and}, additionally, on the standard lexicographic order. We show that randomized bd-anchors are $\cO(n/\ell)$ in expectation.
\item We design a novel average-case $\cO(n)$-time algorithm to compute either $\mathcal{A}_{\ell}(S)$ or $\mathcal{A}^{\text{ran}}_{\ell}(S)$.  To achieve this, we use minimizers~\cite{DBLP:conf/sigmod/SchleimerWA03,DBLP:journals/bioinformatics/RobertsHHMY04}, another well-known type of lc-anchors, as anchors to compute $\mathcal{A}_{\ell}(S)$ (or $\mathcal{A}^{\text{ran}}_{\ell}(S)$) after carefully setting the sampling parameters. We employ longest common extension (LCE) queries~\cite{DBLP:conf/stoc/KempaK19} on $S$ to compare anchored rotations (i.e., rotations of $S$ starting at minimizers) of fragments of $S$ efficiently. The fact that minimizers are $\cO(n/\ell)$ in expectation lets us realize the average-case $\cO(n)$-time computation of $\mathcal{A}_{\ell}(S)$ (or $\mathcal{A}^{\text{ran}}_{\ell}(S)$). 
\item We propose a semi-external-memory and an internal-memory implementation to construct the index in small space using near-optimal work when the set of bd-anchors is given. We first compute $\mathcal{A}_{\ell}(S)$ (or $\mathcal{A}^{\text{ran}}_{\ell}(S)$) using $\cO(\ell)$ space using the aforementioned algorithm. For the semi-external-memory implementation, we show that if we have $\mathcal{A}_{\ell}(S)$ (or $\mathcal{A}^{\text{ran}}_{\ell}(S)$) in internal memory and the suffix array of $S$ in external memory, it suffices to scan the suffix array sequentially to deduce the information required to construct the index thus using $\cO(\ell+|\mathcal{A}_{\ell}(S)|)$ space. For the internal-memory implementation, we use \emph{sparse suffix sorting}~\cite{DBLP:conf/latin/AyadLPV24} on $\mathcal{A}_{\ell}(S)$ (or $\mathcal{A}^{\text{ran}}_{\ell}(S)$) to construct the index in $\cO(\ell+|\mathcal{A}_{\ell}(S)|)$ space.
\item We present an extensive experimental evaluation using real benchmark datasets. First, we show that our new algorithm for computing $\mathcal{A}_{\ell}(S)$ (or $\mathcal{A}^{\text{ran}}_{\ell}(S)$) is \emph{more than two orders of magnitude faster} as $\ell$ increases than the simple $\Theta(n\ell)$-time algorithm, while using a very similar amount of memory. 
We also show that randomized bd-anchors outperform bd-anchors \emph{in all measures}: construction time, construction space, and sample size.
We then go on to examine the index construction based on the above four measures. The results show that, for long patterns, the index constructed using randomized bd-anchors and our improved construction algorithms compares favorably to all classic indexes (implemented using \textsf{sdsl-lite}): (compressed) suffix tree; (compressed) suffix array; and the FM-index. For instance, our index offers about $27\%$ faster query time, for all datasets on average, for  all $\ell \in [32,1024]$, compared to the suffix array (which performs best among the competitors in this measure), while occupying up to two orders of magnitude less space. Also, our index occupies up to $5\times$ less space on average over all datasets, for $\ell=1024$, compared to the FM-index (which performs best among the competitors in this measure), while being up to one order of magnitude faster in query time. As another example, our index on the full human genome occupies 11MB for $\ell=2^{14}$ and answers queries more than $72\times$ faster than the FM-index, which  occupies more than 1GB. 
\item The above algorithms are based on an \emph{average-case} analysis. In the worst case, however,  $|\mathcal{A}_{\ell}(S)|$ (or $|\mathcal{A}^{\text{ran}}_{\ell}(S)|$) are in $\Theta(n)$. This implies that our index does not offer satisfactory worst-case guarantees. Motivated by this, we also present a counterpart of our index with \emph{worst-case} guarantees based on the lc-anchors notion of \emph{partitioning sets}~\cite{DBLP:conf/cpm/KosolobovS24}. This construction is mainly of theoretical interest. However, we stress that, to the best of our knowledge, it is the first index achieving the best of all worlds in the regime where we have at hand a lower bound $\ell$ on the length of the queried patterns. Specifically, this index occupies $\cO(n/\ell)$ space, it can be constructed in near-optimal time using $\cO(n/\ell)$ space, and it supports near-optimal pattern matching queries.
\end{itemize}

\subsection{Other related work}
The connection of lc-anchors to text indexing and other string-processing applications is not new.
The perhaps most popular lc-anchors in practical applications are \emph{minimizers}, which have been introduced independently by Schleimer et al.~\cite{DBLP:conf/sigmod/SchleimerWA03} and by Roberts et al.~\cite{DBLP:journals/bioinformatics/RobertsHHMY04} (see Section~\ref{sec:prel} for a definition). Although minimizers have been mainly used for sequence comparison~\cite{Winnowmap2}, Grabowski and Raniszewski~\cite{DBLP:journals/spe/GrabowskiR17} showed how minimizers can be used to sample the suffix array -- see also~\cite{DBLP:journals/jda/ClaudeNPST12}, which uses an alphabet sampling approach. Another well-known notion of lc-anchors is \emph{difference covers}, which have been introduced by Burkhardt and K{\"{a}}rkk{\"{a}}inen~\cite{DBLP:conf/cpm/BurkhardtK03} for suffix array construction in small space (see also~\cite{DBLP:journals/tocs/Maekawa85}). Difference covers play also a central role in the elegant linear-time suffix array construction algorithm of K{\"{a}}rkk{\"{a}}inen et al.~\cite{DBLP:journals/jacm/KarkkainenSB06}; and they have been used in other string processing applications~\cite{DBLP:conf/cpm/Charalampopoulos18,DBLP:conf/cpm/Nun0KK20}. Another very powerful type of lc-anchors is  \emph{string synchronizing sets}, which have been recently proposed by Kempa and Kociumaka~\cite{DBLP:conf/stoc/KempaK19},  for constructing, among others, an optimal data structure for LCE queries (see also~\cite{DBLP:conf/esa/Dinklage0HKK20}). String synchronizing sets have applications in designing sublinear-time algorithms for classic string problems~\cite{DBLP:conf/esa/Charalampopoulos21,DBLP:conf/cpm/Charalampopoulos22}, whose textbook linear-time solutions rely on suffix trees. Another type of lc-anchors, very much related to string synchronizing sets, is the notion of \emph{partitioning sets}~\cite{DBLP:conf/soda/BirenzwigeGP20,DBLP:conf/cpm/KosolobovS24}.

\subsection{Paper organization and differences from the conference version} 
In Section~\ref{sec:prel}, we present the necessary definitions and notation, a brief overview of classic text indexes, as well as some existing results on minimizers and bd-anchors. In Section~\ref{sec:index}, we present a brief overview of the index proposed by Loukides and Pissis~\cite{DBLP:conf/esa/LoukidesP21}. In Section~\ref{sec:rr-bd-anchors}, we describe randomized bd-anchors, our new sampling mechanism, in detail. In Section~\ref{sec:eng}, we improve the construction of the index of Section~\ref{sec:index} by
presenting: \textbf{(i)} our fast algorithm for (randomized) bd-anchors computation (see Section~\ref{sec:bd-anchors}); and \textbf{(ii)} the index construction in small space using near-optimal work (see Section~\ref{sec:small}). 
In Section~\ref{sec:worst-case}, we present the counterpart of our index offering worst-case guarantees.
In Section~\ref{sec:imp}, we provide the full details of our implementations, and in Section~\ref{sec:exp}, we present our extensive experimental evaluation. We conclude this paper in Section~\ref{sec:finale}.

A preliminary version of this paper was announced at PVLDB 2023~\cite{DBLP:journals/pvldb/AyadLP23}. Therein we identified two ways for potentially improving the \emph{construction time} of the index: \textbf{(i)} design a new sampling mechanism that reduces the number of minimizers to potentially improve the construction of bd-anchors; \textbf{(ii)} plug sparse suffix sorting in the semi-external-memory implementation to make it work fully in internal memory thus potentially improving the construction of the index. In response (to \textbf{(i)}), we introduce here the notion of randomized bd-anchors (see Section~\ref{sec:rr-bd-anchors}). It is well-known than randomized minimizers have usually a lower density than their lexicographic counterparts~\cite{DBLP:journals/bioinformatics/ZhengKM20}; the former can also be computed faster and using less space as they can be computed using a rolling hash function~\cite{DBLP:journals/ibmrd/KarpR87}. Unfortunately, for bd-anchors, it does not make sense to maintain the hash value of the rotations in every window. Consider the hash value of the rotation starting at position $i$ in the window starting at position $j<i$; and consider that this $i$ is the smallest rotation in $S[j\dd j+\ell-1]$ according to the hash function used. The hash value of the prefix $S[i\dd j+\ell-1]$ of the rotation is in general completely different than the one of $S[i\dd j+\ell]$: the substring obtained by appending the next letter. Thus when shifting from window $S[j\dd j+\ell-1]$ to window $S[j+1\dd j+\ell]$, the hash value of $S[i\dd j+\ell]$ changes completely and thus \emph{the minimizer position will likely change at the next window}. This is why we partition the rotations in two parts: the first one is of length $r$; and the second one is of length $\ell-r$. We compute the \emph{hash value} of the former and the \emph{lexicographic rank} of the latter. These two integers form a pair. The rotation sampled is the rotation with the smallest such pair (in lexicographic order). This lets us \emph{simultaneously} exploit the benefits of randomized minimizers (small sample, small space, and fast computation) and bd-anchors (effective tie-breaking). For \textbf{(ii)}, we plug the sparse suffix sorting implementation of Ayad et al.~\cite{DBLP:conf/latin/AyadLPV24} in our semi-external-memory implementation (see Section~\ref{sec:eng}).

Since our new implementations (see Section~\ref{sec:imp}) based on the aforementioned two changes (randomized bd-anchors and internal-memory implementation) have dramatically improved the construction in the preliminary version~\cite{DBLP:journals/pvldb/AyadLP23} (which was based on lexicographic bd-anchors and on external memory), \emph{the experimental section is also completely new} (see Section~\ref{sec:exp}): we have re-conducted all experiments from scratch; we have re-plotted the results, and a new description has been added.

Section~\ref{sec:worst-case}, presenting the counterpart of our index with
worst-case guarantees, is also completely new.

\section{Preliminaries}\label{sec:prel}

An \emph{alphabet} $\Sigma$ is a finite set of elements called \emph{letters}; we denote by $\sigma$ the size $|\Sigma|$ of $\Sigma$. A \emph{string} $S=S[1\dd n]$ is a sequence of letters over some alphabet $\Sigma$; we denote by $|S|=n$ the length of $S$. The fragment $S[i\dd j]$ of $S$ is an \emph{occurrence} of the underlying \emph{substring} $P=S[i]\ldots S[j]$. We also write that $P$ occurs at \emph{position} $i$ in $S$ when $P=S[i] \ldots S[j]$. A {\em prefix} of $S$ is a fragment of $S$ of the form $S[1\dd j]$ and a {\em suffix} of $S$ is a fragment of $S$ of the form $S[i\dd n]$. Given a string $S$ and an integer $1\leq i \leq |S|$, we define $S[i \dd |S|]S[1\dd i-1]$ to be the \emph{$i$th rotation} of $S$. Given a string $S$ of length $n$, we denote by $\overleftarrow{S}$ the \emph{reverse} $S[n]\ldots S[1]$ of $S$. 

\defDSproblem{Text Indexing}{A string $S$ of length $n$ over an integer alphabet of size $\sigma=n^{\cO(1)}$.}{Given a string $P$ of length $m$, report all positions $i\iff P=S[i\dd i+m-1]$.}

\paragraph{Classic text indexes}

For a string $S$ of length $n$ over an ordered alphabet of size $\sigma$, the \emph{suffix array} $\textsf{SA}[1\dd n]$ stores the permutation of $\{1,\ldots, n\}$ such that $\textsf{SA}[i]$ is the starting position of the $i$th lexicographically smallest suffix of $S$. The standard application of $\textsf{SA}$ is text indexing, in which we consider $S$ to be the \emph{text}: given any string $P[1\dd m]$, known as the \emph{pattern}, the suffix array of $S$ allows us to report all $\occ$ occurrences of $P$ in $S$ using only $\cO(m \log n + \occ)$ operations~\cite{DBLP:journals/siamcomp/ManberM93}. 
We perform binary search in $\textsf{SA}$ resulting in a range $[s,e)$ of suffixes of $S$ having $P$ as a prefix. Then, $\textsf{SA}[s\dd e-1]$ contains the starting positions of all occurrences of $P$ in $S$. The $\textsf{SA}$ is often augmented with the $\textsf{LCP}$ array~\cite{DBLP:journals/siamcomp/ManberM93} storing the length of longest common prefixes of lexicographically adjacent suffixes (i.e., consecutive entries in the \textsf{SA}). In this case, reporting all $\occ$ occurrences of $P$ in $S$ can be done in $\cO(m+ \log n + \occ)$ time by avoiding to compare $P$ with suffixes of $S$ from scratch during binary search~\cite{DBLP:journals/siamcomp/ManberM93}\label{textindexing} (see~\cite{DBLP:journals/algorithmica/0001KL15,DBLP:conf/cpm/0001G15,DBLP:journals/siamcomp/NavarroN17} for subsequent improvements). The suffix array occupies $\Theta(n)$ space and it can be constructed in $\cO(n)$ time, when $\sigma=n^{\cO(1)}$, using $\cO(n)$ space~\cite{DBLP:conf/focs/Farach97}.
Given \textsf{SA} of $S$, we can compute the \textsf{LCP} array of $S$ in $\cO(n)$ time~\cite{DBLP:conf/cpm/KasaiLAAP01}.

Given a set $\mathcal{F}$ of strings, the \emph{compacted trie} of these strings is the trie obtained by compressing each path of nodes of degree one in the trie of the strings in $\mathcal{F}$, which takes $\cO(|\mathcal{F}|)$ space~\cite{DBLP:journals/jacm/Morrison68}. Each edge in the compacted trie has a label represented as a fragment of a string in $\mathcal{F}$. The \emph{suffix tree} of $S$, which we denote by $\textsf{ST}(S)$, is the compacted trie of the suffixes of $S$~\cite{DBLP:conf/focs/Weiner73}. Assuming $S$ ends with a unique terminating symbol, every leaf in $\textsf{ST}(S)$ represents a suffix $S[i\dd n]$ and is decorated by index $i$. The set of indices stored at the leaf nodes in the subtree rooted at node $v$ is the \emph{leaf-list} of $v$, and we denote it by $LL(v)$. Each edge in $\textsf{ST}(S)$ is labelled with a nonempty substring of $S$ such that the path from the root to the leaf annotated with index $i$ spells the suffix $S[i\dd n]$. The substring of $S$ spelled by the path from the root to node $v$ is the \emph{path-label} of $v$, and we denote it by $L(v)$. Given any pattern $P[1\dd m]$, $\textsf{ST}(S)$ allows us to report all $\occ$
occurrences of $P$ in $S$ using only $\cO(m \log \sigma + \occ)$ operations. We simply spell $P$ from the root of $\textsf{ST}(S)$ (to access edges by the first letter of their label, we use binary search) until we arrive (if possible) at the first node $v$ such that $P$ is a prefix of $L(v)$. Then all $\occ$ occurrences (starting positions) of $P$ in $S$ are $LL(v)$. The suffix tree occupies $\Theta(n)$ space and it can be constructed in $\cO(n)$ time, when $\sigma=n^{\cO(1)}$, using $\cO(n)$ space~\cite{DBLP:conf/focs/Farach97}.
To improve the query time to the optimal $\cO(m + \occ)$ we can use randomization to construct a perfect hash table~\cite{DBLP:journals/jacm/FredmanKS84} to access edges by the first letter of their label in $\cO(1)$ time.

Let $S$ be a string of length $n$. Given two integers $1 \leq i,j \leq n$, we denote by $\LCP_S(i,j)$ the length of the longest common prefix (LCP) of $S[i\dd n]$ and $S[j\dd n]$. When $S$ is over an integer alphabet of size $\sigma=n^{\cO(1)}$, we can construct a data structure in $\cO(n/\log_\sigma n)$ time that answers $\LCP_S(i,j)$ queries in $\cO(1)$ time~\cite{DBLP:conf/stoc/KempaK19}.

\paragraph{Lc-anchors}

Given a string $S$ of length $n$, two integers $w,k>0$, and the $i$th length-$(w+k-1)$ fragment $F=S[i\dd i+w+k-2]$ of $S$, the \emph{$(w,k)$-minimizers} 
of $F$ are defined as the positions $j\in[i,i+w)$ where a lexicographically minimal length-$k$ substring of $F$ occurs~\cite{DBLP:journals/bioinformatics/RobertsHHMY04}. The set $\mathcal{M}_{w,k}(S)$ of $(w,k)$-minimizers of $S$ is defined as the set of $(w,k)$-minimizers of each fragment $S[i\dd i+w+k-2]$, for all $i\in[1,n-w-k+2]$. 

\begin{example}[Minimizers]
Let $S=\texttt{aacaaacgcta}$ and $w=k=3$. We consider fragments of length $w+k-1=5$. The first fragment is \texttt{aacaa}. The lexicographically minimal length-$k$ substring is \texttt{aac}, starting at position $1$, 
so we add position $1$ to $\mathcal{M}_{3,3}(S)$. The second fragment is \texttt{acaaa}. The lexicographically minimal length-$k$ substring is \texttt{aaa}, starting at position $4$,
so we add position $4$ to $\mathcal{M}_{3,3}(S)$. 
The third fragment, \texttt{caaac}, and fourth fragment, \texttt{aaacg}, have \texttt{aaa} as the lexicographically minimal length-$k$ substring, so $\mathcal{M}_{3,3}(S)$ does not change.
The fifth fragment is \texttt{aacgc}. The lexicographically minimal length-$k$ substring is \texttt{aac}, starting at position $5$, and so we add position $5$ to $\mathcal{M}_{3,3}(S)$. The sixth fragment is \texttt{acgct}. The lexicographically minimal length-$k$ substring is \texttt{acg}, starting at position $6$, and so we add position $6$ to $\mathcal{M}_{3,3}(S)$. The seventh fragment is \texttt{cgcta}. The lexicographically minimal length-$k$ substring is \texttt{cgc}, starting at position $7$, and so we add position $7$ to $\mathcal{M}_{3,3}(S)$. Thus  $\mathcal{M}_{3,3}(S)=\{1,4,5,6,7\}$.
\end{example}

\begin{lemma}[\cite{DBLP:journals/bioinformatics/ZhengKM20}]\label{lem:minimizer}
If $S$ is a string of length $n$, randomly generated by a memoryless source over an alphabet of size $\sigma \geq 2$ with identical letter probabilities, then the expected size of $\mathcal{M}_{w,k}(S)$ is $\cO(n/w)$ if and only if $k\geq \log_\sigma w + \cO(1)$.
\end{lemma}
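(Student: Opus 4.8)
The plan is to run the standard density analysis of minimizer sampling schemes. By linearity of expectation, $\mathbb{E}[|\mathcal{M}_{w,k}(S)|]=\sum_{p}\Pr[p\in\mathcal{M}_{w,k}(S)]$ over the starting positions $p$ of the $k$-mers of $S$; the $\cO(w+k)$ positions within distance $w+k$ of either end of $S$ satisfy the same per-position bound proved below and hence contribute only a lower-order additive term (absorbed in $\cO(n/w)$ whenever $n=\Omega(w+k)$, which is the meaningful regime). So it suffices to estimate $\Pr[p\in\mathcal{M}_{w,k}(S)]$ for an interior $p$. Writing $m_p=S[p\dd p+k-1]$, the combinatorial key is that, for interior $p$, one has $p\in\mathcal{M}_{w,k}(S)$ if and only if $L_p+R_p\geq w-1$, where $L_p$ (resp.\ $R_p$) is the length of the maximal run of $k$-mers immediately to the left (resp.\ right) of position $p$ that are all lexicographically no smaller than $m_p$; this is precisely the condition that some length-$w$ window of consecutive $k$-mers contains $p$ and has $m_p$ among its lexicographically minimal elements.

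For the ``if'' direction, assume $k\geq\log_\sigma w+c$ for a suitable constant $c$, so that $\sigma^{k}=\Omega(w)$. The first step is to show that, under the memoryless source with uniform letter probabilities, conditioning on $v=m_p$, the events ``$m_{p\pm j}$ is lexicographically no smaller than $v$'' for the $\cO(w)$ indices $j$ relevant to $L_p$ and $R_p$ behave, up to a $1\pm o(1)$ factor, as if the $k$-mers were independent and uniform over $\Sigma^{k}$; the hypothesis $\sigma^{k}=\Omega(w)$ is used here both to make the discrete order statistics of $\cO(w)$ samples close to their continuous analogues and to make pairwise coincidences among these $\cO(w)$ $k$-mers negligible. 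Granting this, parametrise by $q=\Pr[\text{a uniform }k\text{-mer is no smaller than }v]$; as $v$ ranges over a uniform $k$-mer, $q$ is asymptotically uniform on $[0,1]$, and given $q$ the runs $L_p,R_p$ are independent with $\Pr[L_p\geq a\mid q]=q^{a}$, so $\Pr[L_p+R_p\geq w-1\mid q]=\sum_{t\geq w-1}(t+1)q^{t}(1-q)^{2}$. Integrating over $q$, $\Pr[L_p+R_p\geq w-1]=\sum_{t\geq w-1}(t+1)\int_{0}^{1}q^{t}(1-q)^{2}\,dq=\sum_{t\geq w-1}\frac{2}{(t+2)(t+3)}=\frac{2}{w+1}$, so $\Pr[p\in\mathcal{M}_{w,k}(S)]\leq\frac{2}{w+1}+o(1/w)=\cO(1/w)$ and $\mathbb{E}[|\mathcal{M}_{w,k}(S)|]=\cO(n/w)$.

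For the ``only if'' direction I would prove the contrapositive: if $k$ is more than a constant below $\log_\sigma w$, so that $\sigma^{-k}w\to\infty$, then $\mathbb{E}[|\mathcal{M}_{w,k}(S)|]$ is not $\cO(n/w)$. Let $a$ be the lexicographically smallest letter of $\Sigma$. In a string drawn from the source, the word $a^{k}$ occurs $(n-k+1)\sigma^{-k}=\Theta(n\sigma^{-k})$ times in expectation, and whenever $a^{k}$ occurs at a position $p\leq n-w-k+2$ the $k$-mer $m_p=a^{k}$ is lexicographically minimal in the window of $k$-mers starting at $p$, so $p\in\mathcal{M}_{w,k}(S)$; hence $\mathbb{E}[|\mathcal{M}_{w,k}(S)|]\geq\Theta(n\sigma^{-k})=\omega(n/w)$, contradicting $\cO(n/w)$. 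The $\cO(1)$ slack in the statement is exactly what covers the narrow window $|k-\log_\sigma w|=\cO(1)$ in which the expected size is $\Theta(n/w)$.

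The main obstacle is the estimate in the second paragraph: overlapping $k$-mers are genuinely dependent, and conditioning on $m_p$ being lexicographically small biases its $\cO(k)$ nearest neighbours, so one must quantify how closely the joint order statistics of the $\cO(w)$ relevant windows track the i.i.d.\ uniform model and verify that all error terms are $o(1/w)$ uniformly over the admissible range of $w,k$ (in particular when $k$ is comparable to or larger than $w$). Pinning down the precise additive constant absorbed into $\log_\sigma w+\cO(1)$ --- equivalently, the exact location of the density threshold and the slack needed on each side --- is the delicate part; the remainder is the short Beta-integral computation above together with routine bookkeeping of the positions near the ends of $S$.
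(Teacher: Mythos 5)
The paper does not prove this lemma; it quotes it as a known result from~\cite{DBLP:journals/bioinformatics/ZhengKM20}, so there is no in-paper proof to compare against. Your sketch has the right combinatorial skeleton --- the $L_p+R_p\geq w-1$ characterisation of minimizers, the Beta-integral yielding density $2/(w+1)$ in the i.i.d.\ model, and the $a^k$ counting argument for the converse --- and the ``only if'' direction is essentially complete: if $k-\log_\sigma w\to-\infty$ then $a^k$ occurs $\Theta(n\sigma^{-k})=\omega(n/w)$ times in expectation, each interior occurrence is a window minimum and hence a minimizer, so the density is not $\cO(1/w)$.

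The ``if'' direction, however, is not a proof but a reduction to an unestablished distributional approximation, and the two heuristics you invoke to support it both fail precisely in the regime the lemma covers, namely $\sigma^k=\Theta(w)$, which is exactly what $k=\log_\sigma w+\cO(1)$ permits. Pairwise coincidences among $\cO(w)$ uniform draws from a universe of size $\Theta(w)$ number $\Theta(w)$ in expectation, so they are far from negligible; since the paper's definition promotes every tied position in a window to a minimizer, ties must be bounded explicitly rather than dismissed. And the claim that conditioning on $m_p=v$ perturbs the joint law of its neighbours by only a $1\pm o(1)$ factor is untenable: $m_{p-1}=c\cdot v[1\dd k-1]$ for a single fresh uniform letter $c$, so whether $m_{p-1}\geq v$ is governed almost entirely by $c$ versus $v[1]$, a constant-strength dependence, and when $k$ is comparable to $w$ (which is allowed) this correlated block is a constant fraction of the window. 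You correctly flag this as the obstacle, but leaving it open means half the statement remains unproved; a complete argument must control ties and overlap dependence directly, as~\cite{DBLP:journals/bioinformatics/ZhengKM20} does, rather than passing to the i.i.d.\ Beta computation.
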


\begin{lemma}[\cite{DBLP:conf/esa/LoukidesP21}]\label{lem:minimizers_con}
For any string $S$ of length $n$ over an integer alphabet of size $n^{\cO(1)}$ and integers $w,k>0$,
the set $\mathcal{M}_{w,k}$ can be computed in $\cO(n)$ time.
\end{lemma}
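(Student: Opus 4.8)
The plan is to reduce the computation to (i) lexicographically ranking the length-$k$ substrings of $S$ and (ii) a sliding‑window‑minimum computation on those ranks. First dispose of the trivial case: if $w+k-1>n$ there is no length-$(w+k-1)$ fragment and the output is empty, so assume $w+k-1\le n$. I would construct the suffix array $\textsf{SA}$ and the $\textsf{LCP}$ array of $S$ in $\cO(n)$ time, which is possible since $\sigma=n^{\cO(1)}$ (by Farach's construction~\cite{DBLP:conf/focs/Farach97} and the $\textsf{LCP}$-array algorithm of Kasai et al.~\cite{DBLP:conf/cpm/KasaiLAAP01}). Then, in a single left-to-right scan of $\textsf{SA}$, I assign to each position $j\in[1,n-k+1]$ an integer $\textsf{rank}[j]$ equal to the lexicographic rank of $S[j\dd j+k-1]$ among the distinct length-$k$ substrings of $S$: keep a running counter and increase it whenever the $\textsf{LCP}$ value with the previous suffix falls below $k$. (To sidestep the handling of suffixes shorter than $k$, one may pad $S$ on the right with $k-1$ copies of a sentinel larger than every letter of $\Sigma$; such padded length-$k$ substrings are lexicographically larger than every genuine one, so the ranks of the latter are unaffected.) Because the LCE of a block of $\textsf{SA}$-consecutive suffixes equals the minimum of the adjacent $\textsf{LCP}$ entries, two positions get the same value iff their length-$k$ substrings coincide, and the order of $\textsf{rank}$ matches the lexicographic order of these substrings. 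This phase runs in $\cO(n)$ time.

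With $\textsf{rank}$ available, observe that $j\in\mathcal{M}_{w,k}(S)$ precisely when there is a \emph{valid} window start $i$ — meaning $1\le i\le n-w-k+2$, so that the fragment $S[i\dd i+w+k-2]$ fits in $S$ — with $i\le j\le i+w-1$ and $\textsf{rank}[j]=\min\{\textsf{rank}[i],\dots,\textsf{rank}[i+w-1]\}$; every position of a valid window lies in $[1,n-k+1]$, so $\textsf{rank}$ is defined on it. To perform this test for each position in $\cO(1)$ time I would precompute, using a monotone stack in $\cO(n)$ time, the previous- and next-strictly-smaller-element arrays of $\textsf{rank}$: let $L(j)$ be the largest $\ell<j$ with $\textsf{rank}[\ell]<\textsf{rank}[j]$ (and $L(j)=0$ if there is none) and $R(j)$ the smallest $r>j$ with $\textsf{rank}[r]<\textsf{rank}[j]$ (and $R(j)=n-k+2$ if none). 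A window containing $j$ has $\textsf{rank}[j]$ as its minimum iff it contains neither $L(j)$ nor $R(j)$, that is, iff $[i,i+w-1]\subseteq[L(j)+1,R(j)-1]$: the forward direction since $\textsf{rank}[L(j)],\textsf{rank}[R(j)]<\textsf{rank}[j]$ so any window containing either would have minimum below $\textsf{rank}[j]$, and the reverse since, by maximality of $L(j)$ and minimality of $R(j)$, every position of such a window has rank at least $\textsf{rank}[j]$ (so $j$ attains the window minimum, hence is a minimizer of that fragment even in the presence of ties). Consequently $j\in\mathcal{M}_{w,k}(S)$ iff the integer interval
\[
\bigl[\ \max\{1,\ L(j)+1,\ j-w+1\}\ ,\ \min\{n-w-k+2,\ R(j)-w,\ j\}\ \bigr]
\]
is nonempty, an $\cO(1)$ emptiness check. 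Performing it for every $j\in[1,n-k+1]$ and outputting the positions that pass yields $\mathcal{M}_{w,k}(S)$, and the whole algorithm takes $\cO(n)$ time.

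I expect the only delicate part to be bookkeeping, not any conceptual barrier. The reason I phrase the membership test per position via \emph{strictly} smaller neighbours, rather than running a classical deque-based sliding-window minimum, is exactly to keep ties correct while staying linear: a lexicographically minimal length-$k$ substring of a fragment can occur at $\Theta(w)$ distinct positions of the window, so naively emitting for each window the whole block of tied minima would cost $\cO(nw)$. What remains is getting the boundary conditions right — the window-validity bound $i\le n-w-k+2$ and the endpoints of the domain of $\textsf{rank}$ — all of which are folded into the displayed interval.
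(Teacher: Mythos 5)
The paper invokes this lemma from Loukides and Pissis~\cite{DBLP:conf/esa/LoukidesP21} without reproducing a proof, so there is no in-paper argument to compare against; judged on its own, your construction is correct and takes the standard linear-time route: rank length-$k$ substrings via \textsf{SA}/\textsf{LCP}, then decide for each position $j$ whether some valid window witnesses it as a minimum using the previous/next \emph{strictly}-smaller arrays, which handles ties cleanly and avoids the $\cO(nw)$ blow-up you rightly flag. One small inaccuracy in the side remark: with a sentinel larger than every letter of $\Sigma$, a padded $k$-mer such as $S[n-1]S[n]\$^{\,k-2}$ is \emph{not} in general lexicographically larger than every genuine $k$-mer, since its leading letters are ordinary letters of $S$ and it can therefore sort anywhere in the \textsf{SA}; fortunately, this claim (and the padding itself) is unnecessary for your argument, because you only assign ranks to positions $j\le n-k+1$ and track the running \textsf{LCP} minimum across skipped \textsf{SA} entries, so the relative ranks of genuine $k$-mers come out correct regardless of how short or padded suffixes interleave with them.
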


Loukides and Pissis defined the following alternative notion of lc-anchors~\cite{DBLP:conf/esa/LoukidesP21}.  

\begin{definition}[Bidirectional anchor]
Given a string $F$ of length $\ell>0$, the \emph{bidirectional anchor} (bd-anchor) of $F$ is the lexicographically minimal rotation $j\in[1,\ell]$ of $F$ with minimal $j$. The set of order-$\ell$  bd-anchors of a string $S$ of length $n>\ell$, for some integer $\ell>0$, is defined as the set $\mathcal{A}_{\ell}(S)$ of bd-anchors of $S[i\dd i+\ell-1]$, for all $i\in [1,n-\ell+1]$. 
\end{definition}

\begin{example}[Bd-anchors]
Let $S=\texttt{aacaaacgcta}$ and $\ell=5$. We consider fragments of length $\ell=5$. The first fragment is \texttt{aacaa}. We need to consider all of its rotations and select the leftmost lexicographically minimal. All rotations of \texttt{aacaa} are: \texttt{aacaa}, \texttt{acaaa}, \texttt{caaaa}, \texttt{aaaac}, and \texttt{aaaca}. We thus select \texttt{aaaac}, which starts at position $4$, and add position $4$ to $\mathcal{A}_{5}(S)$. The second fragment is \texttt{acaaa}. All rotations of \texttt{acaaa} are: \texttt{acaaa}, \texttt{caaaa}, \texttt{aaaac}, \texttt{aaaca}, and \texttt{aacaa}. We thus select \texttt{aaaac}, which starts at position $4$, and so we do not need to add position $4$ to  $\mathcal{A}_{5}(S)$. The third fragment is \texttt{caaac}. All rotations of \texttt{caaac} are: \texttt{caaac}, \texttt{aaacc}, \texttt{aacca}, \texttt{accaa}, and \texttt{ccaaa}. We thus select \texttt{aaacc}, which starts at position $4$, and so we do not need to add position $4$ to $\mathcal{A}_{5}(S)$. The fourth fragment is \texttt{aaacg}, which is also the lexicographically minimal rotation. This rotation starts at position $4$, and so we do not need to add position $4$ to $\mathcal{A}_{5}(S)$. The fifth fragment is \texttt{aacgc}, which is also the lexicographically minimal rotation. This rotation starts at position $5$, and so we add position $5$ to $\mathcal{A}_{5}(S)$. The sixth fragment is \texttt{acgct}, which is also the lexicographically minimal rotation. This rotation starts at position $6$, and so we add position $6$ to $\mathcal{A}_{5}(S)$. The seventh fragment is \texttt{cgcta}. The lexicographically minimal rotation of this fragment is \texttt{acgct}, which starts at position $11$, and so we add position $11$ to $\mathcal{A}_{5}(S)$. Thus $\mathcal{A}_{5}(S)=\{4,5,6,11\}$.\label{ex:bd-anchors}
\end{example}

The bd-anchors notion was also parameterized by Loukides et al.~\cite{TKDE2023} according to the following definition.

\begin{definition}[Reduced bidirectional anchor]
Given a string $F$ of length $\ell>0$ and an integer $0\leq r\leq \ell-1$, we define the \emph{reduced bidirectional anchor} of $F$ as the lexicographically minimal rotation $j\in[1,\ell-r]$ of $F$ with minimal $j$. The set of order-$\ell$ reduced bd-anchors of a string $S$ of length $n>\ell$ is defined as the set $\mathcal{A}_{\ell,r}(S)$ of reduced bd-anchors of $S[i\dd i+\ell-1]$, for all $i\in [1,n-\ell+1]$. 
\end{definition}

Informally, when using the reduced bd-anchor mechanism, we neglect the $r$ rightmost rotations from the sampling process.

\begin{example}[Reduced bd-anchors]
Let $S=\texttt{aacaaacgcta}$, $\ell=5$, and $r=1$. Recall from Example~\ref{ex:bd-anchors} that $\mathcal{A}_{5}(S)=\{4,5,6,11\}$. To see the difference between bd-anchors and reduced bd-anchors, consider the seventh (last) fragment of length $\ell=5$ of $S$. This fragment is \texttt{cgcta} and its rotations are: \texttt{cgcta}, \texttt{gctac}, \texttt{ctacg}, \texttt{tacgc}, and \texttt{acgct}. The lexicographically minimal rotation is \texttt{acgct} and this is why $11\in \mathcal{A}_{5}(S)$. In the case of reduced bd-anchors we are asked to neglect the $r=1$ rightmost rotations, and so the only candidates are: \texttt{cgcta}, \texttt{gctac}, \texttt{ctacg}, and \texttt{tacgc}. Out of these, the lexicographically minimal rotation is \texttt{cgcta}, which starts at position $7$, and this is why $\mathcal{A}_{5,1}(S)=\{4,5,6,7\}$.\label{example:3}
\end{example}

\begin{lemma}[\cite{DBLP:conf/esa/LoukidesP21,TKDE2023}]\label{lem:bd-anchors}
If $S$ is a string of length $n$, randomly generated by a memoryless source over an alphabet of size $\sigma \geq 2$ with identical letter probabilities, then, for any integer $\ell>0$, the expected size of $\mathcal{A}_{\ell,r}(S)$ with $r=\lceil4 \log\ell /\log \sigma\rceil$ is in $\cO(n/\ell)$.
\end{lemma}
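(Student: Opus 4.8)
The plan is to deduce the bound from the known bound on the number of minimizers (Lemma~\ref{lem:minimizer}) via a \emph{deterministic} containment that holds for every string, and to invoke the probabilistic assumption only at the very end. Concretely, I will first show that for every string $S$ of length $n$ and all integers $\ell>r\geq 0$,
\[
\mathcal{A}_{\ell,r}(S)\ \subseteq\ \mathcal{M}_{\ell-r,\;r+1}(S),
\]
and then apply Lemma~\ref{lem:minimizer} with window length $w=\ell-r$ and substring length $k=r+1$.

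The reason for the containment is that deleting the $r$ rightmost rotations guarantees, for every remaining candidate rotation, an un-wrapped prefix of length at least $r+1$. Fix a window $F=S[i\dd i+\ell-1]$ and a candidate rotation starting at (absolute) position $p\in[i,i+\ell-r-1]$; this rotation equals $S[p\dd i+\ell-1]\,S[i\dd p-1]$, and its un-wrapped part $S[p\dd i+\ell-1]$ has length $i+\ell-p\geq r+1$. Hence the first $r+1$ letters of that rotation are exactly $S[p\dd p+r]$, which does not depend on $i$. Since two length-$(r+1)$ strings (neither a prefix of the other) are ordered by their first mismatch, which falls inside their first $r+1$ letters, the lexicographically smallest candidate rotation of $F$ must start at a position $p$ at which $S[p\dd p+r]$ equals the lexicographically smallest length-$(r+1)$ substring of $F=S[i\dd i+\ell-1]$. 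As $F$ has length $\ell=(\ell-r)+(r+1)-1$ and its length-$(r+1)$ substrings start exactly at positions $i,\dots,i+\ell-r-1$ — which are precisely the candidate positions — the reduced bd-anchor of $F$ is therefore a $(\ell-r,\,r+1)$-minimizer of $F$. Taking the union over $i\in[1,n-\ell+1]$, which is the same index range as in the definition of $\mathcal{M}_{\ell-r,r+1}(S)$, yields the displayed containment.

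Given the containment, $\mathbb{E}[\,|\mathcal{A}_{\ell,r}(S)|\,]\leq\mathbb{E}[\,|\mathcal{M}_{\ell-r,\;r+1}(S)|\,]$. By Lemma~\ref{lem:minimizer} this expectation is $\cO(n/(\ell-r))$ as soon as $r+1\geq\log_\sigma(\ell-r)+\cO(1)$, which holds because $r=\lceil 4\log\ell/\log\sigma\rceil\geq 4\log_\sigma\ell$ while $\log_\sigma(\ell-r)\leq\log_\sigma\ell$, so the factor $4$ leaves comfortable slack (the finitely many small values of $\ell$ that violate this inequality, or for which $r\geq\ell$ so that the statement is vacuous, are handled trivially since $|\mathcal{A}_{\ell,r}(S)|\leq n$). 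Finally $r=\Theta(\log_\sigma\ell)=o(\ell)$, hence $\ell-r=\Theta(\ell)$ and $\cO(n/(\ell-r))=\cO(n/\ell)$, as claimed.

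The step I expect to need the most care is the bookkeeping that makes the containment an \emph{exact} set inclusion rather than a morally-true approximation: one must check that every candidate rotation — including the very last one, the reason $r=0$ does not work — has an un-wrapped prefix of length at least $r+1$, and that the set of length-$(r+1)$ substrings and the index ranges scanned by the minimizer of $S[i\dd i+\ell-1]$ coincide precisely with those seen by the reduced bd-anchor process. Once $w=\ell-r$ and $k=r+1$ are fixed, verifying the hypothesis of Lemma~\ref{lem:minimizer} is routine arithmetic.
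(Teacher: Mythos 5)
Your proof is correct and follows essentially the same route the paper takes (and that the cited prior work presumably used): the containment $\mathcal{A}_{\ell,r}(S)\subseteq\mathcal{M}_{\ell-r,\,r+1}(S)$ you establish is precisely the paper's Fact~\ref{fct:simple} (stated per-window, with the global version noted immediately after its proof), and the rest is just plugging $w=\ell-r$, $k=r+1$ into Lemma~\ref{lem:minimizer}, exactly as you do. Your bookkeeping of the un-wrapped prefix and the index ranges is careful and accurate, and your handling of the small-$\ell$ edge cases and the $\ell-r=\Theta(\ell)$ simplification closes the argument cleanly.
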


\begin{lemma}[\cite{DBLP:conf/esa/LoukidesP21,TKDE2023}]\label{lem:compute-bda}
For any string $S$ of length $n$ over an integer alphabet of size $n^{\cO(1)}$ and integers $\ell,r>0$,
the set $\mathcal{A}_{\ell,r}(S)$ can be computed in $\cO(n)$ time.
\end{lemma}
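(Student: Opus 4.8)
The plan is to reduce the computation of $\mathcal{A}_{\ell,r}(S)$ to a sliding-window least-rotation problem and then amortise the per-window cost down to $\cO(1)$. Fix a window $W_i=S[i\dd i+\ell-1]$; its reduced bd-anchor is the absolute position $i+j-1$, where $j\in[1,\ell-r]$ is the \emph{leftmost} index minimising the rotation $W_i[j\dd\ell]W_i[1\dd j-1]$ lexicographically. The rotations of $W_i$ are exactly the length-$\ell$ substrings of $W_iW_i$ starting in $[1,\ell]$; appending a sentinel $\$$ larger than every letter of $\Sigma$, the leftmost minimal rotation is the one whose corresponding suffix of $W_iW_i\$$ is lexicographically smallest among those starting in $[1,\ell]$ (the sentinel forces ties among equal rotations of a periodic window to be broken towards the smaller start, as one checks by comparing two minimal-rotation starts $t<t'$: the suffix at $t'$ hits $\$$ strictly earlier). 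The parameter $r$ only removes the suffix-range $[\ell-r+1,\ell]$ of candidate start indices, so I would first design the algorithm for $r=0$ and reinstate the range restriction at the end.

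First I would set up a constant-time primitive for comparing rotations. Using the LCE data structure on $S$ of Kempa and Kociumaka (constructible well within $\cO(n)$ time, answering $\LCP_S(i,j)$ queries in $\cO(1)$ time), any rotation $S[j\dd i+\ell-1]S[i\dd j-1]$ of $W_i$ is a concatenation of two fragments of $S$, so the lexicographic comparison of two rotations of the same window decomposes into a constant number of $\LCP_S$ queries and $\cO(1)$ direct letter comparisons; hence two rotations can be compared in $\cO(1)$ time. Classically (Booth's algorithm) this already yields the least rotation of a single window in $\cO(\ell)$ time, so the whole point is to avoid recomputing it from scratch for each of the $n-\ell+1$ windows.

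The core of the argument — and the step I expect to be the main obstacle — is to maintain, as $i$ increases by one, a succinct certificate of the least rotation whose total update cost telescopes to $\cO(n)$. The plain sliding-window-minimum deque does not apply directly, because shifting the window changes the wrap-around part and hence the \emph{content} of every rotation simultaneously. The route I would take is the classical connection between least rotations and the Lyndon factorisation: maintain, in the style of Duval's algorithm, the stack of ``Lyndon heads'' of the current window (positions that start a Lyndon word not dominated by any Lyndon word beginning strictly to their left inside the window), show that the start of the least rotation is always recoverable from this stack together with $\cO(1)$ invocations of the comparison primitive above, and prove by a potential argument on the stack size that appending $S[i+\ell-1]$ on the right and deleting $S[i-1]$ on the left costs $\cO(1)$ amortised. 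Establishing the invariant that ties the sliding Lyndon structure to the least rotation — in particular that the bd-anchor never silently ``jumps'' past a maintained head as the window slides — is the delicate part; it rests on the facts that the least rotation of a primitive string is its unique Lyndon conjugate, and that for a power $p^k$ the leftmost occurrence of the least rotation is the head of the Lyndon conjugate of $p$.

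Finally I would reinstate the parameter $r$: since the forbidden start indices form the suffix-range $[\ell-r+1,\ell]$ of each window, it suffices to cap the maintained structure at that boundary, which does not affect the amortised analysis. Reading off $i+j-1$ for every $i\in[1,n-\ell+1]$ and deduplicating through a length-$n$ bitvector then yields the set $\mathcal{A}_{\ell,r}(S)$ in $\cO(n)$ total time.
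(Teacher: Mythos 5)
The paper does not reprove this lemma; it cites~\cite{DBLP:conf/esa/LoukidesP21,TKDE2023} and, in Section~5.1, explicitly describes the underlying algorithm: after $\cO(n)$-time preprocessing of $S$, one invokes Kociumaka's data structure~\cite{DBLP:conf/cpm/Kociumaka16}, which answers \emph{minimal suffix/rotation of an arbitrary substring} queries in $\cO(1)$ time, once per window. That data structure is built on fusion trees; there is no incremental sliding-window maintenance at all. Your proposal takes a genuinely different route, and unfortunately the route has a real gap exactly at the step you flag as ``the delicate part''.

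The gap is in the amortized analysis of the left-deletion. For right-appends, Duval-style maintenance indeed charges each push against a later pop, so the potential ``stack size'' works. But deleting the leftmost letter of the window can \emph{create} new Lyndon heads that were previously dominated by the removed prefix: already for $\texttt{aab}\mapsto\texttt{ab}$ a fresh head appears at the second original position, and for a single Lyndon word $w$ the factorisation of $w[2\dd|w|]$ can split into $\Theta(\log|w|)$ factors. With the potential you name, a left-delete that creates $m$ new stack entries has actual cost $\Theta(m)$ \emph{and} raises the potential by $m$, so its amortized cost is $\Theta(m)$, not $\cO(1)$; the telescoping argument does not close. This is not a presentational issue: the paper itself remarks that the worst-case linear-time algorithm ``seems too complicated to implement'' precisely because there is no known simple amortized sliding-window scheme, and that is why it resorts to Kociumaka's static $\cO(1)$-per-query structure instead. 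Your LCE-based $\cO(1)$ rotation comparator and the sentinel trick for leftmost tie-breaking are both sound and reappear (in spirit) in Lemma~\ref{lem:LCP} of this paper, but they only reduce the per-comparison cost; they do not bound the \emph{number} of comparisons your sliding scheme performs, which is where the claim breaks down.
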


\section{The index}\label{sec:index}

\begin{figure*}[ht]
     \centering
     \begin{subfigure}[b]{0.49\textwidth}
         \centering
         \includegraphics[width=7cm]{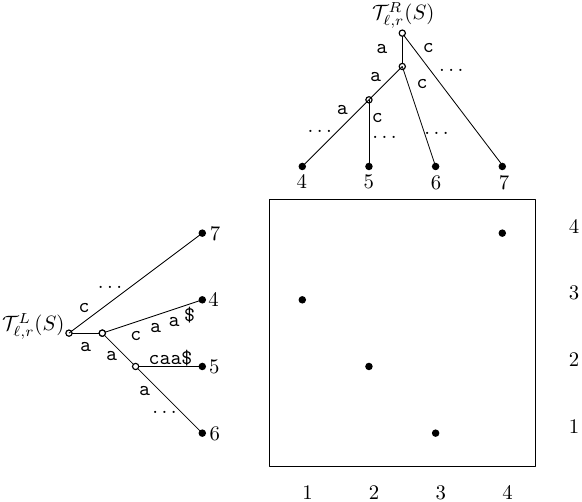}
         \caption{The $\mathcal{I}_{\ell,r}(S)$ index for $S=\texttt{aacaaacgcta}$, $\ell=5$, and $r=1$.}
         \label{fig:ds}
     \end{subfigure}
     \hfill
     \begin{subfigure}[b]{0.49\textwidth}
         \centering
         \includegraphics[width=7cm]{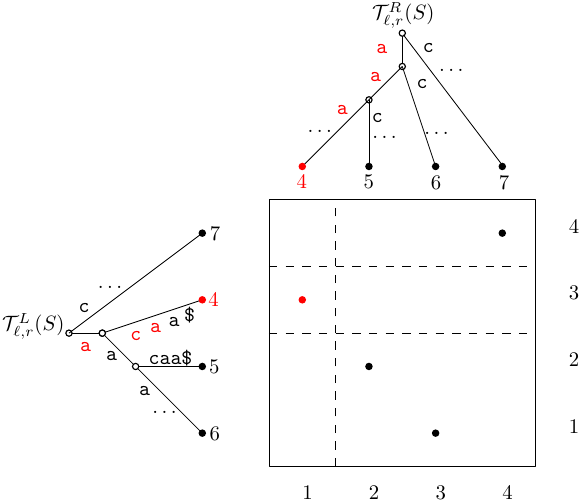}
         \caption{Querying $P=\texttt{acaaa}$.}
         \label{fig:q}
     \end{subfigure}
         \caption{\label{fig:index}Let $S=\texttt{aacaaacgcta}$, $\ell=5$, and $r=1$.  We have that $\mathcal{A}_{5,1}(S)=\{4,5,6,7\}$, and thus the indexed suffixes and reversed suffixes of $S$ are as shown in (a). Assume that we have a query pattern $P=\texttt{acaaa}$. We first find the reduced bd-anchor of $P[1\dd \ell]$ for $\ell=5$ and $r=1$, which is $j=3$: the minimal lexicographic rotation is \texttt{aaaac}. We then split $P$ in $\protect\overleftarrow{P[1\dd 3]}=\texttt{aca}$ and $P[3\dd |P|]=\texttt{aaa}$, and
         search $\protect\overleftarrow{P[1\dd 3]}=\texttt{aca}$ in $\mathcal{T}^L_{\ell,r}$ and
         $P[3\dd |P|]=\texttt{aaa}$ in $\mathcal{T}^R_{\ell,r}$. The search induces a rectangle, which encloses all the occurrences of $P$ in $S$, as shown in (b). Indeed, $P=S[2\dd 6]=\texttt{acaaa}$: the fragment $S[2 \dd 6]$ is anchored at position $4$.}
\end{figure*}

Loukides and Pissis~\cite{DBLP:conf/esa/LoukidesP21} proposed the following text index, which is based on (reduced~\cite{TKDE2023}) bd-anchors. Fix string $S$ of length $n$ over an integer alphabet of size $n^{\cO(1)}$ as well as $\ell$ (and $r$). Given $\mathcal{A}_{\ell,r}(S)$ we construct two compacted tries: one for strings $S[i\dd n]$, for all $i\in \mathcal{A}_{\ell,r}(S)$, which we denote by $\mathcal{T}^R_{\ell,r}(S)$; and one for strings $\overleftarrow{S[1\dd i]}$, for all $i\in \mathcal{A}_{\ell,r}(S)$, which we denote by $\mathcal{T}^L_{\ell,r}(S)$. We also construct a 2D range reporting data structure~\cite{DBLP:conf/compgeom/ChanLP11} over $|\mathcal{A}_{\ell,r}(S)|$ points $(x,y)$, where $x$ is the lexicographic rank of $S[i\dd n]$ and $y$ is the lexicographic rank of $\overleftarrow{S[1\dd i]}$, for all $i\in \mathcal{A}_{\ell,r}(S)$. Upon a query $P$ of length $|P|\geq \ell$, we find the (reduced) bd-anchor of $P[1\dd \ell]$, say $j$, which gives $P[j\dd |P|]$ and $\overleftarrow{P[1\dd j]}$. We search the former in $\mathcal{T}^R_{\ell,r}$ and the latter in $\mathcal{T}^L_{\ell,r}$. This search induces a rectangle (inspect Figure~\ref{fig:index}, for an example), which we use to query the 2D range reporting data structure. The reported points are all occurrences of $P$ in $S$. We denote the resulting index by $\mathcal{I}_{\ell,r}(S)$. Loukides and Pissis showed the following result~\cite{DBLP:conf/esa/LoukidesP21,TKDE2023}.

\begin{theorem}[\cite{DBLP:conf/esa/LoukidesP21,TKDE2023}]\label{the:index}
Given any string $S$ of length $n$ over an integer alphabet of size $n^{\cO(1)}$ and integers $\ell,r>0$, the index $\mathcal{I}_{\ell,r}(S)$ occupies $\cO(|\mathcal{A}_{\ell,r}(S)|)$ extra space and reports all $\occ$ occurrences of any pattern $P$ of length $|P|\geq \ell$ in $S$ in $\ctO(|P|+\occ)$ time. Moreover, the index $\mathcal{I}_{\ell,r}(S)$ can be constructed in $\ctO(n)$ time.
\end{theorem}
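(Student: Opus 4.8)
The plan is to verify, in turn, the four claims: index space, correctness of the reported set, query time, and construction time; only the correctness step carries any real content. \textbf{Space.} Each of $\mathcal{T}^R_{\ell,r}(S)$ and $\mathcal{T}^L_{\ell,r}(S)$ is a compacted trie over $|\mathcal{A}_{\ell,r}(S)|$ strings, so it has $\cO(|\mathcal{A}_{\ell,r}(S)|)$ nodes; storing each edge label as a pointer to a fragment of $S$ (resp.\ of $\overleftarrow{S}$) and, at each node, the minimum and maximum rank of a leaf in its subtree, the two tries take $\cO(|\mathcal{A}_{\ell,r}(S)|)$ words on top of $S$ (appending a sentinel to each indexed string, to make the leaves well-defined, does not affect this). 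The 2D range reporting structure of~\cite{DBLP:conf/compgeom/ChanLP11} over $|\mathcal{A}_{\ell,r}(S)|$ points likewise occupies $\cO(|\mathcal{A}_{\ell,r}(S)|)$ words, giving the stated $\cO(|\mathcal{A}_{\ell,r}(S)|)$ extra space.

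\textbf{Correctness.} I would show that the query returns exactly the occurrences of $P$ via a bijection induced by the reduced bd-anchor $j$ of $P[1\dd\ell]$. Suppose $P=S[p\dd p+|P|-1]$ with $|P|\ge\ell$. Then $S[p\dd p+\ell-1]=P[1\dd\ell]$, and since the reduced bd-anchor of a length-$\ell$ string depends only on its content, the reduced bd-anchor of $S[p\dd p+\ell-1]$ is again $j$, so $i:=p+j-1\in\mathcal{A}_{\ell,r}(S)$ by definition of $\mathcal{A}_{\ell,r}(S)$. A direct check gives that $P[j\dd|P|]$ is a prefix of the indexed suffix $S[i\dd n]$ and $\overleftarrow{P[1\dd j]}$ is a prefix of the indexed reversed prefix $\overleftarrow{S[1\dd i]}$; hence the point associated with $i$ lies in the rectangle $[x_1,x_2]\times[y_1,y_2]$, where $[x_1,x_2]$ (resp.\ $[y_1,y_2]$) is the contiguous range of leaf ranks under the locus of $P[j\dd|P|]$ in $\mathcal{T}^R_{\ell,r}(S)$ (resp.\ of $\overleftarrow{P[1\dd j]}$ in $\mathcal{T}^L_{\ell,r}(S)$). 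Conversely, if the point of some $i\in\mathcal{A}_{\ell,r}(S)$ lies in this rectangle, then $P[j\dd|P|]$ is a prefix of $S[i\dd n]$ and $\overleftarrow{P[1\dd j]}$ is a prefix of $\overleftarrow{S[1\dd i]}$; concatenating the two matched fragments, which are consistent at position $i$, yields $S[i-j+1\dd i+|P|-j]=P$, so $P$ occurs at $i-j+1$, the side conditions following since the two prefix relations force $i\ge j$ and $i+|P|-j\le n$. The maps $p\mapsto p+j-1$ and $i\mapsto i-j+1$ are mutually inverse, so the rectangle contains exactly $\occ$ points; the range query reports these, and subtracting $j-1$ from each gives the $\occ$ occurrences.

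\textbf{Query time.} Computing $j$ costs $\cO(|P|)$ time (the reduced bd-anchor of one length-$\ell$ string is obtained by the same means as in Lemma~\ref{lem:compute-bda}). Descending $\mathcal{T}^R_{\ell,r}(S)$ with $P[j\dd|P|]$ and $\mathcal{T}^L_{\ell,r}(S)$ with $\overleftarrow{P[1\dd j]}$ — matching query characters against edge labels in $\cO(1)$ each and branching at nodes by binary search, or in $\cO(1)$ with a perfect hash table on the children — costs $\ctO(|P|)$ and produces the rectangle; the reporting query over $\cO(|\mathcal{A}_{\ell,r}(S)|)=n^{\cO(1)}$ points then costs $\ctO(1+\occ)$~\cite{DBLP:conf/compgeom/ChanLP11}, for a total of $\ctO(|P|+\occ)$. \textbf{Construction time.} Compute $\mathcal{A}_{\ell,r}(S)$ in $\cO(n)$ time (Lemma~\ref{lem:compute-bda}); build the suffix arrays of $S$ and of $\overleftarrow{S}$ in $\cO(n)$ time~\cite{DBLP:conf/focs/Farach97} and their LCP arrays in $\cO(n)$ time~\cite{DBLP:conf/cpm/KasaiLAAP01}. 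Restricting the suffix array of $S$ to the positions in $\mathcal{A}_{\ell,r}(S)$ gives the sorted order of the indexed suffixes, and the LCP of consecutive ones is a range minimum over the LCP array (using that $\overleftarrow{S[1\dd i]}$ is the suffix of $\overleftarrow{S}$ at position $n-i+1$); from the sorted strings together with consecutive LCP values, $\mathcal{T}^R_{\ell,r}(S)$ and $\mathcal{T}^L_{\ell,r}(S)$ — and the point coordinates — are built by the standard incremental construction in $\cO(|\mathcal{A}_{\ell,r}(S)|)$ time, and the 2D structure in $\ctO(|\mathcal{A}_{\ell,r}(S)|)$ time~\cite{DBLP:conf/compgeom/ChanLP11}, for a total of $\ctO(n)$.

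\textbf{Main obstacle.} The only non-mechanical part is the correctness bijection, and within it the local-consistency fact that the reduced bd-anchor of $P[1\dd\ell]$ selects the same relative position in every length-$\ell$ fragment of $S$ equal to $P[1\dd\ell]$ — this is exactly what makes the search complete, i.e.\ guarantees that no occurrence is missed. Everything else is a routine assembly of compacted tries, suffix and LCP arrays augmented with range-minimum queries, and an orthogonal range reporting structure, with the polylogarithmic slack in $\ctO(\cdot)$ absorbing the branching cost in the tries and the query/construction overheads of~\cite{DBLP:conf/compgeom/ChanLP11}.
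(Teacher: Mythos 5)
Your proof is correct and follows exactly the architecture the paper sketches in Section~\ref{sec:index}: two compacted tries over the anchored suffixes and reversed prefixes plus a 2D range-reporting structure over their rank pairs, with local consistency of the reduced bd-anchor giving the bijection between occurrences and points in the query rectangle, and the sparse SA/LCP plus RMQ route giving the $\ctO(n)$-time construction. The paper itself imports this theorem by citation without reproving it, so there is no in-paper proof to contrast with, but your reconstruction is a faithful and complete fleshing-out of that sketch.
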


\emph{Extra} refers to the $n\lceil \log \sigma \rceil$ bits required to store $S$. Alternatively one could use $nH_0(S) + o(nH_0(S)) + o(n)$ bits without any penalty~\cite{DBLP:journals/algorithmica/BarbayCGNN14,DBLP:journals/talg/BelazzouguiN15}, where $H_0(S)$ is the \emph{zeroth-order entropy} of $S$. Note that the same index can be constructed for any lc-anchors; e.g., for minimizers.

\section{Randomized reduced bidirectional string anchors}\label{sec:rr-bd-anchors}

We introduce the notion of randomized reduced bd-anchors
by injecting Karp-Rabin (KR) fingerprints~\cite{DBLP:journals/ibmrd/KarpR87} to the notion of reduced bd-anchors~\cite{TKDE2023}. This is analogous to the idea of using randomization to select minimizers~\cite{DBLP:conf/sigmod/SchleimerWA03}
instead of using the standard lexicographic total order~\cite{DBLP:journals/bioinformatics/RobertsHHMY04}. In practice, implementing a randomized minimizers mechanism is straightforward via using a hash function (e.g., KR fingerprints). The other benefit of randomized minimizers is that they usually have lower density than their lexicographic counterparts~\cite{DBLP:journals/bioinformatics/ZhengKM20}. 

Unfortunately, it is not as straightforward to inject randomization to (reduced) bd-anchors. Intuitively, this is because bd-anchors wrap around the window. We next explain how this can be achieved by using a mix of lexicographic total order and KR fingerprints. Importantly, our experiments show a clear superiority (lower density) of this new randomized notion over the deterministic (lexicographic) counterpart.

Let us start by introducing KR fingerprints.
Let $S$ be a string of length $n$ over an integer alphabet.
Let $p$ be a prime and choose $r \in[0,p-1]$ uniformly at random. 
The KR fingerprint of $S[i\dd j]$ is defined as: 
$$\phi_S(i,j)=\sum^{j}_{k=i}S[k]r^{j-k}\mod p.$$

Clearly, if $S[i\dd i + \ell] = S[j\dd j + \ell]$ then $\phi_S(i,i+\ell)=\phi_S(j,j+\ell)$. 
On the other hand, if
$S[i\dd i + \ell] \neq S[j\dd j + \ell]$ then $\phi_S(i,i+\ell)\neq \phi_S(j,j+\ell)$ with probability at least $1-\ell/p$~\cite{DBLP:conf/icalp/DietzfelbingerGMP92}.
Since we are comparing only substrings of equal length, the number of different possible substring comparisons is less than $n^3$. Thus, for any constant $c\geq 1$, we can set $p$ to be a
prime larger than $\max(|\Sigma|,n^{c+3})$ to make the KR fingerprint
function perfect (i.e., no collisions) with probability at least $1 - n^{-c}$ (with high probability).
Any KR fingerprint of $S$ or $p$ fits in one machine word of $\Theta(\log n)$ bits. In our methods, the length $j-i+1$ of the substrings of $S$ considered will be fixed, hence the quantity $r^{j-i+1}\mod p$ needs to be computed only once.

The definition of randomized reduced bd-anchors (Definition~\ref{def:rr}) is a bit involved. We will thus start with some intuition to facilitate the reader's comprehension. Given $\ell$ and $r$ (exactly as in the reduced bd-anchors case) and a fragment $F$ of $S$, we select as \emph{candidates}, the $(\ell-r,r+1)$-minimizers of $F$ based on the total order induced by the KR fingerprints. If there exists only one candidate selected, then this is the randomized reduced bidirectional anchor of $F$.
If there are many candidates, then we select the leftmost lexicographically smallest rotation out of all the rotations starting right after the last position of the candidates.
The candidate whose rotation is selected is then selected as the randomized reduced bidirectional anchor of $F$.
We next formalize this intuition.

\begin{definition}[Randomized reduced bidirectional anchor]\label{def:rr}
Given a string $F$ of length $\ell>0$ and an integer $0\leq r\leq \ell-1$, we define the \emph{randomized reduced bidirectional anchor} of $F$ as the starting position of the length-$(r+1)$ fragment of $F$ with the minimal KR fingerprint. 
If there are $h>1$ such fragments $F[i_1\dd j_1],\ldots,F[i_h\dd j_h]$ in $F$ (i.e., $h$ fragments that share the same minimal KR fingerprint), then we resolve the ties 
by choosing $j-(r+1)$ as the randomized reduced bidirectional anchor of $F$, where $j$ is the lexicographically minimal rotation from $\{j_1+1,\ldots,j_h+1\}$ with minimal $j$. The set of order-$\ell$ randomized reduced bidirectional anchors of a string $S$ of length $n>\ell$ is defined as the set $\mathcal{A}^{\text{ran}}_{\ell,r}(S)$ of randomized reduced bd-anchors of $S[i\dd i+\ell-1]$, for all $i\in [1,n-\ell+1]$.   
\end{definition}

\begin{example}[Randomized reduced bd-anchor]
Let $F=\texttt{aacaaacgcta}$ of length $\ell=11$ and $r=2$.
We need to consider the following set of length-$3$ substrings of $F$: $$\{\texttt{aac},\texttt{aca}, \texttt{caa}, \texttt{aaa}, \texttt{acg}, \texttt{cgc}, \texttt{gct}, \texttt{cta}\}.$$ 
Assume that $\phi_F(1,3)$, that is, the KR fingerprint of $F[1\dd 3]=F[5\dd 7]=\texttt{aac}$, is the smallest one among the KR fingerprints of all length-$3$ substrings of $F$. We have that $h=2$ because \texttt{aac} occurs at two positions: $1$ and $5$. To resolve the tie, we need to consider the rotations starting at positions in $\{4,8\}$. These are:
$\texttt{aaacgctaaac}$
and
$\texttt{gctaaacaaac}$.
The former $(j=4)$ is the lexicographically smallest, and so we select $j-(r+1)=4-(2+1)=1$ as the randomized reduced bd-anchor of $F$.
\end{example}

Randomized reduced bd-anchors are a new type of lc-anchors (see Section~\ref{sec:prel}). Indeed, like minimizers~\cite{DBLP:journals/jcb/ZhengMK23} and (reduced) bd-anchors~\cite{TKDE2023}, it is straightforward to prove that randomized reduced bd-anchor samples enjoy the following two useful properties (see Section~\ref{sec:intro}):
\begin{itemize}
    \item \emph{Property 1 (approximately uniform sampling)}
    \item \emph{Property 2 (local consistency)}
\end{itemize}

We denote by $\mathcal{M}^{\text{ran}}_{w,k}(S)$ the set of $(w,k)$-minimizers of $S$ when the underlying total order on length-$k$ substrings is determined randomly (e.g., by a random hash function). The following lemma is known.

\begin{lemma}[\cite{DBLP:journals/bioinformatics/ZhengKM20}]\label{lem:random_minimizers}
If $S$ is a string of length $n$, randomly generated by a memoryless source over an alphabet of size $\sigma \geq 2$ with identical letter probabilities, then the expected size of $\mathcal{M}^{\text{ran}}_{w,k}(S)$ with $k>(3+\epsilon)\log_\sigma(w+1)$ is in $\cO(n/w)$, for any $\epsilon>0$.
\end{lemma}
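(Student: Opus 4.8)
The plan is to bound $\mathbb{E}\bigl[|\mathcal{M}^{\text{ran}}_{w,k}(S)|\bigr]$ by a sliding-window charging argument, reducing everything to two facts about a uniform memoryless source: \textbf{(a)} for any two distinct in-range positions $i\neq j$, the length-$k$ substrings $S[i\dd i+k-1]$ and $S[j\dd j+k-1]$ coincide with probability at most $\sigma^{-k}$, which under the hypothesis $k>(3+\epsilon)\log_\sigma(w+1)$, equivalently $\sigma^{k}>(w+1)^{3+\epsilon}$, is $o(w^{-2})$; and \textbf{(b)} conditioned on the $k$-mers occurring in a fixed window being pairwise distinct, the random total order ranks them as a uniform random permutation, so the minimum is equally likely to sit at each of the $w$ positions.

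Write $m=n-k+1$, put $v_i=S[i\dd i+k-1]$, and for $1\le i\le m-w+1$ let $W_i=\{i,i+1,\dots,i+w-1\}$ be the $i$-th window in $k$-mer index space and $\mu(W_i)\subseteq W_i$ the set of positions attaining the minimum of $\{v_j:j\in W_i\}$ under the random order. Then $\mathcal{M}^{\text{ran}}_{w,k}(S)=\bigcup_{i}\mu(W_i)$, and charging each selected position to the first window that selects it gives the deterministic inequality $|\mathcal{M}^{\text{ran}}_{w,k}(S)|\le|\mu(W_1)|+\sum_{i=2}^{m-w+1}|\mu(W_i)\setminus\mu(W_{i-1})|$. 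Hence it suffices to prove $\mathbb{E}[|\mu(W_1)|]=\cO(1)$ and $\mathbb{E}[|\mu(W_i)\setminus\mu(W_{i-1})|]=\cO(1/w)$ uniformly in $i$; summing over the at most $n$ windows then yields $\mathbb{E}\bigl[|\mathcal{M}^{\text{ran}}_{w,k}(S)|\bigr]=\cO(n/w)$. (The degenerate case $m<w$, where there is no window and $\mathcal{M}^{\text{ran}}_{w,k}(S)=\emptyset$, is immediate.)

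The core is the transition bound. Passing from $W_{i-1}$ to $W_i$ deletes position $i-1$ and inserts position $i+w-1$. A short case analysis --- on whether $v_{i-1}$ is the unique strict minimum of $W_{i-1}$ --- shows that $\mu(W_i)\setminus\mu(W_{i-1})$ is contained in $\{i+w-1\}$ together with, \emph{only in the case} $v_{i-1}<v_j$ for all $i\le j\le i+w-2$, the set of positions of $[i,i+w-2]$ attaining $\min_{i\le j\le i+w-2} v_j$; consequently
\[
|\mu(W_i)\setminus\mu(W_{i-1})|\ \le\ \mathbf{1}\!\left[v_{i+w-1}\le\min_{i\le j\le i+w-2} v_j\right]+\mathbf{1}\!\left[v_{i-1}<\min_{i\le j\le i+w-2} v_j\right]\!\cdot X_i ,
\]
where $X_i$ counts the positions of $[i,i+w-2]$ attaining the minimum there. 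By conditioning on pairwise distinctness of the relevant $k$-mers and invoking \textbf{(a)}--\textbf{(b)}, each indicator event has probability at most $1/w+\binom{w}{2}\sigma^{-k}=\cO(1/w)$. For the product term I would use the elementary bound $\mathbf{1}[A]\cdot X_i\le\mathbf{1}[A]+\binom{X_i}{2}$, valid for every integer $X_i\ge0$, note that $\binom{X_i}{2}$ counts pairs of positions in $[i,i+w-2]$ carrying equal $k$-mers and hence $\mathbb{E}\bigl[\binom{X_i}{2}\bigr]\le\binom{w-1}{2}\sigma^{-k}=\cO(w^{-1-\epsilon})$, and conclude that the product term too has expectation $\cO(1/w)$. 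Adding the contributions gives $\mathbb{E}[|\mu(W_i)\setminus\mu(W_{i-1})|]=\cO(1/w)$, and the same tie estimate applied to $W_1$ gives $\mathbb{E}[|\mu(W_1)|]\le1+\binom{w}{2}\sigma^{-k}=\cO(1)$.

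I expect the delicate point to be exactly the calibration of the collision probability $\sigma^{-k}$ against the window size. Replacing $\binom{X_i}{2}$ by the crude worst case $X_i\le w$ times $\sigma^{-k}$ would leave an error term of order $w^{3}\sigma^{-k}=\Theta(w^{-\epsilon})$ and only prove $\cO(n\,w^{-\epsilon})$; the hypothesis $\sigma^{k}>(w+1)^{3+\epsilon}$ is tuned precisely so that, with the refined bookkeeping (namely that $t$ tied minima cost $\binom{t}{2}$ collisions), every error term is $\cO(w^{-1-\epsilon})=o(1/w)$ and is dominated by the two $\Theta(1/w)$ contributions coming from the entering and the leaving $k$-mer --- which, incidentally, is the combinatorial reason random minimizers have density $\approx 2/w$. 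A secondary point worth spelling out is the precise probabilistic model: the argument uses only that, conditioned on the ($S$-determined) multiset of $k$-mer values in a window, the induced order on the distinct values is uniform, which holds whether the random order comes from a uniform random permutation of $\Sigma^k$ or from independent uniform hash values; verifying \textbf{(a)} is a one-line periodicity computation.
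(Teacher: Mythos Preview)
The paper does not supply its own proof of this lemma: it is quoted from Zheng, Kingsford, and Mar\c{c}ais~\cite{DBLP:journals/bioinformatics/ZhengKM20} and used as a black box (the very next sentence simply invokes it to derive Lemma~\ref{lem:rrbd-anchors}), so there is no in-paper argument to compare against.

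Your proposal is a correct self-contained proof, and it follows essentially the same strategy as the cited source: charge each selected position to the first window that selects it, bound the probability that a window is ``charged'' by analysing the entering and the leaving $k$-mer, and control ties via the collision bound $\Pr[v_i=v_j]=\sigma^{-k}$ (which, as you note, is a one-line periodicity computation valid even for overlapping $k$-mers) together with the hypothesis $\sigma^{k}>(w+1)^{3+\epsilon}$. Your bookkeeping is sound; in particular the inequality $\mathbf{1}[A]\cdot X_i\le \mathbf{1}[A]+\binom{X_i}{2}$ is a clean way to handle the Case~B product without decoupling $A$ from $X_i$, and your observation that the crude bound $X_i\le w$ would only yield $\cO(n\,w^{-\epsilon})$ correctly identifies why the exponent $3+\epsilon$ is the right threshold.
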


The following lemma follows directly from Lemma~\ref{lem:random_minimizers} and the fact that by construction $|\mathcal{A}^{\text{ran}}_{\ell,r}(S)|\leq |\mathcal{M}^{\text{ran}}_{w,k}(S)|$, where $k=r+1$ and $w=\ell-r$. 

\begin{lemma}\label{lem:rrbd-anchors}
If $S$ is a string of length $n$, randomly generated by a memoryless source over an alphabet of size $\sigma \geq 2$ with identical letter probabilities, then, for any integer $\ell>0$, the expected size of $\mathcal{A}^{\text{ran}}_{\ell,r}(S)$ with $r=\lceil4 \log\ell /\log \sigma\rceil$ is in $\cO(n/\ell)$.
\end{lemma}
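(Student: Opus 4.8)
The plan is to establish, for the choice $w=\ell-r$ and $k=r+1$, the set containment $\mathcal{A}^{\text{ran}}_{\ell,r}(S)\subseteq\mathcal{M}^{\text{ran}}_{w,k}(S)$ that is asserted (without proof) in the sentence preceding the statement, and then to feed this into Lemma~\ref{lem:random_minimizers}. First I would fix an arbitrary length-$\ell$ window $F=S[i\dd i+\ell-1]$ and line up the two definitions. Since $w+k-1=(\ell-r)+(r+1)-1=\ell$, the window $F$ is precisely one window in the sense of $(w,k)$-minimizers, and the ranges of window-starting positions, $[1,n-\ell+1]$ for bd-anchors and $[1,n-w-k+2]$ for minimizers, coincide. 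A length-$(r+1)$ fragment of $F$ starts at a position lying in the admissible minimizer range $[\,i,i+w\,)$, so, when the total order on length-$k$ substrings is the one induced by the KR fingerprint, the set of $(w,k)$-minimizers of $F$ is exactly the set $\{i_1,\dots,i_h\}$ of starting positions of the length-$(r+1)$ fragments of $F$ realizing the minimal fingerprint. It then remains to observe that Definition~\ref{def:rr} always returns one of these positions: for $h=1$ the anchor is $i_1$, and for $h>1$, writing $j_t$ for the ending position of the $t$-th minimal fragment so that $j_t=i_t+r$, the tie-break selects some rotation starting at $j_t+1$ and outputs $(j_t+1)-(r+1)=i_t$. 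Taking the union over all windows $i$ gives $\mathcal{A}^{\text{ran}}_{\ell,r}(S)\subseteq\mathcal{M}^{\text{ran}}_{w,k}(S)$, hence $|\mathcal{A}^{\text{ran}}_{\ell,r}(S)|\le|\mathcal{M}^{\text{ran}}_{w,k}(S)|$.

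Next I would check the hypothesis of Lemma~\ref{lem:random_minimizers}. With $r=\lceil4\log\ell/\log\sigma\rceil$ we have $k=r+1\ge4\log_\sigma\ell+1$ and $w+1=\ell-r+1\le\ell$, so taking $\epsilon=\tfrac12$ we get $k\ge4\log_\sigma\ell+1>\tfrac72\log_\sigma\ell\ge(3+\epsilon)\log_\sigma(w+1)$. Lemma~\ref{lem:random_minimizers} then yields $\mathbb{E}\!\left[|\mathcal{M}^{\text{ran}}_{w,k}(S)|\right]=\cO(n/w)=\cO(n/(\ell-r))$. Finally I would absorb the $-r$ in the denominator: since $r=\cO(\log\ell)$, there is an absolute constant $\ell_0$ with $r\le\ell/2$ for all $\ell\ge\ell_0$, so $\cO(n/(\ell-r))=\cO(n/\ell)$; and for $\ell<\ell_0$ the trivial bound $|\mathcal{A}^{\text{ran}}_{\ell,r}(S)|\le n<\ell_0\,(n/\ell)$ is already $\cO(n/\ell)$. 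Chaining the three estimates gives $\mathbb{E}\!\left[|\mathcal{A}^{\text{ran}}_{\ell,r}(S)|\right]=\cO(n/\ell)$.

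I do not anticipate a real obstacle: the heart of the argument is the routine identification of the candidates of the randomized reduced bd-anchor with the minimizers under the fingerprint order, together with the bijective reindexing $j\mapsto j-(r+1)$ in the tie-breaking rule. The one point deserving a line of care is that the KR fingerprint induces a total order on the relevant length-$k$ substrings only with high probability; in the (low-probability) collision event one still has $|\mathcal{M}^{\text{ran}}_{w,k}(S)|\le n$, which contributes only $\cO(n^{1-c})$ to the expectation once the prime $p$ is chosen as in Section~\ref{sec:rr-bd-anchors}, so Lemma~\ref{lem:random_minimizers} may be applied on the high-probability event without affecting the $\cO(n/\ell)$ bound.
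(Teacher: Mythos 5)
Your proposal follows exactly the paper's route: observe that by construction $\mathcal{A}^{\text{ran}}_{\ell,r}(S)\subseteq\mathcal{M}^{\text{ran}}_{w,k}(S)$ with $w=\ell-r$ and $k=r+1$, and then invoke Lemma~\ref{lem:random_minimizers}. The paper dispatches this as ``follows directly,'' whereas you supply the details it leaves implicit --- verifying that the tie-break in Definition~\ref{def:rr} always returns one of the fingerprint-minimal positions, checking the hypothesis $k>(3+\epsilon)\log_\sigma(w+1)$, absorbing the $-r$ in the denominator and the small-$\ell$ regime, and flagging the collision event for KR fingerprints --- all of which are correct.
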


Comparing Lemma~\ref{lem:rrbd-anchors} with Lemma~\ref{lem:bd-anchors}, we see that this new notion of bd-anchors has the same expected size in the worst case as reduced bd-anchors. However, as mentioned before, our experiments show a clear superiority (lower density) of this new randomized notion over the deterministic (lexicographic) counterpart.
In the next section, we show that both notions can be computed in linear time
with the same technique.

\section{Improving the index}\label{sec:eng}

In this section, we improve the construction of the index $\mathcal{I}_{\ell,r}(S)$ (Theorem~\ref{the:index}) focusing on two aspects: \textbf{(i)} the fast computation of (randomized) reduced bd-anchors (see Section~\ref{sec:bd-anchors}); and \textbf{(ii)} the index construction in small space using near-optimal work (see Section~\ref{sec:small}). 
With \emph{small}, we mean space that is close to the index size.

\subsection{Computing bd-anchors in linear time}\label{sec:bd-anchors}

Recall that the first step to construct the $\mathcal{I}_{\ell,r}(S)$ index is to compute the set $\mathcal{A}_{\ell,r}(S)$ of reduced bd-anchors of $S$.
Loukides and Pissis~\cite{DBLP:conf/esa/LoukidesP21,TKDE2023} showed a worst-case linear-time algorithm to compute $\mathcal{A}_{\ell,r}(S)$ (Lemma~\ref{lem:compute-bda}). 
Although this algorithm is optimal in the worst case, it seems quite complex to implement and it is unlikely to be efficient in practice.
The worst-case linear-time algorithm for computing $\mathcal{A}_{\ell,r}(S)$ relies on a data structure introduced by Kociumaka in~\cite{DBLP:conf/cpm/Kociumaka16}. The latter data structure relies heavily on the construction of \emph{fusion trees}~\cite{DBLP:conf/stoc/FredmanW90}, and hence it is mostly of theoretical interest: it is widely accepted that naive solutions can be more practical than such sophisticated data structures -- see~\cite{DBLP:conf/wea/Dinklage0H21} and references therein.
That is why Loukides and Pissis have instead proposed and implemented a simple $\Theta(n\ell)$-time algorithm to compute $\mathcal{A}_{\ell,r}(S)$ in their experiments~\cite{DBLP:conf/esa/LoukidesP21}.
Here we give a novel average-case linear-time algorithm to compute $\mathcal{A}_{\ell,r}(S)$. In the worst case, the algorithm runs in $\cO(n\ell)$ time as the simple algorithm from~\cite{DBLP:conf/esa/LoukidesP21} does. 

Let us remark that the same technique (with very small differences) can be used to compute $\mathcal{A}^{\text{ran}}_{\ell,r}(S)$, the set of randomized reduced bd-anchors of $S$. We will highlight these differences at the end of this section.

\paragraph{Main idea} We use $(w,k)$-minimizers as anchors to compute reduced bd-anchors of order $\ell$ after carefully setting parameters $w$ and $k$. We employ $\LCP_S(i,j)$ queries to compare anchored rotations (i.e., rotations of $S$ starting at $(w,k)$-minimizers) of fragments of $S$ efficiently. The fact that $(w,k)$-minimizers are $\cO(n/\ell)$ in expectation lets us realize $\cO(n)$ time on average. 

Let us start with the following simple fact.

\begin{fact}\label{fct:simple}
For any string $F$ of length $\ell>0$, the reduced bd-anchor of $F$, for any $1 \leq r\leq \ell-1$, is an  $(\ell-r,r+1)$-minimizer of $F$.
\end{fact}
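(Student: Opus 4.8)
The plan is to unpack the two definitions and show that the reduced bd-anchor position, viewed as a starting position of a length-$(r+1)$ substring inside a sliding window of width $w=\ell-r$, is exactly a position selected by the $(\ell-r,r+1)$-minimizer rule. First I would recall that a reduced bd-anchor of $F$ (of order $\ell$ with reduction $r$) is a position $j\in[1,\ell-r]$ that starts the leftmost lexicographically minimal rotation of $F$ among the first $\ell-r$ rotations, i.e. among $F[j\dd\ell]F[1\dd j-1]$ for $j\in[1,\ell-r]$. On the other side, since $F$ has length $\ell=(\ell-r)+(r+1)-1$, the set of $(\ell-r,r+1)$-minimizers of $F$ is well-defined: it consists of those $j\in[1,\ell-r]$ such that $F[j\dd j+r]$ is a lexicographically minimal length-$(r+1)$ substring of $F$ over the window $F[1\dd\ell]$. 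So the goal is: if $j^\star$ is the reduced bd-anchor, then $F[j^\star\dd j^\star+r]$ is lexicographically $\le F[i\dd i+r]$ for every $i\in[1,\ell-r]$.

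The key step is the observation that for $j\in[1,\ell-r]$ the rotation $F[j\dd\ell]F[1\dd j-1]$ has $F[j\dd j+r]$ as its length-$(r+1)$ prefix (this uses $j+r\le \ell$, which holds precisely because $j\le \ell-r$; this is exactly where the ``reduced'' restriction is doing its work — for the non-reduced rotations with $j>\ell-r$ the prefix would wrap around and the argument would break). Consequently, comparing two rotations starting at $j_1,j_2\in[1,\ell-r]$ lexicographically, if their length-$(r+1)$ prefixes already differ then the rotation order agrees with the order of those prefixes. Now suppose for contradiction that the reduced bd-anchor $j^\star$ is not an $(\ell-r,r+1)$-minimizer: then there is some $i\in[1,\ell-r]$ with $F[i\dd i+r]<F[j^\star\dd j^\star+r]$ lexicographically. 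Since these are length-$(r+1)$ prefixes of the respective rotations and they differ, the rotation starting at $i$ is lexicographically strictly smaller than the one starting at $j^\star$, contradicting the minimality of the rotation chosen by the reduced bd-anchor. Hence $j^\star$ is an $(\ell-r,r+1)$-minimizer of $F$, which is the claim.

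I expect the only real subtlety — and the ``main obstacle'' such as it is — to be bookkeeping the index ranges so that the length-$(r+1)$ prefix of each considered rotation genuinely lies inside $F$ without wrapping, i.e. verifying $j+r\le\ell$ for all $j\in[1,\ell-r]$ and checking the boundary cases $r=\ell-1$ (window width $1$, a single rotation) and the interplay with the tie-breaking ``leftmost, minimal $j$'' clause (the bd-anchor picks the smallest such $j$, whereas the minimizer set may contain several positions; Fact~\ref{fct:simple} only asserts membership, so the tie-breaking does not need to match). Everything else is a direct comparison of definitions.
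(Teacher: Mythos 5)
Your proposal is correct and follows essentially the same argument as the paper: both observe that for $j\in[1,\ell-r]$ the length-$(r+1)$ prefix of the $j$th rotation is the genuine substring $F[j\dd j+r]$ of $F$ (no wrap-around), so lexicographic minimality of the rotation forces lexicographic minimality of that prefix among all candidate windows. The paper states this more tersely (without the explicit contradiction framing), but the content and key step are the same.
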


\begin{proof}
Let $j$ be the reduced bd-anchor of $F$: the lexicographically smallest rotation of $F$ with minimal $j\in[1,\ell-r]$. By definition, $F[j\dd |F|]$ is of length at least $r+1$ and there cannot be another $j'$ such that $F[j'\dd j'+r]$ is lexicographically smaller than $F[j\dd j+r]$. Since $\ell=w+k-1 \implies w=\ell-r$, $j$ is an $(\ell -r,r+1)$-minimizer of $F$.
\end{proof}

\begin{example}
[Cont'd from Example~\ref{example:3}] 
The set $\mathcal{M}_{w,k}(S)$ of minimizers for $w=\ell-r=4$ and $k=r+1=2$ is  $\mathcal{M}_{4,2}(S)=\{1,4,5,6,7\}$. Since any reduced bd-anchor for $\ell=5$ and $r=1$ is a $(4,2)$-minimizer of $S$,  $\mathcal{A}_{5,1}(S)=\{4,5,6,7\}\subseteq \mathcal{M}_{4,2}(S)$.   
\end{example}

In particular, Fact~\ref{fct:simple} implies that, for any string $S$ of length $n$ and integers $w,k>0$, the set of $(w,k)$-minimizers in $S$ is a superset of the set of reduced bd-anchors of order $\ell=w+k-1$ with $r=k-1$.
Thus, we will use $(w,k)$-minimizers (computed by means of Lemma~\ref{lem:minimizers_con}) to compute reduced bd-anchors by setting $w=\ell-r$ and $k=r+1$. The following lemma is crucial for the efficiency of our algorithm.
The main idea is to use LCP queries to quickly identify suitable substrings (of the input rotations) that are then compared, in order to determine the lexicographically smallest rotation.
\begin{lemma}\label{lem:LCP}
For any string $F$ and two positions $i$ and $j$ on $F$, we can determine which of the rotations $i$ or $j$ is the smaller lexicographically in the time to answer three $\LCP_F$ queries and three letter comparisons in $F$.
\end{lemma}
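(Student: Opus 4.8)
The plan is to reduce the comparison of the rotation starting at $i$ against the rotation starting at $j$ to a small constant number of longest-common-prefix (LCP) queries on $F$, exploiting the fact that each rotation is the concatenation of a suffix of $F$ and a prefix of $F$. Write the rotation at $i$ as $R_i = F[i\dd \ell]\,F[1\dd i-1]$ and the rotation at $j$ as $R_j = F[j\dd \ell]\,F[1\dd j-1]$, where $\ell = |F|$; without loss of generality assume $i < j$, so the "tail" piece $F[i\dd \ell]$ is at least as long as $F[j\dd \ell]$. The first step is a single query $\LCP_F(i,j)$, giving the length $p$ of the common prefix of the suffixes $F[i\dd \ell]$ and $F[j\dd \ell]$. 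If $p$ falls strictly before the end of the shorter suffix $F[j\dd \ell]$ (that is, $j + p \le \ell$), then the mismatch occurs inside the tail parts of both rotations, and one letter comparison of $F[i+p]$ against $F[j+p]$ decides the order; we are done with one LCP query and one letter comparison.

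The interesting case is when $F[j\dd \ell]$ is a prefix of $F[i\dd \ell]$, i.e.\ $p = \ell - j + 1$. Then the two rotations agree on the first $\ell-j+1$ letters, at which point $R_j$ continues with $F[1\dd j-1]$ while $R_i$ continues with $F[i + (\ell-j+1)\dd \ell]\,F[1\dd i-1]$. We must now compare $F[1\dd j-1]$ (length $j-1$) against this remaining string of $R_i$ (total length $\ell - i + 1 \ge j-1$, which is at least as long). Split this into two sub-comparisons by length. First compare $F[1\dd \min(j-1, \ell-i-\ell+j)\,]$ — more cleanly, let $q = \ell - i - (\ell - j + 1) + 1 = j - i$ be the length of the still-unconsumed part $F[i+\ell-j+1\dd \ell]$ of the tail of $R_i$; compare $F[1\dd q']$, $q' = \min(q, j-1)$, against $F[i+\ell-j+1\dd i+\ell-j+q']$ using one $\LCP_F(1,\, i+\ell-j+1)$ query followed by at most one letter comparison. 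If a mismatch is found within these $q'$ letters, the order is decided. Otherwise one string is a prefix of the other in this segment, and we recurse on the leftover: either we now compare a suffix of $F[1\dd j-1]$ against a prefix of $F[1\dd i-1]$ (i.e.\ $F[q+1\dd j-1]$ vs.\ $F[1\dd j-q-1]$), which is one more $\LCP_F$ query plus one letter comparison, or we have exhausted one rotation entirely, in which case — since both rotations have the same length $\ell$ — they are equal. That is a total of at most three $\LCP_F$ queries and three letter comparisons.

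The main obstacle, and the only real bookkeeping, is organizing the case analysis of how the "seam" between the suffix part and the prefix part of one rotation lines up against the seam of the other: there are essentially three regions where a mismatch can first occur (tail-vs-tail, tail-vs-head, head-vs-head), and each crossing of a seam consumes one LCP query. One must also handle the degenerate sub-cases where a comparison segment has length zero (e.g.\ $i = 1$, or $i = j$, or the tails coincide exactly) so that no spurious letter comparison is attempted past the end of $F$, and verify that when the recursion bottoms out with one rotation being a proper prefix of the remaining comparison string, the equal-length property of rotations forces the two rotations to be identical. Once the alignment of seams is set up carefully, each of the at most three phases contributes exactly one $\LCP_F$ query and one letter comparison at the mismatch (or termination) point, which gives the claimed bound.
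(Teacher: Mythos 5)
Your proposal is correct and follows essentially the same route as the paper's proof: assume $i<j$, ask $\LCP_F(i,j)$ and, if the match reaches the end of the shorter suffix, cross the seam with $\LCP_F(1,\,i+|F|-j+1)$, and if that too consumes the remaining $j-i$ tail characters, finish with $\LCP_F(1,\,j-i+1)$, with one letter comparison to decide at each stage (or equality if the third comparison also exhausts both remaining pieces, which have equal length $i-1$). The three queries you write down are, up to the symmetry of $\LCP_F$, identical to the paper's $\lambda_1,\lambda_2,\lambda_3$.
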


\begin{figure}[t]
     \centering
      \includegraphics[width=4.5cm]{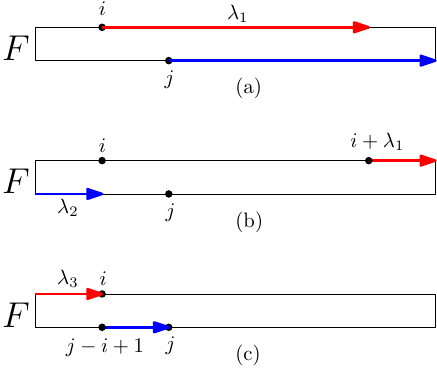}
      \caption{Illustration of Lemma~\ref{lem:LCP}.}
       \label{fig:LCP}
\end{figure}

\begin{proof}
Assume without loss of generality that $i<j$ (inspect Figure~\ref{fig:LCP}).
We first ask for the length $\lambda_1$ of the LCP of $F[i\dd |F|]$ and $F[j\dd |F|]$. If $\lambda_1<|F|-j+1$, then we simply compare $F[i\dd i+\lambda_1]$ and $F[j\dd j+\lambda_1]$ to find the answer. 
If $\lambda_1=|F|-j+1$ (see top part (a)), we ask for the length $\lambda_2$ of the LCP of $F[i+\lambda_1\dd |F|]$ and $F$.
If $\lambda_2<j-i$, then we simply compare $F[i+\lambda_1\dd i+\lambda_1+\lambda_2]$ and $F[1\dd 1+\lambda_2]$ to find the answer. 
If $\lambda_2=j-i$ (see middle part (b)), we ask for the length $\lambda_3$ of the LCP of $F$ and $F[j-i+1\dd |F|]$. If $\lambda_3<j-i$, then we simply compare $F[1\dd 1+\lambda_3]$ and $F[j-1\dd j-1+\lambda_3]$ to find the answer. Otherwise, rotation $i$ of $F$ is equal to rotation $j$ of $F$ (see bottom part (c)).
\end{proof}

We next use Lemma~\ref{lem:LCP} as a building block to obtain Lemma~\ref{lem:main}, the main lemma used by our algorithm.

\begin{lemma}\label{lem:main}
Let $D$ be a string of length $|D|\geq \ell$. Let $A$ be a set of $d$ positions on $D$ such that for every range $[i,i+\ell-1]\subseteq [1,|D|]$, $1\leq i \leq |D|-\ell+1$, there exists at least one element $j\in A: j\in [i,i+\ell-1]$. Given a data structure for answering $\LCP_D$ queries in $\cO(1)$ time, we can find the smallest lexicographic rotation $j$ in every length-$\ell$ fragment of $D$, such that $j\in A$ and $j$ is minimal in $\cO(d|D|)$ total time.
\end{lemma}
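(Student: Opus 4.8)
The plan is to handle the $|D|-\ell+1$ length-$\ell$ fragments of $D$ one by one, and for each to scan its candidate anchors while keeping a running lexicographically-minimum rotation. Write $F_i := D[i\dd i+\ell-1]$ for the $i$th fragment and $A_i := A\cap[i,i+\ell-1]$ for the anchors it contains; by the covering hypothesis $A_i\neq\emptyset$, so the quantity to report --- the leftmost lexicographically minimal rotation of $F_i$ whose starting position lies in $A_i$ --- is well defined. Since $|A_i|\le|A|=d$, once we can compare two rotations in $\cO(1)$ time, computing the minimum over $A_i$ costs $\cO(d)$, and summing over the $\cO(|D|)$ fragments yields the claimed $\cO(d|D|)$ total.

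Two ingredients make this run within the budget. First, to enumerate the set $A_i$ for each $i$ I keep $A$ sorted (one-time cost $\cO(d\log d)\subseteq\cO(d|D|)$, and in our application $A$ is already produced sorted) and locate $A_i$ as a contiguous block of the sorted array, either by binary search or, more cheaply, by maintaining a sliding window with two monotone pointers so that moving from $F_i$ to $F_{i+1}$ deletes at most one anchor on the left and inserts at most one on the right in $\cO(1)$ amortized time. Second, to compare two rotations of $F_i$ I invoke Lemma~\ref{lem:LCP}, which reduces a rotation comparison to three $\LCP_{F_i}$ queries and three letter comparisons. The only point of care is that we are given a data structure answering $\LCP_D$ queries on the whole string, not $\LCP_{F_i}$ queries on a fragment; but for positions $a,b$ inside $F_i$ we have $\LCP_{F_i}(a,b)=\min\{\LCP_D(i+a-1,\,i+b-1),\ \ell-a+1,\ \ell-b+1\}$, so each $\LCP_{F_i}$ query costs one $\LCP_D$ query plus $\cO(1)$ arithmetic, and hence every rotation comparison still costs $\cO(1)$.

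The algorithm is then: for $i=1,\dots,|D|-\ell+1$, obtain $A_i$ as above, set the current best to the smallest element of $A_i$, and scan the remaining elements of $A_i$ in increasing order of position, replacing the current best by $j$ only when Lemma~\ref{lem:LCP} reports that rotation $j$ of $F_i$ is \emph{strictly} smaller than the current best rotation; output the final best for $F_i$. Correctness is immediate: we consider exactly the rotations permitted by the statement, we keep a lexicographic minimum, and the strict-improvement rule together with the left-to-right scan ensures that among all positions attaining that minimum rotation the \emph{smallest} one is returned. The running time is $\sum_i\cO(|A_i|)=\cO(d|D|)$ for the scans, plus $\cO(|D|)$ amortized (or $\cO(|D|\log d)$ with binary search) for locating the windows, plus the one-time sorting, all within $\cO(d|D|)$.

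The only real obstacle --- more a matter of care than of difficulty --- is the bookkeeping in these reductions: translating fragment-local indices to $D$-global indices when feeding Lemma~\ref{lem:LCP} through the $\LCP_D\to\LCP_{F_i}$ identity (in particular taking the minimum with the suffix lengths $\ell-a+1,\ \ell-b+1$ so that we genuinely compare rotations of $F_i$, which wrap around inside $[i,i+\ell-1]$, rather than suffixes of $D$), and verifying that scanning positions in increasing order really implements the ``minimal $j$'' tie-break demanded by the statement. Everything else is a routine sliding-window scan.
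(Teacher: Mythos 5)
Your proposal is correct and follows essentially the same route as the paper's proof: sort $A$, slide a window of length $\ell$ over $D$, and for each window scan its anchors in increasing order of position, maintaining the leftmost lexicographically minimal rotation via Lemma~\ref{lem:LCP}, for a total of $\cO(d|D|)$ comparisons. You are in fact a bit more careful than the paper on one point it leaves implicit — the translation $\LCP_{F_i}(a,b)=\min\{\LCP_D(i+a-1,i+b-1),\ \ell-a+1,\ \ell-b+1\}$ needed to drive the $\LCP_F$ queries of Lemma~\ref{lem:LCP} from the given $\LCP_D$ oracle — but this is a refinement of the same argument, not a different one.
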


\begin{proof}
First we sort the elements of $A$ in increasing order using radix sort in $\cO(|D|)$ time. Then for every fragment $F=D[i\dd i+\ell -1]$ of length $\ell$ of $D$, $i\in[1,|D|-\ell+1]$, we consider pairs of elements from $A$ that are in $[i,i+\ell-1]$. We can consider these pairs from left to right because the elements of $A$ have been sorted. For any two elements, we perform an application of Lemma~\ref{lem:LCP}, which takes $\cO(1)$ time, and maintain the leftmost smallest rotation. Inspect Figure~\ref{fig:min}.

\begin{figure}[t]
     \centering
      \includegraphics[width=4.5cm]{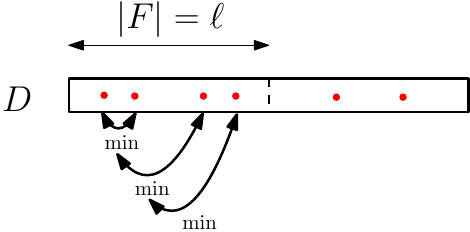}
      \caption{Illustration of three applications of Lemma~\ref{lem:LCP} on $F$. We mark the elements of $A$ in red circles.}
       \label{fig:min}
\end{figure}

Since we have at most $|D|$ length-$\ell$ fragments in $D$ and at most $d-1$ pairs that we consider per fragment, the total time to process all fragments is $\cO(d|D|)$.
\end{proof}

\begin{figure}[t]
     \centering
      \includegraphics[width=8.5cm]{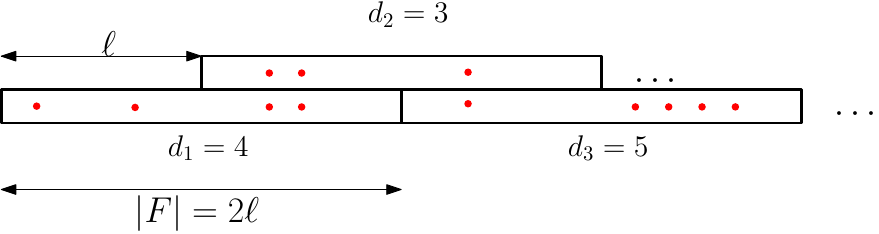}
      \caption{Illustration of the decomposition of $S$. We mark the $(w,k)$-minimizers in red circles.}
       \label{fig:strick}
\end{figure}

\begin{theorem}\label{the:main}
Given a string $S$ of length $n$, randomly generated by a memoryless source over an alphabet $\Sigma$ of size $\sigma \geq 2$ with identical letter probabilities, and an integer $\ell>0$, the expected number of reduced bd-anchors of order $\ell$ for $r=\lceil4 \log\ell /\log \sigma\rceil$ in $S$ is $\cO(n/\ell)$. Moreover, if $\Sigma$ is an integer alphabet of size $\sigma=n^{\cO(1)}$, $\mathcal{A}_{\ell,r}(S)$ can be computed in $\cO(n)$ time on average.
\end{theorem}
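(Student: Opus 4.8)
The first assertion is immediate from Lemma~\ref{lem:bd-anchors} with $r=\lceil 4\log\ell/\log\sigma\rceil$, so the plan concerns only the $\cO(n)$ average‑case construction of $\mathcal{A}_{\ell,r}(S)$, which I would reduce to minimizers. Set $w=\ell-r$ and $k=r+1$, so $w+k-1=\ell$ and $r=k-1$; then $k\ge 4\log_\sigma\ell+1\ge\log_\sigma w+\cO(1)$ and $w=\Theta(\ell)$ for $\ell$ above a constant, so Lemma~\ref{lem:minimizer} gives $\mathbb{E}[|\mathcal{M}_{w,k}(S)|]=\cO(n/\ell)$ and Lemma~\ref{lem:minimizers_con} computes $\mathcal{M}_{w,k}(S)$ in $\cO(n)$ worst‑case time. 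By Fact~\ref{fct:simple}, the reduced bd‑anchor of each length‑$\ell$ fragment $F=S[c\dd c+\ell-1]$ is a $(w,k)$-minimizer of $F$, hence one of the elements of $\mathcal{M}_{w,k}(S)$ lying in the first $w=\ell-r$ positions of $F$; so it suffices, for every such $F$, to output the leftmost lexicographically smallest rotation of $F$ starting at one of these anchor positions, which I would do via a mildly restricted version of Lemma~\ref{lem:main} (for $F=D[i\dd i+\ell-1]$, consider only anchors in $[i,i+w-1]$; the covering hypothesis survives because every length‑$\ell$ window of $S$ contains a $(w,k)$-minimizer in its first $w$ positions). For correctness, the candidate set of $F$ contains the reduced bd‑anchor $j^\star$; a candidate that is not a $(w,k)$-minimizer of $F$ has length‑$k$ prefix strictly exceeding the minimal one $F[j^\star\dd j^\star+r]$ (both are genuine length‑$k$ substrings of $F$ since $j,j^\star\le w$), hence a strictly larger rotation, while candidates that are minimizers of $F$ share that prefix; so the leftmost smallest rotation over the candidate set is exactly $j^\star$.

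For the running time and space, I would process $S$ in $\cO(n/\ell)$ overlapping blocks $D_t$ of length $\Theta(\ell)$ shifted by $\ell$, so that every length‑$\ell$ fragment lies fully in exactly one block (this is the decomposition of Figure~\ref{fig:strick}). For each $D_t$: compute $A_t=\mathcal{M}_{w,k}(S)\cap D_t$ in $\cO(|D_t|)$ time on a slightly enlarged block; build in $\cO(|D_t|)$ time a structure answering $\LCP_{D_t}$ in $\cO(1)$ time (suffix array, $\LCP$ array, RMQ — all linear on the integer string $D_t$; each $\LCP_F$ query needed in Lemma~\ref{lem:LCP} reduces to one of these with $\cO(1)$ overhead); and run the restricted Lemma~\ref{lem:main} in $\cO(|A_t|\cdot|D_t|)$ time. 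Summing over blocks, minimizer and $\LCP$-structure construction cost $\cO(\sum_t|D_t|)=\cO(n)$, and the searches cost $\cO(\ell\sum_t|A_t|)$; since each element of $\mathcal{M}_{w,k}(S)$ lies in $\cO(1)$ enlarged blocks, $\sum_t|A_t|=\cO(|\mathcal{M}_{w,k}(S)|)$, so by linearity of expectation the searches cost $\cO(\ell)\cdot\cO(n/\ell)=\cO(n)$ on average — and $\cO(n\ell)$ in the worst case since $|\mathcal{M}_{w,k}(S)|\le n$ always. Only $\cO(\ell+|\mathcal{A}_{\ell,r}(S)|)$ space is ever live.

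The main obstacle is the gap between what is cheap globally — the $(w,k)$-minimizers of all of $S$ — and what actually determines a fragment's reduced bd‑anchor — its own $(w,k)$-minimizers, restricted to its first $\ell-r$ positions. If one naively compared rotations starting at every minimizer of $S$ inside $F$, a minimizer whose length‑$k$ block spills past the right end of $F$ could start a spuriously small rotation and be reported in place of the true reduced bd‑anchor; restricting Lemma~\ref{lem:main} to the first $w$ positions is precisely the fix, and the block decomposition is what makes the $\cO(d|D|)$ cost of Lemma~\ref{lem:main} add up to $\cO(n)$ once $|D|=\cO(\ell)$ and $\mathbb{E}[d]=\cO(n/\ell)$. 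The same scheme gives $\mathcal{A}^{\text{ran}}_{\ell,r}(S)$ unchanged: replace $\mathcal{M}_{w,k}$ by $\mathcal{M}^{\text{ran}}_{w,k}$ (expected size $\cO(n/w)$ by Lemma~\ref{lem:random_minimizers} for $r=\lceil 4\log_\sigma\ell\rceil$) and break Karp–Rabin‑fingerprint ties among candidates with the identical leftmost‑smallest‑rotation comparison.
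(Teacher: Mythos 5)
Your proof is correct and follows essentially the same route as the paper: set $w=\ell-r$, $k=r+1$, compute $(w,k)$-minimizers as a superset of candidate anchors (Fact~\ref{fct:simple}), compare rotations via constant-time LCP queries (Lemma~\ref{lem:LCP}), and decompose $S$ into $\Theta(n/\ell)$ overlapping blocks of length $\Theta(\ell)$ so that the $\cO(d|D|)$ cost of Lemma~\ref{lem:main} telescopes to $\cO(n)$ on average. The one substantive difference, and a genuine improvement in rigor, is your explicit restriction of Lemma~\ref{lem:main}'s candidate set to anchors in the \emph{first} $w$ positions of each length-$\ell$ fragment. The paper applies Lemma~\ref{lem:main} directly with $A=\mathcal{M}_{w,k}(F_i)$, considering every anchor in $[i,i+\ell-1]$, and justifies correctness by citing Fact~\ref{fct:simple}; but Fact~\ref{fct:simple} only ensures that the reduced bd-anchor is \emph{in} $A$, and does not rule out an element of $A$ sitting at an offset $>w$ inside $F$ (a minimizer of some later window whose length-$k$ block extends past the right end of $F$) that starts a strictly smaller rotation of $F$ and gets reported instead. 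For instance, with $D=\texttt{bbbbbaaaaaaa}$, $w=4$, $k=3$, $F=D[1\dd 6]=\texttt{bbbbba}$ one has $\{4,5,6\}\subseteq\mathcal{M}_{4,3}(D)$; the smallest rotation of $F$ over those three is $\texttt{abbbbb}$ at offset $6$, whereas the reduced bd-anchor of $F$ is offset $4$. Your fix closes this gap, and your correctness argument (a candidate at offset $\le w$ that is not a $(w,k)$-minimizer of $F$ has a strictly larger length-$k$ prefix and hence strictly larger rotation, so the leftmost smallest rotation over the restricted set is exactly $j^\star$) is sound. You also fold the per-block suffix-array/LCP/RMQ construction into the proof, yielding the $\cO(\ell+|\mathcal{A}_{\ell,r}(S)|)$ space bound that the paper establishes separately in Section~\ref{sec:small}; this is cleaner packaging rather than a different argument. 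One small slip: with length-$2\ell$ blocks shifted by $\ell$, a length-$\ell$ fragment generally lies in one \emph{or two} blocks, not exactly one; you only need to charge each fragment to a single block, which your time analysis in fact does.
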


\begin{proof}
The first part is merely Lemma~\ref{lem:bd-anchors} from~\cite{TKDE2023}.

For the second part, we employ the so-called standard trick to conceptually decompose $S$ in $q$ fragments $F$ each of length $|F|=2\ell$ overlapping by $\ell$ positions (apart perhaps from the last one). Inspect Figure~\ref{fig:strick}.
We compute $\mathcal{M}_{w,k}(S)$ for $w=\ell - r$ and $k=r+1$. This can be done in $\cO(n)$ time by Lemma~\ref{lem:minimizers_con}~\cite{DBLP:conf/esa/LoukidesP21}.
We also construct an $\LCP_S$ data structure in $\cO(n)$ time~\cite{DBLP:conf/stoc/KempaK19}.
Let us denote by $d_i$ the number of  $(w,k)$-minimizers in the $i$th fragment $F_i$ of $S$ as per the above decomposition of $S$. We apply Lemma~\ref{lem:main} with $D=F_i, d=d_i$ and $A=\mathcal{M}_{w,k}(F_i)$. This takes $d_i|F_i|=\cO(d_i\ell)$ time.
Note that $q=\Theta(n/\ell)$. Then the total work of the algorithm, by Lemma~\ref{lem:main}, is bounded \emph{on average} by:

$n + \ell d_1+ \ell d_2+ \ldots + \ell d_q = n + \ell (d_1+ d_2+ \ldots + d_q)$

$\leq n + 2\ell \cdot \mathcal{M}_{w,k}(S)=\cO(n)$.

The last inequality holds by the fact that the fragments overlap by $\ell$ positions, which means that $d_i$ will generally be considered twice (see Figure~\ref{fig:strick}), and by Lemma~\ref{lem:minimizer}, which gives an expected asymptotic upper bound on the number of $(w,k)$-minimizers in $S$. The algorithm is correct by Fact~\ref{fct:simple}.
\end{proof}

In the worst case (e.g., $S=\texttt{a}^n$), $\mathcal{M}_{w,k}(S)=\Theta(n)$, and so the algorithm of Theorem~\ref{the:main} takes $\Theta(n\ell)$ time. 

Let us now highlight the differences for computing the set of randomized reduced bd-anchors. The following lemma follows directly from Lemma~\ref{lem:minimizers_con} and the fact that the KR fingerprints for all fixed-length fragments of $S$ can be computed in $\cO(n)$ time~\cite{DBLP:journals/ibmrd/KarpR87}.

\begin{lemma}\label{lem:random_minimizers_con}
For any string $S$ of length $n$ over an integer alphabet and integers $w,k>0$, the set $\mathcal{M}^{\text{ran}}_{w,k}(S)$ can be computed in $\cO(n)$ time.
\end{lemma}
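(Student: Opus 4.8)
The plan is to reduce the computation of $\mathcal{M}^{\text{ran}}_{w,k}(S)$ to an ordinary minimizers computation over a derived string whose letters are the Karp--Rabin fingerprints of the length-$k$ fragments of $S$, and then invoke Lemma~\ref{lem:minimizers_con}. First I would compute $\phi_S(i,i+k-1)$ for every $i\in[1,n-k+1]$. Since all these fragments share the same length $k$, the value $r^{k}\bmod p$ is precomputed once; after an $\cO(k)$-time evaluation of $\phi_S(1,k)$, consecutive fingerprints satisfy the rolling recurrence $\phi_S(i+1,i+k)=r\cdot\phi_S(i,i+k-1)-S[i]\,r^{k}+S[i+k]\bmod p$, so the whole array is produced with $\cO(1)$ work per position, i.e., in $\cO(n)$ total time -- this is the classical Karp--Rabin bound cited in the lemma.

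Next I would form the string $S'$ of length $n-k+1$ over the alphabet $\{0,\ldots,p-1\}$ defined by $S'[i]=\phi_S(i,i+k-1)$. Because we may take $p=n^{\cO(1)}$, the alphabet of $S'$ is an integer alphabet of polynomial size, as required by Lemma~\ref{lem:minimizers_con}. The key observation is that the window $S[i\dd i+w+k-2]$ defining the $(w,k)$-minimizers, with candidate starting positions $j\in[i,i+w-1]$, corresponds exactly to the window $S'[i\dd i+w-1]$ of $S'$: the length-$k$ substring of $S$ starting at $j$ is minimal under the KR-fingerprint order within this window precisely when $S'[j]$ is a minimal letter in $S'[i\dd i+w-1]$; moreover the index $i$ ranges over $[1,n-w-k+2]$ in both cases. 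Hence $\mathcal{M}^{\text{ran}}_{w,k}(S)=\mathcal{M}_{w,1}(S')$ as sets of positions, and this identity preserves ties, so every position attaining the per-window minimum fingerprint is reported, matching the minimizer definition. Applying Lemma~\ref{lem:minimizers_con} to $S'$ with window parameter $w$ and substring length $1$ yields $\mathcal{M}_{w,1}(S')$ in $\cO(|S'|)=\cO(n)$ time, and no translation of positions back to $S$ is needed since the index sets coincide.

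The step that needs the most care is the handling of ties among fingerprints: if $p$ is not taken large enough, distinct length-$k$ substrings may collide and the KR order is only a preorder. This does not affect the argument, since $\mathcal{M}^{\text{ran}}_{w,k}(S)$ is defined relative to whatever order the chosen hash function induces, and the $(w,1)$-minimizers of $S'$ faithfully report every position attaining the per-window minimum \emph{key} regardless of collisions; one only has to state the reduction at the level of fingerprints rather than of substrings and align the window offsets between $S$ and $S'$ as above. If one prefers to avoid relying on the precise alphabet-size hypothesis of Lemma~\ref{lem:minimizers_con} in degenerate regimes (e.g., $k$ close to $n$), the $(w,1)$-minimizers of $S'$ can alternatively be obtained in $\cO(|S'|)$ word operations by a single sliding-window-minimum pass with a monotone deque; either way the total running time is $\cO(n)$.
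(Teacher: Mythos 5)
Your proof follows the same route as the paper: compute all Karp--Rabin fingerprints of length-$k$ fragments in $\cO(n)$ time by a rolling hash, and then apply Lemma~\ref{lem:minimizers_con} to extract the minimizers. Your rendering via the derived string $S'$ with $\mathcal{M}^{\text{ran}}_{w,k}(S)=\mathcal{M}_{w,1}(S')$, plus the remarks on window alignment, alphabet size, and ties, is simply a more explicit version of what the paper states in one sentence.
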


Theorem~\ref{the:main2} follows directly from Theorem~\ref{the:main} by making the following two changes in the proof of Theorem~\ref{the:main}: 

\begin{itemize}
    \item Replace Lemma~\ref{lem:bd-anchors} by Lemma~\ref{lem:rrbd-anchors}.
    \item Replace Lemma~\ref{lem:minimizers_con} by Lemma~\ref{lem:random_minimizers_con}.
\end{itemize}

\begin{theorem}\label{the:main2}Given a string $S$ of length $n$, randomly generated by a memoryless source over an alphabet $\Sigma$ of size $\sigma \geq 2$ with identical letter probabilities, and an integer $\ell>0$, the expected number of randomized reduced bd-anchors of order $\ell$ for $r=\lceil4 \log\ell /\log \sigma\rceil$ in $S$ is $\cO(n/\ell)$. Moreover, $\mathcal{A}^{\text{ran}}_{\ell,r}(S)$ can be computed in $\cO(n)$ time on average.\end{theorem}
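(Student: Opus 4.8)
The plan is to follow the proof of Theorem~\ref{the:main} almost verbatim, substituting the randomized objects for their deterministic counterparts wherever they appear. The first claim — that the expected number of randomized reduced bd-anchors of order $\ell$ with $r=\lceil 4\log\ell/\log\sigma\rceil$ is $\cO(n/\ell)$ — is already established: it is exactly Lemma~\ref{lem:rrbd-anchors}, which in turn rests on the randomized-minimizer density bound of Lemma~\ref{lem:random_minimizers} together with the inclusion $|\mathcal{A}^{\text{ran}}_{\ell,r}(S)|\le|\mathcal{M}^{\text{ran}}_{\ell-r,r+1}(S)|$. So the only thing left is the average-case $\cO(n)$ running time, and here the two changes relative to Theorem~\ref{the:main} are exactly the announced ones: replace Lemma~\ref{lem:bd-anchors} by Lemma~\ref{lem:rrbd-anchors} in the size accounting, and replace Lemma~\ref{lem:minimizers_con} by Lemma~\ref{lem:random_minimizers_con} when computing the anchor set.

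For the algorithm, first I would compute $\mathcal{M}^{\text{ran}}_{w,k}(S)$ with $w=\ell-r$ and $k=r+1$ in $\cO(n)$ time via Lemma~\ref{lem:random_minimizers_con} — this is where the $\cO(n)$-time precomputation of all length-$(r+1)$ Karp--Rabin fingerprints is used — and build an $\LCP_S$ data structure in $\cO(n)$ time. I would then conceptually decompose $S$ into $q=\Theta(n/\ell)$ fragments $F_i$ of length $2\ell$ overlapping by $\ell$ positions (Figure~\ref{fig:strick}), and on each $F_i$ run a variant of Lemma~\ref{lem:main}: sort the minimizer anchors lying in $F_i$ and, for every length-$\ell$ window, sweep its anchors from left to right maintaining the best one, where ``best'' now means the length-$(r+1)$ substring of smallest fingerprint, ties broken by the lexicographic order of the rotation starting $r+1$ positions further right. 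A single pairwise comparison under this two-level criterion still costs $\cO(1)$: one fingerprint lookup-and-compare and, only on a tie, one application of Lemma~\ref{lem:LCP}. Hence the cost on $F_i$ is $\cO(d_i\ell)$ with $d_i=|\mathcal{M}^{\text{ran}}_{w,k}(F_i)|$, exactly as in Lemma~\ref{lem:main}.

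Correctness is the randomized analogue of Fact~\ref{fct:simple}: by Definition~\ref{def:rr} the randomized reduced bd-anchor of a length-$\ell$ fragment $F$ starts a length-$(r+1)$ substring of $F$ of minimal fingerprint, hence it is an $(\ell-r,r+1)$-minimizer of $F$ under the fingerprint order and is therefore among the anchors scanned for that window; and the two-level selection rule used by the sweep is precisely the rule of Definition~\ref{def:rr}. For the running time, summing over the decomposition gives $n+\ell\,d_1+\cdots+\ell\,d_q=n+\ell(d_1+\cdots+d_q)\le n+2\ell\,|\mathcal{M}^{\text{ran}}_{w,k}(S)|$, since consecutive fragments overlap by $\ell$ and so each minimizer is charged $\cO(1)$ times. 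With $r=\lceil 4\log\ell/\log\sigma\rceil$ we have $k=r+1>(3+\epsilon)\log_\sigma(w+1)$, so Lemma~\ref{lem:random_minimizers} applies and bounds $\mathbb{E}[|\mathcal{M}^{\text{ran}}_{w,k}(S)|]$ by $\cO(n/w)=\cO(n/\ell)$, whence the total expected time is $\cO(n)$.

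The main obstacle I expect is bookkeeping rather than a new idea: verifying that the prescribed $r$ really satisfies the hypothesis $k>(3+\epsilon)\log_\sigma(w+1)$ of Lemma~\ref{lem:random_minimizers} with $w=\ell-r$ (so that both the density bound and Lemma~\ref{lem:rrbd-anchors} are in force), and confirming that the $\cO(1)$-per-comparison claim genuinely survives the added fingerprint layer, including the wrap-around corner case where the winning candidate's length-$(r+1)$ substring is the last one inside the window and the rotation to be compared starts at position $\ell+1$ of an $\ell$-letter string. Everything else is a direct transcription of the argument for Theorem~\ref{the:main}.
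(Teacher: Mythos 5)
Your proposal is correct and takes essentially the same approach as the paper: the paper's own proof of Theorem~\ref{the:main2} is precisely the two-substitution argument you describe (replace Lemma~\ref{lem:bd-anchors} by Lemma~\ref{lem:rrbd-anchors} and Lemma~\ref{lem:minimizers_con} by Lemma~\ref{lem:random_minimizers_con} in the proof of Theorem~\ref{the:main}), and you have merely unpacked what that substitution entails in the sweep and in the randomized analogue of Fact~\ref{fct:simple}. The two "obstacles" you flag at the end are indeed routine: with $r=\lceil 4\log\ell/\log\sigma\rceil$ one has $k=r+1\ge 4\log_\sigma\ell+1 > (3+\epsilon)\log_\sigma(w+1)$ for any $\epsilon<1$ since $w+1\le\ell$, and the wrap-around position $j_h+1=\ell+1$ is handled modulo $\ell$ as in Lemma~\ref{lem:LCP}, so neither affects the $\cO(1)$-per-comparison bound.
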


\subsection{Index construction in small space}\label{sec:small}

In this section, we show the index construction in
small space using near-optimal work.
Similar to the previous section,
the same technique (with very small differences) can be used to construct the index for reduced bd-anchors and for randomized reduced bd-anchors of $S$. We will highlight these differences appropriately.

Note that a \emph{straightforward} implementation of the index $\mathcal{I}_{\ell,r}(S)$ (see also~\cite{DBLP:conf/esa/LoukidesP21}) requires $\Theta(n)$ space in any case.
In particular, this is because the \textsf{SA} and \textsf{LCP} array of $S$ (and $\overleftarrow{S}$) are required for: \textbf{(i)} the implementation of Lemma~\ref{lem:minimizers_con}~\cite{DBLP:conf/esa/LoukidesP21} (reduced bd-anchors); and \textbf{(ii)} the construction of $\mathcal{T}^R_{\ell,r}(S)$ and $\mathcal{T}^L_{\ell,r}(S)$~\cite{DBLP:conf/esa/LoukidesP21}.

However, the size of $\mathcal{I}_{\ell,r}(S)$ can be asymptotically smaller than $\Theta(n)$.
In fact, Theorem~\ref{the:index} tells us that the size of $\mathcal{I}_{\ell,r}(S)$ is $\cO(|\mathcal{A}_{\ell,r}(S)|)$, which is expected to be $\cO(n/\ell)$. 
Note that $|\mathcal{A}_{\ell,r}(S)|=\Omega(n/\ell)$ in any case~\cite{DBLP:conf/esa/LoukidesP21}, so we cannot hope for less memory. The goal of this section is to show that we can construct $\mathcal{I}_{\ell,r}(S)$ \emph{efficiently} using only $\cO(\ell+|\mathcal{A}_{\ell,r}(S)|)$ space (and external memory). (Note that in practical applications $\ell$ never exceeds $n/\ell$ and so $\ell$ is negligible.)
We will show two constructions: one utilizing internal and external memory; and another one utilizing only internal memory.

In the standard \emph{external memory} (EM) model~\cite{DBLP:journals/fttcs/Vitter06}, with internal memory (RAM) size $M$ and disk block size $B$, the standard I/O complexities are: $\textsf{scan}(n) = n/B$,
which is the complexity of scanning $n$ elements sequentially; and $\textsf{sort}(n) =
(n/B) \log_{M/B} (n/B)$, which is the complexity of sorting $n$ elements~\cite{DBLP:journals/fttcs/Vitter06}. 
\emph{Semi-EM} model is a relaxation of the EM model, in which we are allowed to store some of the data in internal memory~\cite{DBLP:journals/algorithmica/AbelloBW02}. We proceed in four steps: In Step 1, we compute the set $\mathcal{A}_{\ell,r}(S)$ using $\cO(\ell)$ extra space; in Step 2, we compute the \textsf{SA} and \textsf{LCP} array of $S$ in the EM model using existing algorithms; in Step 3, we construct $\mathcal{T}^R_{\ell,r}(S)$ and $\mathcal{T}^L_{\ell,r}(S)$ in the semi-EM model (by having $\mathcal{A}_{\ell,r}(S)$ in internal memory); in Step 4, we construct the 2D range reporting data structure in internal memory using existing algorithms. The only difference of the second construction is that it skips Step 2, and directly constructs $\mathcal{T}^R_{\ell,r}(S)$ and $\mathcal{T}^L_{\ell,r}(S)$ utilizing only internal memory.

We next describe in detail how every step is implemented using near-optimal work.

\paragraph{Step 1} We make use of $\cO(\ell)$ extra space. Specifically, we use Theorem~\ref{the:main} to construct $\mathcal{A}_{\ell,r}(S)$ via considering fragments of $S$ of length-$2\ell$ (apart perhaps from the last one), overlapping by $\ell$ positions, without significantly increasing the time. If the size of the alphabet of a fragment $F$ is not polynomial in $\ell$ (i.e., $F$ consists of large integers), we use $\ctO(\ell)$ time instead of $\cO(\ell)$ by first sorting the letters of $F$, and then assigning each letter to a rank in $\{1,\ldots,2\ell\}$. This clearly does not affect the lexicographic rank of rotations. Thus the total time to construct $\mathcal{A}_{\ell,r}(S)$ is $\ctO(n)$: there are $\cO(n/\ell)$ fragments and each one is processed in $\ctO(\ell)$ time. We finally implement $\mathcal{A}_{\ell,r}(S)$ as a perfect hash table~\cite{DBLP:journals/jacm/FredmanKS84}, denoted by $\mathcal{H}_{\ell,r}(S)$, which we keep in RAM. This is done in $\cO(|\mathcal{A}_{\ell,r}(S)|)$ time; and so the total time is $\ctO(n)$. The only difference for randomized reduced bd-anchors is that we use Theorem~\ref{the:main2} instead to construct $\mathcal{A}^{\text{ran}}_{\ell,r}(S)$. (The alphabet size plays no role in Theorem~\ref{the:main2}.)

\paragraph{Step 2} We compute the \textsf{SA} and \textsf{LCP} array of $S$ in external memory using $M$ words of internal memory and disk block size $B$. To this end we can use existing algorithms, which compute the two arrays  simultaneously~\cite{DBLP:journals/jacm/KarkkainenSB06,DBLP:journals/jea/Bingmann0O16}; these algorithms are optimal with respect to internal work $\cO(n \log_{M/B} (n/B))$ and I/O complexity $\cO((n/B) \log_{M/B} (n/B))$---see also~\cite{DBLP:conf/cpm/KarkkainenKP15a,DBLP:journals/jea/KarkkainenK16,DBLP:conf/spire/KarkkainenK16,DBLP:conf/alenex/KarkkainenKPZ17,DBLP:journals/jea/KarkkainenK19}.
We also compute the \textsf{SA} and \textsf{LCP} array of $\overleftarrow{S}$, the reverse of $S$, analogously.
Clearly this step is the same for randomized reduced bd-anchors. For the internal memory construction, this step is omitted.

\paragraph{Step 3} We construct the two compacted tries $\mathcal{T}^R_{\ell,r}(S)$ and $\mathcal{T}^L_{\ell,r}(S)$ with the aid of four arrays, each of size $|\mathcal{A}_{\ell,r}(S)|$: \textsf{RSA}; \textsf{RLCP}; \textsf{LSA}; and \textsf{LLCP}.
Specifically, \textsf{RSA} (Right \textsf{SA}) stores a permutation of $\mathcal{A}_{\ell,r}(S)$ such that $\textsf{RSA}[i]$ is the starting position of the $i$th lexicographically smallest suffix of $S$ with $\textsf{RSA}[i]\in \mathcal{A}_{\ell,r}(S)$. $\textsf{RLCP}[i]$ array (Right \textsf{LCP} array) stores the length of the LCP of $\textsf{RSA}[i-1]$ and $\textsf{RSA}[i]$.
\textsf{LSA} and \textsf{LLCP} array are defined analogously for $\overleftarrow{S}$. Let us show how we construct \textsf{RSA} and \textsf{RLCP}; the other case for \textsf{LSA} and \textsf{LLCP} array is symmetric. To compute these arrays we use the \textsf{SA} and the \textsf{LCP} array of $S$ constructed in Step 2.
We scan the \textsf{SA} and the \textsf{LCP} array of $S$ sequentially and sample them using the hash table $\mathcal{H}_{\ell,r}(S)$ constructed in Step 1.
Let us suppose that we want to sample the $k$th value after reading $\textsf{SA}[i]$ and $\textsf{LCP}[i]$.
This is possible using $\cO(1)$ words of memory.
If $\textsf{SA}[i]$ is in $\mathcal{H}_{\ell,r}(S)$, which we check in $\cO(1)$ time, we set $\textsf{RSA}[k]=\textsf{SA}[i]$. It is also well known that for any $i_1<j_2$ the length of the LCP between $S[\textsf{SA}[i_1]\dd n]$ and $S[\textsf{SA}[i_2]\dd n]$ is the minimum value lying in $\textsf{LCP}[i_1+1],\ldots,\textsf{LCP}[i_2]$.
Since we scan also the $\textsf{LCP}$ array simultaneously, we maintain the value we need to store in $\textsf{RLCP}[k]$. Finally we increment $k$ and $i$ by one. 
Scanning \textsf{RSA} and \textsf{RLCP} takes $\cO(n/B)$ I/Os.
Using \textsf{RSA} and \textsf{RLCP}
we can construct $\mathcal{T}^R_{\ell,r}(S)$ in $\cO(|\mathcal{A}_{\ell,r}(S)|)$ time using a folklore algorithm (cf.~\cite{DBLP:conf/cpm/KasaiLAAP01}). 
We construct $\mathcal{T}_{\ell,r}^L(S)$ analogously from \textsf{LSA} and \textsf{LLCP}. 
For the internal-memory construction,
Step 3 can be implemented in $\ctO(n)$ time using $\cO(|\mathcal{A}_{\ell,r}(S)|)$ space assuming read-only random access to $S$.
This can be achieved by using any algorithm
for \emph{sparse suffix sorting}~\cite{DBLP:conf/stacs/IKK14,DBLP:conf/soda/GawrychowskiK17,DBLP:conf/soda/BirenzwigeGP20,DBLP:conf/latin/AyadLPV24,DBLP:conf/cpm/KosolobovS24}. 
Clearly for randomized reduced bd-anchors, we use
$\mathcal{A}^{\text{ran}}_{\ell,r}(S)$ instead of
$\mathcal{A}_{\ell,r}(S)$.

\paragraph{Step 4} By using \textsf{RSA} and \textsf{LSA}, we construct the 2D range reporting data structure in $\ctO(|\mathcal{A}_{\ell,r}(S)|)$ time using $\cO(|\mathcal{A}_{\ell,r}(S)|)$ space~\cite{DBLP:conf/compgeom/ChanLP11,DBLP:conf/latin/MakinenN06,DBLP:conf/soda/BelazzouguiP16,DBLP:conf/esa/Gao0N20}.

This completes the construction: we have shown how to construct $\mathcal{I}_{\ell,r}(S)$ in near-optimal work using only $\cO(\ell+|\mathcal{A}_{\ell,r}(S)|)$ space (and external memory). This is good because the size of $\mathcal{I}_{\ell,r}(S)$ is $\cO(|\mathcal{A}_{\ell,r}(S)|)$ (Theorem~\ref{the:index}) and the $\cO(\ell)$ factor is negligible in practice.

\section{Index with worst-case guarantees}\label{sec:worst-case}
The index based on bd-anchors has
$\cO(|\mathcal{A}_{\ell,r}(S)|)$ size,
and supports pattern matching queries in $\ctO(|P|+\occ)$ time.
Moreover, it can be constructed in $\ctO(n)$ time 
using $\cO(\ell+|\mathcal{A}_{\ell,r}(S)|)$ space. 
Unfortunately, while in many real-world datasets we have $|\mathcal{A}_{\ell,r}(S)|=\Theta(n/\ell)$, there are worst cases with $|\mathcal{A}_{\ell,r}(S)|=\Theta(n)$; e.g., for $S=a_1a_2\ldots a_n$, where all $n$ letters are different.

In this section, we combine some ideas presented in~\cite{DBLP:conf/cpm/BathieCS24} with the bd-anchors index to prove the following.

\begin{theorem}\label{the:worst-case}
For any string $S$ of length $n$ over an integer alphabet of size $n^{\cO(1)}$ and any integer $\ell>0$, we can construct an index that occupies $\cO(n/\ell)$ extra space and reports all $\occ$ occurrences of any pattern $P$ of length $|P|\geq \ell$ in $S$ in $\ctO(|P|+\occ)$ time. The index can be constructed in $\ctO(n)$ time and $\cO(n/\ell)$ working space.
\end{theorem}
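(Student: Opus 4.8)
The plan is to keep exactly the architecture of Section~\ref{sec:index} --- two compacted tries plus a 2D range reporting structure --- but to feed it an lc-anchors set whose size is \emph{unconditionally} $\cO(n/\ell)$, rather than the bd-anchors set $\mathcal{A}_{\ell,r}(S)$, which can be $\Theta(n)$ in the worst case. The natural choice is a \emph{partitioning set}~\cite{DBLP:conf/cpm/KosolobovS24} (a relative of string synchronizing sets~\cite{DBLP:conf/stoc/KempaK19}) with parameter $\tau=\Theta(\ell)$: such a $\Pi\subseteq[1,n]$ satisfies (a) $|\Pi\cap[i,i+\tau)|=\cO(1)$ for every $i$, hence $|\Pi|=\cO(n/\tau)=\cO(n/\ell)$ always; (b) every window of $\Theta(\tau)$ consecutive positions of $S$ contains an element of $\Pi$, which is the approximately-uniform-sampling property; and (c) membership $i\in\Pi$ is decided by a predicate that reads $S$ only within distance $\cO(\tau)$ of $i$, which is the local-consistency property. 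Given $\Pi$, I would build the compacted trie $\mathcal{T}^R$ of the suffixes $S[i\dd n]$ for $i\in\Pi$, the compacted trie $\mathcal{T}^L$ of the reversed prefixes $\overleftarrow{S[1\dd i]}$ for $i\in\Pi$, and a 2D range reporting structure~\cite{DBLP:conf/compgeom/ChanLP11} over the points $(x_i,y_i)$ with $x_i,y_i$ the ranks of $S[i\dd n]$ in $\mathcal{T}^R$ and of $\overleftarrow{S[1\dd i]}$ in $\mathcal{T}^L$; all three occupy $\cO(|\Pi|)=\cO(n/\ell)$ space.

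For a query $P$ with $|P|\ge\ell$, I would first compute the partitioning set of $P$ itself, then let $j$ be its leftmost element lying in the ``safe'' middle region $[c\tau+1,\ |P|-c\tau]$, where $c\tau$ is the locality radius of the predicate. Choosing $\tau$ a sufficiently small constant fraction of $\ell$ makes this region have length $\ge\ell/2=\Omega(\tau)$, so by property (b) it is nonempty and contains an anchor of $P$, and moreover $j=\cO(\tau)=\cO(\ell)$. The correctness claim is that for every occurrence of $P$ at a position $p$ in $S$ we have $p+j-1\in\Pi$: since $j$ is in the safe region, the $\cO(\tau)$-neighbourhood of $p+j-1$ in $S$ is a translate of the $\cO(\tau)$-neighbourhood of $j$ in $P$, so the local predicate gives the same verdict at both. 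Hence reporting reduces to: search $P[j\dd|P|]$ in $\mathcal{T}^R$, search $\overleftarrow{P[1\dd j]}$ in $\mathcal{T}^L$, take the induced rectangle, and query the range structure --- the reported $i\in\Pi$ are exactly those with $P=S[i-j+1\dd i-j+|P|]$, the converse direction being immediate from the definition of the two tries. Computing the partitioning set of $P$ and the two trie searches cost $\ctO(|P|)$ and the range query costs $\ctO(1+\occ)$, giving query time $\ctO(|P|+\occ)$.

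For the construction within $\cO(n/\ell)$ working space, I would compute $\Pi$ by the ``standard trick'': scan $S$ in overlapping windows of length $\Theta(\ell)$, so that by local consistency each window determines its own anchors while only $\cO(\ell)$ words of $S$ (plus the $\cO(1)$ anchors of the current window, by property (a)) are held in memory; within a window, Karp--Rabin fingerprints~\cite{DBLP:journals/ibmrd/KarpR87} and LCE queries~\cite{DBLP:conf/stoc/KempaK19} supply the constant-time comparisons needed by the predicate, so the whole scan runs in $\ctO(n)$ time. Then $\mathcal{T}^R$ and $\mathcal{T}^L$ are obtained by \emph{sparse suffix sorting} the $\cO(n/\ell)$ positions of $\Pi$ in $S$ and the corresponding $\cO(n/\ell)$ positions in $\overleftarrow{S}$, using the $\ctO(n)$-time, $\cO(|\Pi|)$-space algorithm of Ayad et al.~\cite{DBLP:conf/latin/AyadLPV24} (or the sparse-suffix machinery of~\cite{DBLP:conf/soda/BirenzwigeGP20,DBLP:conf/cpm/KosolobovS24}), which also yields the sparse LCP arrays from which the compacted tries are built in $\cO(|\Pi|)$ time (as in Step~3 of Section~\ref{sec:small}); finally the 2D range structure is built in $\ctO(|\Pi|)$ time and $\cO(|\Pi|)$ space~\cite{DBLP:conf/compgeom/ChanLP11}. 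All steps stay within $\ctO(n)$ time and $\cO(\ell+|\Pi|)=\cO(n/\ell)$ space in the regime $\ell=\cO(\sqrt n)$ assumed throughout the paper; it is the ideas of~\cite{DBLP:conf/cpm/BathieCS24} --- small-space partitioning-set computation and small-space partitioning-set-based indexing, including the treatment of highly periodic fragments that partitioning sets, unlike plain synchronizing sets, handle without exceptions --- that make these pieces cooperate.

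The step I expect to be the main obstacle is the boundary analysis in the correctness proof. Unlike the (reduced) bd-anchor of $P[1\dd\ell]$, which is canonically defined and provably coincides (up to translation) with a bd-anchor of the length-$\ell$ fragment it sits in, an anchor of $P$ that is close to either end of $P$ may fail to be an anchor of the full occurrence in $S$, because the local predicate there depends on text outside $P$; the fix is the safe middle region, and making it work requires fixing the constants together --- the locality radius $c\tau$ of the chosen partitioning-set predicate, the ratio $\tau/\ell$, and the resulting margin --- so that a consistent anchor is always available and has bounded offset $j=\cO(\ell)$ (which also keeps $|\overleftarrow{P[1\dd j]}|$ bounded and the trie searches well-defined). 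Everything else is a fairly mechanical re-instantiation of the Section~\ref{sec:index} index on $\Pi$ together with off-the-shelf small-space construction results.
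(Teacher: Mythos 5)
Your overall architecture (partitioning set + bidirectional compacted tries + 2D range reporting, with the small-space construction pipeline via sparse suffix sorting) matches the paper's, and the small-space analysis is fine. But there is a genuine gap at the heart of the correctness argument, and it is exactly the place you flag as a worry: the periodic case.

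Your ``property (b)'' --- that every window of $\Theta(\tau)$ consecutive positions contains an element of $\Pi$ --- is \emph{not} a property of partitioning sets, and neither is your ``property (a).'' Definition~\ref{def:part-set} promises density only \emph{conditionally}: if a gap of more than $\tau$ positions contains no anchor, then the corresponding fragment has period at most $\tau/4$; and Theorem~\ref{the:part-set} adds the converse, that any fragment with period at most $\tau/4$ is anchor-free in its interior (minus $\tau$ from each end). Thus partitioning sets \emph{deliberately place no anchors} inside long highly-periodic stretches; they do not ``handle highly periodic fragments without exceptions'' --- they avoid them. Consequently your query algorithm breaks: if the ``safe middle region'' $[c\tau+1,|P|-c\tau]$ of $P$ is highly periodic, $\Pi(P)$ has no element there and your claimed anchor $j$ does not exist, yet $P$ can still occur in $S$. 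This is not a corner case you can legislate away by adjusting constants; it is the whole reason the paper splits the query into three cases. The paper's fix is to enlarge the sample to $\mathcal{A}=\mathcal{P}\cup\mathcal{L}$, where $\mathcal{L}$ contains the endpoints of all $\tau$-runs (bounded in number by Lemma~\ref{lem:runs}), and to maintain per-Lyndon-root data structures (sorted lists of $(e_T,x_T)$ and $(e_T,y_T)$ pairs, plus priority search trees over $(x_T,y_T)$) so that: (i) aperiodic $P$ is handled exactly as you describe; (ii) a long internal periodic fragment of $P$ that is not all of $P$ (quasi-periodic case) is anchored at one of its own endpoints, which is guaranteed to align with an endpoint of a $\tau$-run of $S$, hence an element of $\mathcal{L}$; and (iii) a fully-periodic $P$ has no usable anchor at all, so its occurrences in $S$ are found purely arithmetically inside $\tau$-runs that share $P$'s Lyndon root, using the priority search trees. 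Your proposal is missing both $\mathcal{L}$ and the entire Lyndon-root machinery, so it cannot report occurrences of periodic or near-periodic patterns, and it therefore does not prove the theorem.

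One smaller point: the paper's aperiodic case does not recompute a partitioning set of $P$; it stores KR fingerprints of the length-$(2\tau+1)$ contexts of all $i\in\mathcal{P}$ in a hash table $H$ and, at query time, slides a window over $P$ looking for a fingerprint match. This sidesteps the issue you raise about whether an anchor of $P$ ``coincides'' with an anchor of its occurrence in $S$: a matching fingerprint directly certifies the text-side anchor by local consistency (Property~1 of Definition~\ref{def:part-set}). Your ``recompute $\Pi(P)$ and restrict to the safe region'' idea could plausibly be made to work for the aperiodic case with the right constants, but it is an extra thing to verify, and it still does nothing for the periodic cases above.
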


We define one of the most basic string notions.
An integer $p>0$ is a \emph{period} of a string $P$ if $P[i] = P[i + p]$ for all $i \in [1,|P|-p]$. The smallest period of $P$ is referred to as \emph{the period} of $P$ and is denoted by $\textsf{per}(P)$. 

\begin{example}
For $P=\texttt{abaaabaaabaaaba}$, $\textsf{per}(P)=4$; $p=8$ is also a period of $P$ but it is not the smallest.    
\end{example}

We are now in a position to define the following powerful notion of locally consistent anchors.

\begin{definition}[$\tau$-partitioning set~\cite{DBLP:conf/cpm/KosolobovS24}]\label{def:part-set}
For any string $S$ of length $n$ and any integer $\tau\in[4,n/2]$, a set of positions
$\mathcal{P} \subseteq [1,n]$ is called a \emph{$\tau$-partitioning set} if it satisfies the following properties:
\begin{enumerate}
    \item if $S[i-\tau\dd i+\tau ] = S[j-\tau\dd j+\tau]$, for $i,j \in [\tau, n-\tau)$, then $i \in \mathcal{P}$ if and only if $j \in \mathcal{P}$; \label{prop:a}
    \item if $S[i \dd i+\ell] = S[j \dd j+\ell]$, for $i, j \in \mathcal{P}$ and some $\ell \geq 0$, then, for each $d \in [0, \ell-\tau)$, $i + d \in \mathcal{P}$ if and only if $j + d \in \mathcal{P}$;
    \item if $i, j \in [1, n]$ with $j - i > \tau$ and $(i,j)\cap \mathcal{P}=\emptyset$, then $S[i\dd j]$ has period at most $\tau/4$.
\end{enumerate}
\end{definition}

Let us explain these properties. Property 1 tells us that for any two equal sufficiently long fragments of $S$, we either pick no anchor or we pick an anchor having the same relative position in both (local consistency). Property 2 tells us that for any two equal sufficiently long fragments of $S$, the anchors we pick from both are in sync (forward synchronization). Property 3 tells us that for a sufficiently long fragment of $S$, if no anchor is picked, then this fragment is highly periodic (approximately uniform sampling).

\begin{theorem}[\cite{DBLP:conf/cpm/KosolobovS24}]\label{the:part-set}
For any string $S$ of length $n$ over an integer alphabet of size $n^{\cO(1)}$ and any integer $\tau \in [4,\cO(n/ \log^2 n)]$, we can construct a $\tau$-partitioning set $\mathcal{P}$ of size $\cO(n/\tau)$.
The set $\mathcal{P}$ can be constructed in $\ctO(n)$ time 
using $\cO(n/\tau)$ working space.
The set $\mathcal{P}$ additionally satisfies the property that if a fragment $S[i \dd j]$ has period at most $\tau/4$, then $\mathcal{P} \cap [i + \tau , j - \tau ] = \emptyset$.
\end{theorem}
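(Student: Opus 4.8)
The plan is to retain the architecture of the bd-anchors index of Theorem~\ref{the:index} --- two compacted tries plus a 2D range-reporting structure --- but to replace the anchor set $\mathcal{A}_{\ell,r}(S)$ by a $\tau$-partitioning set $\mathcal{P}$ of $S$ with $\tau=\Theta(\ell)$ (say $\tau=\lceil \ell/8\rceil$, so that $\ell>3\tau$), obtained from Theorem~\ref{the:part-set}. Since $|\mathcal{P}|=\cO(n/\tau)=\cO(n/\ell)$, the trie $\mathcal{T}^R$ on $\{S[i\dd n]:i\in\mathcal{P}\}$, the trie $\mathcal{T}^L$ on $\{\overleftarrow{S[1\dd i]}:i\in\mathcal{P}\}$, and the 2D structure on the corresponding rank pairs all occupy $\cO(n/\ell)$ space. (When $\ell$ is below a constant, $\cO(n/\ell)=\cO(n)$ and a suffix tree suffices; when $\ell$ is so large that Theorem~\ref{the:part-set} does not apply, $\tau$ is capped at $\Theta(n/\log^2 n)$ and a separate argument is needed. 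I treat $\ell$ as lying in the main range below.) For a query $P$ with $|P|\geq\ell$, I first establish the following dichotomy from the three properties of $\mathcal{P}$: either (i) $P$ admits a canonically defined \emph{interior anchor}, i.e.\ a position $j$ depending only on $P$ that lies at distance $>\tau$ from both ends of a length-$\Theta(\ell)$ window of $P$ and that --- by Property~\ref{prop:a} (equal length-$(2\tau{+}1)$ windows have equal membership in $\mathcal{P}$) together with forward synchronization (Property~2) --- corresponds to an element of $\mathcal{P}$ at the same relative offset in \emph{every} occurrence of $P$ in $S$; or (ii) no such $j$ exists, whence Property~3 forces a length-$\geq\ell$ fragment $G$ of $P$ to have period $q\leq\tau/4$. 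In case (i) the search is exactly the bd-anchors search: split $P$ at $j$, search $P[j\dd|P|]$ in $\mathcal{T}^R$ and $\overleftarrow{P[1\dd j]}$ in $\mathcal{T}^L$, and report the induced rectangle via the 2D structure; correctness and the $\ctO(|P|+\occ)$ time follow from Theorem~\ref{the:index}.

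Case (ii), the \emph{periodic patterns}, is where I would draw on the techniques of Bathie, Charalampopoulos and Starikovskaya~\cite{DBLP:conf/cpm/BathieCS24}. The extra ingredient is an index over the maximal runs of $S$ of period at most $\tau/4$ and length at least $\tau$; each such run has its interior disjoint from $\mathcal{P}$ (the additional property of Theorem~\ref{the:part-set}), so they all live inside the $\cO(n/\tau)$ long gaps of $\mathcal{P}$ and hence number $\cO(n/\ell)$; the index therefore fits in $\cO(n/\ell)$ space. Given $P$ as in case (ii) with period $q$ of its fragment $G$, I would split into two subcases. If the period $q$ extends over all of $P$ (so $P=u^eu'$, $|u|=q$), then any occurrence of $P$ lies inside one of these runs, and for each run the set of start positions of occurrences of $P$ inside it is an arithmetic progression of common difference $q$ determined by the run's length, its phase, and $|P|$; locating the $\cO(1)$ relevant runs through a small auxiliary structure keyed by Lyndon roots and then enumerating these progressions reports all $\occ$ occurrences in $\ctO(|P|+\occ)$ time. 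If instead the period breaks at some position $m'\geq\ell$ (the length of the longest period-$q$ prefix of $P$, and symmetrically for a periodic suffix), then in every occurrence $S[i\dd i+|P|-1]=P$ the point $i+m'-1$ is the right endpoint of a maximal period-$q$ run of $S$; taking these $\cO(n/\ell)$ run endpoints as anchors and building an additional pair of tries plus a 2D structure around them reduces this subcase to an anchored search, again in $\ctO(|P|+\occ)$ time. One also has to arrange that the two indices jointly report each occurrence once and only once --- e.g.\ by a fixed tie-breaking rule based on whether the length-$\ell$ prefix of $P$ has period at most $\tau/4$.

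For the construction: $\mathcal{P}$ is built in $\ctO(n)$ time and $\cO(n/\tau)=\cO(n/\ell)$ working space by Theorem~\ref{the:part-set}. With $\mathcal{P}$ available, $\mathcal{T}^R$ and $\mathcal{T}^L$ are assembled from the sparse suffix and LCP arrays of $S$ (and of $\overleftarrow{S}$) restricted to $\mathcal{P}$, computed by sparse suffix sorting exactly as in Step~3 of Section~\ref{sec:small}, in $\ctO(n)$ time and $\cO(n/\ell)$ space; the 2D range-reporting structure follows in $\ctO(n/\ell)$ time and space as in Step~4. The maximal runs of period $\leq\tau/4$ are obtained \emph{through $\mathcal{P}$ itself}: scan the $\cO(n/\tau)$ gaps of $\mathcal{P}$, and for each gap of length $>\tau$ --- which, by Property~3, is periodic with period $\leq\tau/4$ --- determine that period with a read-only, $\cO(1)$-space period-finding procedure run on a $\Theta(\tau)$-window, extend it letter by letter to the maximal run of that period, and deduplicate; this costs $\ctO(n)$ time and $\cO(n/\ell)$ space. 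The auxiliary run index (Lyndon roots, endpoint tries, the second 2D structure) is then built on these $\cO(n/\ell)$ runs in $\ctO(n/\ell)$ time and space.

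I expect the main obstacle to be the periodic case of the query: making the reduction to the runs index provably correct --- identifying, via Fine--Wilf, precisely which maximal runs can contain an occurrence of a period-$q$ pattern (including when a run's own minimal period exceeds $|P|$), proving that the break position of a pattern with only a periodic prefix always coincides with a run endpoint in \emph{every} occurrence, and ensuring that the partitioning-set index and the runs index together enumerate every occurrence exactly once. By contrast, the partitioning-set index itself, its space bound, and its $\ctO(|P|+\occ)$ search transfer almost verbatim from the analysis of the bd-anchors index, and the small-space enumeration of the relevant runs, though technical, is routine once one exploits Property~3 of $\mathcal{P}$.
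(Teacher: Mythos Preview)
Your proposal does not address the stated theorem. Theorem~\ref{the:part-set} is a cited result from~\cite{DBLP:conf/cpm/KosolobovS24} asserting the \emph{existence and efficient constructibility of a $\tau$-partitioning set}; the paper imports it as a black box and gives no proof. What you have written is a proof sketch for a different statement, namely Theorem~\ref{the:worst-case} (the worst-case index), which \emph{uses} Theorem~\ref{the:part-set} as a tool. A proof of Theorem~\ref{the:part-set} would instead have to exhibit a concrete anchor-selection rule and verify Properties~1--3 of Definition~\ref{def:part-set} together with the size, time, and space bounds; none of that appears in your write-up.

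If your intent was Theorem~\ref{the:worst-case}, then your outline is close in spirit to the paper's argument but organised differently. The paper takes $\tau=\lfloor\ell/5\rfloor$, builds a \emph{single} anchor set $\mathcal{A}=\mathcal{P}\cup\mathcal{L}$ where $\mathcal{L}$ is the set of endpoints of all $\tau$-runs (computed via Lemma~\ref{the:runs}, not by scanning gaps of $\mathcal{P}$), and constructs one pair of tries plus one 2D structure over $\mathcal{A}$. Queries are split into three explicit cases --- aperiodic, fully-periodic, and quasi-periodic --- with the aperiodic case using a hash table of KR fingerprints of all $S[i-\tau\dd i+\tau]$, $i\in\mathcal{P}$, to locate the anchor in $P$, and the fully-periodic case handled by sorted lists and priority search trees keyed by Lyndon root and exponent. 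Your version instead keeps two separate indexes (one on $\mathcal{P}$, one on run endpoints) and merges the aperiodic and quasi-periodic cases; this is workable, but your plan to ``determine the period with a read-only, $\cO(1)$-space procedure on a $\Theta(\tau)$-window and extend letter by letter'' to enumerate runs is not obviously $\ctO(n)$ total time in $\cO(n/\ell)$ space, whereas the paper simply invokes Lemma~\ref{the:runs}. Also, your anchor-finding step in case~(i) (``a canonically defined interior anchor \ldots depending only on $P$'') needs a concrete mechanism; the paper's KR-fingerprint hash table on $(2\tau{+}1)$-windows is exactly that mechanism, and without it Property~1 alone does not tell you \emph{which} position of $P$ to split at.
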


We now define the notion of $\tau$-runs: sufficiently long maximal fragments with the same periodic structure.

\begin{definition}[$\tau$-run~\cite{DBLP:conf/cpm/BathieCS24}]
A fragment $F$ of a string $S$ is a \emph{$\tau$-run} if and only if $|F|> 3\tau$,
$\textsf{per}(F) \leq \tau/4$, and $F$ cannot be extended in either direction without its period changing. The \emph{Lyndon root} of a $\tau$-run $F$ is the lexicographically smallest rotation of $F[1 \dd \textsf{per}(F)]$.
String $F$ can be expressed by $L$ and three non-negative integers $x_F,e_F,y_F$:   
$$F=L[|L|-x_F+1\dd |L|]\cdot L^{e_F}\cdot L[1\dd y_F].$$
\end{definition}

\begin{example}
For $F=\texttt{abaaabaaabaaaba}$ with $\textsf{per}(F)=4$, the Lyndon root is $L=\texttt{aaab}$. 
Specifically, we have $x_F=2$, $e_F=3$, and $y_F=1$:
$$F=\texttt{ab}\cdot \texttt{aaab}^{3}\cdot \texttt{a}.$$
\end{example}

We also state a few combinatorial and algorithmic results on $\tau$-runs that are crucial in our construction.

\begin{lemma}[\cite{DBLP:journals/corr/abs-2105-03106}]\label{lem:runs}
Two $\tau$-runs can overlap by at most $\tau/2$ positions.
The number of $\tau$-runs in a string of length $n$ is $\cO(n/\tau)$.    
\end{lemma}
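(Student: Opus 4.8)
The plan is to derive both parts from the Fine--Wilf periodicity lemma together with the maximality built into the definition of a $\tau$-run. It is convenient to record first a simple consequence of maximality: if $F$ is a $\tau$-run with $p=\textsf{per}(F)$, then $F$ cannot be a proper prefix, nor a proper suffix, of any longer fragment $F'$ of $S$ that has period $p$. Indeed, the one-letter extension $F^{+}$ of $F$ inside $F'$ has period $p$ (being a fragment of $F'$); if its smallest period were some $q<p$, then $F\subseteq F^{+}$ would have period $q$, contradicting $\textsf{per}(F)=p$; hence $\textsf{per}(F^{+})=p$, i.e.\ $F$ would have been extended without its period changing, contradicting that $F$ is a $\tau$-run. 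A second, equally elementary ingredient is a \emph{propagation} fact: if a string $W$ has period $p$ and some prefix (resp.\ suffix) $U$ of $W$ with $|U|\ge p$ has period $q$ with $q\mid p$, then $W$ has period $q$; this is immediate, since period $p$ makes $W[i]$ depend only on $i\bmod p$, and within $U$ period $q$ further collapses this to $i\bmod q$.

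Now fix two $\tau$-runs $F_1=S[a_1\dd b_1]$ and $F_2=S[a_2\dd b_2]$ with $p_i=\textsf{per}(F_i)\le\tau/4$, and suppose they overlap by more than $\tau/2$ positions. First I rule out that one is nested in the other; say $F_2\subsetneq F_1$ with $a_1\le a_2$, $b_2\le b_1$. Then $F_2$ is a fragment of $F_1$, hence has period $p_1$; since $|F_2|>3\tau\ge p_1+p_2$, the periodicity lemma applied to $F_2$ (with its two periods $p_1,p_2$) gives $\textsf{per}(F_2)=\gcd(p_1,p_2)$, so in particular $p_2\mid p_1$. If $a_1<a_2$, the prefix $G=S[a_1\dd b_2]$ of $F_1$ has period $p_1$ and its suffix $F_2$ (of length $>3\tau\ge p_1$) has period $p_2\mid p_1$, so by propagation $G$ has period $p_2$; then $F_2$ is a proper suffix of $G$, which has period $p_2=\textsf{per}(F_2)$, contradicting the observation above. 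If instead $a_1=a_2$ (so $b_2<b_1$), the same propagation with $G=F_1$ shows $F_1$ has period $p_2<p_1=\textsf{per}(F_1)$, absurd. Hence the two runs overlap properly; after swapping names assume $a_1<a_2<b_1<b_2$, so the overlap $R=S[a_2\dd b_1]$ is a suffix of $F_1$ and a prefix of $F_2$, with $|R|>\tau/2\ge p_1+p_2$. By the periodicity lemma $R$ has period $g=\gcd(p_1,p_2)$. If $p_1\ne p_2$, say $p_1<p_2$, then $g\mid p_2$ and $g<p_2$, and $R$ is a prefix of $F_2$ of length $\ge g+p_2$ with period $g$, so propagation forces $\textsf{per}(F_2)=g<p_2$, a contradiction. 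Therefore $p_1=p_2=:p$. Then $F_1$ and $F_2$ both have period $p$ and overlap by at least $p$ positions, so their union $S[a_1\dd b_2]$ has period $p$; since $b_2>b_1$, $F_1$ is a proper prefix of this union and has period $p=\textsf{per}(F_1)$, contradicting the observation once more. This proves that two $\tau$-runs overlap by at most $\tau/2$ positions (and, en route, that no $\tau$-run contains another).

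For the second part, list the $\tau$-runs $F_1,F_2,\dots$ in increasing order of starting position; by the no-nesting property these starting positions are pairwise distinct. For consecutive runs $F_t=S[a_t\dd b_t]$ and $F_{t+1}=S[a_{t+1}\dd b_{t+1}]$ with $a_t<a_{t+1}$, no-nesting gives $b_{t+1}>b_t$, so the overlap equals $b_t-a_{t+1}+1\le\tau/2$; combined with $|F_t|>3\tau$ this yields $a_{t+1}>b_t-\tau/2\ge a_t+3\tau-\tau/2>a_t+2\tau$ (and even more when $F_t,F_{t+1}$ are disjoint). Hence successive starting positions are more than $2\tau$ apart, so the number of $\tau$-runs is at most $n/(2\tau)+1=\cO(n/\tau)$.

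The main obstacle is the bookkeeping around maximality: one must be careful that ``$F$ cannot be extended without its period changing'' is applied correctly --- in particular, ruling out that a one-letter extension secretly has a \emph{smaller} smallest period --- which is exactly what the preliminary observation pins down. Once that is in place, everything else is repeated, routine use of the periodicity lemma and of the propagation fact.
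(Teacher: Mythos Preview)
The paper does not prove this lemma; it is quoted from~\cite{DBLP:journals/corr/abs-2105-03106} and used as a black box, so there is no in-paper argument to compare against. Your proof is correct and is the standard one: Fine--Wilf applied to the overlap, together with the maximality built into the definition of a $\tau$-run, forces a contradiction whenever two $\tau$-runs share more than $\tau/2$ positions; the no-nesting consequence then spaces consecutive start positions by more than $2\tau$, giving the $\cO(n/\tau)$ count.

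One cosmetic slip: in the nested subcase $a_1=a_2$, $b_2<b_1$, you conclude ``$F_1$ has period $p_2<p_1=\textsf{per}(F_1)$, absurd'', but from $p_2\mid p_1$ you have not yet excluded $p_2=p_1$. That residual case is even easier---then $F_2$ is a proper prefix of $F_1$, and $F_1$ already has period $p_1=p_2=\textsf{per}(F_2)$, which directly contradicts your preliminary maximality observation---so the argument goes through unchanged.
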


\begin{lemma}[\cite{DBLP:conf/cpm/BathieCS24}]\label{the:runs}
For any string $S$ of length $n$ over an integer alphabet of size $n^{\cO(1)}$ and any integer $\tau \in [4,\cO(n/ \log^2 n)]$, 
all $\tau$-runs in $S$ can be computed
and grouped by Lyndon root in $\ctO(n)$ time using $\cO(n/\tau)$ space. Within the same complexities, we can compute the first occurrence of the Lyndon root in each $\tau$-run.    
\end{lemma}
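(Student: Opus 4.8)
The plan is to exploit the $\tau$-partitioning set $\mathcal{P}$ of Theorem~\ref{the:part-set} as a sparse skeleton that pinpoints every $\tau$-run without ever materializing $\Theta(n)$ words. First I would compute $\mathcal{P}$ (size $\cO(n/\tau)$) in $\ctO(n)$ time and $\cO(n/\tau)$ space via Theorem~\ref{the:part-set}, keeping also its additional guarantee that a fragment of period at most $\tau/4$ contains no element of $\mathcal{P}$ in its $\tau$-shrunken core. The key structural claim I would establish is a (near-)bijection between $\tau$-runs and large $\mathcal{P}$-gaps. For every $\tau$-run $F=S[a\dd b]$ (so $|F|>3\tau$ and $p:=\textsf{per}(F)\le\tau/4$), the additional property forces $\mathcal{P}\cap[a+\tau,b-\tau]=\emptyset$, and since $|F|>3\tau$ this core interval has length $>\tau$; hence it lies inside a unique maximal gap $(p_k,p_{k+1})$ between consecutive elements of $\mathcal{P}$ with $p_{k+1}-p_k>\tau$. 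Conversely, Property~3 of Definition~\ref{def:part-set} guarantees that for any such large gap the fragment $G:=S[p_k\dd p_{k+1}]$ has a period $q\le\tau/4$. A Fine--Wilf argument on the long overlap of $G$ with the run containing its core (that overlap has length $>\tau\ge q+p$), combined with the minimality of $q$ and $p$ as \emph{smallest} periods, shows $q=p$ and $G\subseteq F$; therefore the maximal fragment containing $G$ that still has period $q$ is exactly $F$.

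So the algorithm is: scan the $\cO(n/\tau)$ consecutive pairs of $\mathcal{P}$, and for each pair with $p_{k+1}-p_k>\tau$ compute $q=\textsf{per}(G)$, extend $G$ left and right as long as the relation $S[\cdot]=S[\cdot\pm q]$ holds, obtain a maximal periodic fragment $F'$, and report $F'$ as a $\tau$-run iff $|F'|>3\tau$. By Lemma~\ref{lem:runs} the reported $F'$ overlap by at most $\tau/2$, so their total length — and hence the total extension cost — is $\cO(n)$; the gaps themselves are essentially disjoint, so the total period-finding cost is $\cO(n)$ as well. The delicate constraint is space: $\tau$, hence the period $p$, may be far larger than the $\cO(n/\tau)$ budget, so each per-gap subroutine must use only $\cO(1)$ \emph{extra} space. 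For $\textsf{per}(G)$ I would invoke a constant-extra-space string-periodicity routine (via critical factorizations in the style of Crochemore--Perrin, or the Galil--Seiferas machinery), which computes the smallest period in $\cO(|G|)$ time with read-only access to $S$. For the Lyndon root of a reported run $F=S[a\dd b]$ with period $p$, the root $L$ is the lexicographically smallest rotation of the length-$p$ fragment $S[a\dd a+p-1]$ (a Lyndon word, since $p$ is the smallest period, so this fragment is primitive and the least rotation is unique), which I would locate by a constant-space least-rotation/Lyndon-factorization routine (Duval's algorithm applied to the length-$p$ window indexed cyclically). Its output offset $j^\star\in[0,p-1]$ yields both a canonical position $a+j^\star$ of $L$ and, since inside $F$ every length-$p$ window is a rotation of $S[a\dd a+p-1]$, the first occurrence of $L$ within the run, namely $a+j^\star$. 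Summing $\cO(p)$ over all runs gives $\cO(n)$.

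It remains to group the $\cO(n/\tau)$ runs by Lyndon root. I would represent each root by the triple (first-occurrence position $a+j^\star$, period $p$, Karp--Rabin fingerprint of $S[a+j^\star\dd a+j^\star+p-1]$ computed directly in $\cO(p)$ time and $\cO(1)$ space), sort the triples by fingerprint in $\ctO(n/\tau)$ time, and within each fingerprint class confirm equality by a single verification pass over the length-$p$ fragments (total time $\ctO(n)$, since all fragments are substrings of the read-only $S$); a deterministic alternative is to skip fingerprints and directly MSD-radix/string-sort the $\cO(n/\tau)$ length-$p$ fragments, which runs in $\cO(n+(n/\tau)\log(n/\tau))$ time and $\cO(n/\tau)$ space as only $S$ is accessed, never copied. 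A final pass deduplicates any run reported from more than one gap. Building $\mathcal{P}$, scanning gaps with the constant-space subroutines, and sorting/grouping all run in $\ctO(n)$ time and $\cO(n/\tau)$ working space, which gives the lemma.

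The hard part, I expect, is twofold. First, nailing the combinatorial correspondence: arguing from the two periodicity guarantees of Theorem~\ref{the:part-set} together with the run-overlap bound of Lemma~\ref{lem:runs} that large $\mathcal{P}$-gaps are (essentially) in bijection with $\tau$-runs, that the naive period-$q$ extension from a gap recovers precisely the run (in particular $q=\textsf{per}(F)$ and $G\subseteq F$ via Fine--Wilf and smallest-period minimality), and that consequently no $\tau$-run is missed and none is double-counted. Second, discharging every per-run computation — smallest period, least rotation, fingerprinting, grouping — within $\cO(1)$ extra (or $\cO(n/\tau)$ shared) space despite $p$ possibly being $\Theta(\tau)$, which is exactly why constant-space string algorithms (Crochemore--Perrin, Duval, Galil--Seiferas) are the right tool here rather than the textbook $\cO(m)$-space KMP/Booth routines.
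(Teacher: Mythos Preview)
The paper does not prove this lemma at all: it is stated with a citation to \cite{DBLP:conf/cpm/BathieCS24} and used as a black box in the construction of Section~\ref{sec:worst-case}. So there is no ``paper's own proof'' to compare against.

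Your reconstruction is essentially the right one and is in the spirit of how such results are obtained via synchronizing/partitioning sets. A few places deserve more care than you give them. First, the direction ``$G\subseteq F$'' in your Fine--Wilf step is not immediate: you should argue it by contradiction. If, say, $p_{k+1}>b$, then $S[b-\tau\dd b]$ lies in both $G$ and $F$, hence has periods $q$ and $p$; Fine--Wilf gives it period $d=\gcd(p,q)$, and since $b+1-q$ and $b+1-p$ both lie in $[b-\tau,b]$ and differ by a multiple of $d$, one gets $S[b+1]=S[b+1-q]=S[b+1-p]$, contradicting maximality of $F$. Only after $G\subseteq F$ is established does $|G|>\tau\ge 2p$ force $\textsf{per}(G)=p$ (else Fine--Wilf would propagate a smaller period from $G$ to all of $F$). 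Second, once you know $q=\textsf{per}(G)$, the maximal period-$q$ extension $F'$ automatically satisfies $\textsf{per}(F')=q$ because $\textsf{per}(F')\le q$ (it has period $q$) and $\textsf{per}(F')\ge\textsf{per}(G)=q$ (as $G\subseteq F'$); hence whenever $|F'|>3\tau$ the output really is a $\tau$-run, not merely contained in one. Third, your ``near-bijection'' is genuinely only near: a single $\tau$-run can abut several large gaps (elements of $\mathcal{P}$ may sit in $[a,a+\tau-1]\cup[b-\tau+1,b]$), so the deduplication pass is necessary, not optional. With these refinements your argument goes through; the constant-extra-space period and least-rotation subroutines you name are exactly what is needed to respect the $\cO(n/\tau)$ budget.
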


\paragraph{Data structure} Let $\ell$ be an integer in $[20,\lfloor n/\log^2 n \rfloor]$ and let $\tau = \lfloor\ell/5\rfloor$. 
We use Theorem~\ref{the:part-set} and Lemma~\ref{the:runs} with parameter $\tau$ to compute a
partitioning set $\mathcal{P}$ of size $\cO(n/\tau)$ and all $\tau$-runs in $S$.
For every $\tau$-run $T$ of group $L$, we have the first occurrence of its Lyndon root.
We can thus compute the non-negative integers $x_T,y_T<|L|$ and $e_T$, such that $T=L[|L|-x_T+1\dd |L|]\cdot L^{e_T}\cdot L[1\dd y_T]$, in constant time.
We maintain the lists of pairs $(e_T,x_T)$ and $(e_T,y_T)$ for all $T$, sorted, for every $L$ separately. 
This can be done in $\ctO(n)$ time and $\cO(n/\tau)$ space using merge sort.
To access the lists of $L$, we index the collection of lists by the KR fingerprint of $L$. 
We also maintain a priority search tree~\cite{DBLP:journals/siamcomp/McCreight85} over points $(x_T,y_T)$, for every $(L,e_T)$ separately, on the grid $[1,|L|]^2$. This can be done in $\ctO(n/\tau)$ time and $\cO(n/\tau)$ space. 
To access a specific priority search tree for $L$, we further index the collection of trees by $e_T$.
For every $i \in \mathcal{P}$ we also store the KR fingerprint  (see Section~\ref{sec:rr-bd-anchors}) of $S[i - \tau \dd i + \tau]$ in a hash table $H$. This can be done in $\cO(n)$ time using $\cO(|\mathcal{P}|)$ space. Let $\mathcal{L}$ be a set that contains the starting and ending position of each $\tau$-run.
We construct $\mathcal{A}:=\mathcal{P} \cup \mathcal{L}$ and construct the index of Section~\ref{sec:index} over string $S$ and $\mathcal{A}$.
By Lemma~\ref{lem:runs} and Theorem~\ref{the:part-set} the size of $\mathcal{A}$ is $\cO(n/\tau)$ in the worst case. By Theorem~\ref{the:part-set} and Theorem \ref{the:runs}, $\mathcal{A}$ is constructed in $\ctO(n)$ time. By Theorem~\ref{the:index}, the index over string $S$ and $\mathcal{A}$ is constructed in $\ctO(n)$ time and occupies $\cO(|\mathcal{A}|)=\cO(n/\tau)$ extra space. The working space for the two compacted tries is also $\cO(|\mathcal{A}|)=\cO(n/\tau)$ by employing the following result on sparse suffix sorting.

\begin{theorem}[\cite{DBLP:conf/soda/BirenzwigeGP20,DBLP:conf/cpm/KosolobovS24}]\label{the:SST}
For any string $S$ of length $n$ over an integer alphabet of size $n^{\cO(1)}$, we can construct the compacted trie of $b$ arbitrary suffixes of $S$ in $\ctO(n)$ time using $\cO(b)$ space.
\end{theorem}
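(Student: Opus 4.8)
The plan is to reduce the construction of the sparse suffix tree to answering longest-common-extension queries among the $b$ chosen suffixes, and then to implement those queries in $\cO(b)$ working space with a \emph{sampled} set of Karp--Rabin fingerprints. Write the chosen positions as $i_1<\dots<i_b$. Suppose we can, for any two of them, compare the suffixes $S[i_a\dd n]$ and $S[i_c\dd n]$ lexicographically, and also compute $\LCP_S(i_a,i_c)$: this takes one $\LCP_S$ query plus (when neither suffix is a prefix of the other) a single letter comparison at the mismatch. Then a comparison-based sort performs $\cO(b\log b)$ such comparisons and outputs the chosen suffixes in lexicographic order; a further $b-1$ queries on consecutive pairs in that order give the length of the longest common prefix of each consecutive pair. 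From the sorted order and these $b-1$ \textsf{LCP} values the compacted trie is built in $\cO(b)$ time and space by the folklore incremental algorithm (cf.~\cite{DBLP:conf/cpm/KasaiLAAP01}), representing every edge label as a pointer into $S$ (a chosen starting position and a length). Hence it suffices to support $\LCP_S$ queries between chosen positions within $\ctO(n)$ preprocessing time and $\cO(b)$ working space, given random read-only access to $S$.

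Second, I would implement $\LCP_S(i,j)$ by binary search on fingerprints: with high probability $\LCP_S(i,j)$ is the largest $\lambda$ such that $\phi_S(i,i+\lambda-1)=\phi_S(j,j+\lambda-1)$, found with $\cO(\log n)$ fingerprint comparisons. The one difficulty is that evaluating $\phi_S$ on an arbitrary fragment normally requires the length-$n$ array of prefix fingerprints. To stay within $\cO(b)$ space I would instead store only the $b+1$ \emph{sampled} prefix fingerprints $\phi_S(1,t\lceil n/b\rceil)$ for $t=0,\dots,b$, computed in a single left-to-right scan of $S$ in $\cO(n)$ time. Any $\phi_S(a,c)$ is then recovered from the nearest samples not exceeding $a-1$ and $c$ by scanning $\cO(n/b)$ extra letters of $S$ (each step a multiply-by-$r$-add modulo $p$) and applying the standard splicing identity $\phi_S(a,c)=\phi_S(1,c)-r^{\,c-a+1}\phi_S(1,a-1)\bmod p$, the needed powers of $r$ being obtained by fast exponentiation. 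Thus one fingerprint query costs $\cO(n/b)$ time, one $\LCP_S$ query costs $\cO((n/b)\log n+\log^2 n)$, and the entire sort costs $\cO(b\log b\cdot((n/b)\log n+\log^2 n))=\ctO(n)$ since $b\le n$. Choosing $p$ prime with $p>\max(\sigma,n^{c+3})$ exactly as in Section~\ref{sec:rr-bd-anchors} makes the fingerprint function collision-free over all $\ctO(n)$ fragment comparisons performed with probability at least $1-n^{-c}$, so the construction is Monte Carlo correct with high probability, matching the cited works.

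The step I expect to be the main obstacle is squeezing arbitrary substring-fingerprint evaluation into $\cO(b)$ space; the sampling trick above resolves it at the price of an $\cO(n/b)$-per-query blow-up that is absorbed by the $\ctO$ notation. The references \cite{DBLP:conf/soda/BirenzwigeGP20,DBLP:conf/cpm/KosolobovS24} eliminate even that blow-up by replacing the fingerprint layer with a hierarchical \emph{locally consistent parsing} of $S$: one builds $\cO(\log n)$ levels of a size-$\cO(b)$ parse restricted to neighborhoods of the chosen positions, so equal contexts are parsed identically and two chosen suffixes are compared by walking their $\cO(\log n)$-length signatures, giving $\cO(n)$ construction time; a Las Vegas variant additionally verifies in $\cO(b)$ space that the produced order is internally consistent. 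For Theorem~\ref{the:SST} as stated the simpler fingerprint-sampling route already suffices, and the remaining details (the splicing identities for $\phi_S$, and the incremental trie construction from sorted suffixes and \textsf{LCP} values) are routine.
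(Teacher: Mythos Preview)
The paper does not prove Theorem~\ref{the:SST}; it merely cites it as a black box from~\cite{DBLP:conf/soda/BirenzwigeGP20,DBLP:conf/cpm/KosolobovS24} and invokes it once in Section~\ref{sec:worst-case}. So there is no ``paper's own proof'' to compare against.

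Your argument is correct and self-contained for a Monte Carlo version of the statement: sampling $b{+}1$ prefix fingerprints, recovering an arbitrary $\phi_S(a,c)$ in $\cO(n/b)$ time via splicing, binary-searching the LCP in $\cO(\log n)$ fingerprint probes, and comparison-sorting the $b$ suffixes indeed gives $\ctO(n)$ total time and $\cO(b)$ space. This is essentially the I--K\"arkk\"ainen--Kempa route~\cite{DBLP:conf/stacs/IKK14}, and since the overall construction of Section~\ref{sec:worst-case} is already declared ``randomized (correct with high probability) due to the use of KR fingerprints'', your Monte Carlo guarantee is adequate for how the theorem is used here.

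The route taken by the cited references is genuinely different: both~\cite{DBLP:conf/soda/BirenzwigeGP20} and~\cite{DBLP:conf/cpm/KosolobovS24} avoid fingerprints altogether and instead build a locally consistent parsing (partitioning sets in the latter) restricted to the neighborhoods of the $b$ chosen positions, which lets them compare suffixes via $\cO(\log n)$-length parse signatures and, crucially, yields \emph{deterministic} $\ctO(n)$ time in $\cO(b)$ space. What their approach buys is the removal of the Monte Carlo caveat (and of the $\cO(n/b)$ per-query overhead you absorb into $\ctO$); what your approach buys is simplicity and a short, verifiable argument using only tools already introduced in Section~\ref{sec:rr-bd-anchors}. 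Your closing paragraph summarizes this trade-off accurately; just be aware that ``matching the cited works'' slightly overstates the case, since those works are deterministic whereas yours is Monte Carlo.
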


The working space for the accompanying 2D range reporting data structure is also $\cO(|\mathcal{A}|)=\cO(n/\tau)$ by using the data structure of M{\"{a}}kinen and Navarro~\cite{DBLP:conf/latin/MakinenN06}.

This completes the construction of the index.

\paragraph{Querying} We are given a query $P$, $|P|\geq \ell$. 
Let us give an overview of the querying part:
if $P$ occurs in $S$, then it should be because it contains a sufficiently long aperiodic fragment, for which we have an anchor stored in $\mathcal{P}\subseteq \mathcal{A}$, or because it has a sufficiently long periodic fragment, which occurs also in $S$. The latter case is split in the following two subcases: (i) $P$ has a periodic fragment shorter than $P$, in which case, we use one of its endpoints as an anchor stored in $\mathcal{L}\subseteq \mathcal{A}$; or the whole of $P$ is periodic, in which case, we have no anchor to use, but we can use the Lyndon root of $P$ to find its occurrences in $S$. We thus start by computing $\textsf{per}(P)$ in $\cO(|P|)$ time~\cite{DBLP:books/daglib/0020103} and proceed as follows:

\begin{description}
    \item[\textbf{Aperiodic case}:] If $\textsf{per}(P)> \tau/4$, we compute the KR fingerprint of every fragment $P[j - \tau \dd j + \tau]$ of $P$. If a KR fingerprint is found in $H$, we spell $P[j\dd |P|]$ and $\overleftarrow{P[1\dd j]}$ in the two compacted tries, and obtain the occurrences of $P$ in $S$ by using a 2D range reporting query just as in Section~\ref{sec:index}. The total query time is $\ctO(|P|+\occ)$. 
    \item[\textbf{Fully-periodic case}:] In this case, $\textsf{per}(P)\leq \tau/4$.
    We start by computing the Lyndon root $L$ of $P$ and the non-negative integers $x_P,y_P<|L|$ and $e_P$ such that $P=L[|L|-x_P+1\dd |L|]\cdot L^{e_P}\cdot L[1\dd y_P]$ in $\cO(|P|)$ time~\cite{DBLP:conf/cpm/BathieCS24}. Let $T$ be a $\tau$-run in $S$ with the same $L$.
    Further let $x_T,y_T<|L|$ and $e_T$ such that $T=L[|L|-x_T+1\dd |L|]\cdot L^{e_T}\cdot L[1\dd y_T]$. $P$ occurs in $T$ if and only if at least one of the following conditions is met~\cite{DBLP:conf/cpm/BathieCS24}: (1) $e_P = e_T$, $x_P \leq x_T$, and $y_P \leq y_T$; or (2) $e_P + 1 = e_T$ and $x_P \leq x_T$; or (3) $e_P + 1 = e_T$ and $y_P \leq y_T$; or (4) $e_P + 2 \leq e_T$. We first locate the sorted lists for the $\tau$-runs $T$ with Lyndon root $L$ in $S$,
    in $\cO(|L|)$ time, by computing the KR fingerprint  of $L$, as well as the corresponding priority search trees.
    Each valid $T$ is located in $\ctO(1)$ time with binary search on the stored lists or with one of the priority search trees. For instance, for checking the first condition, we locate the priority search tree for $L$ and $e_P$ and search for
    the points included in the axis-aligned rectangle $[x_P,\infty)\times [y_P,\infty)$.
    All $s$ such points are found in $\cO(\log n + s)$ time~\cite{DBLP:journals/siamcomp/McCreight85}.
    Let $(x_T,y_T)$ be a retrieved point corresponding to $\tau$-run $T$ of $S$.
    This ensures that $e_P = e_T$, $x_P \leq x_T$, and $y_P \leq y_T$.
    The other three conditions are trivially checked with binary search.
    Since the length of $T$ is $x_T+|L| \cdot e_T+y_T$; we know the length of $P$; and we also know the first occurrence of $L$ in both $T$ and $P$, we can easily report all $k$ occurrences of $P$ in $T$ in $\cO(k)$ extra time. The total query time is $\ctO(|P|+\occ)$.
    \item[\textbf{Quasi-periodic case}:] It could be the case that $P$ is not highly-periodic itself (i.e., $\textsf{per}(P)> \tau/4$) but contains a long highly-periodic fragment that prevents us from finding a fragment $P[j - \tau \dd j + \tau]$ with a KR fingerprint in $H$ (see Property 3, Definition~\ref{def:part-set}). Inspect Figure~\ref{fig:quasi-periodic}. By the properties of $\mathcal{P}$, we know that if $P$ occurs in $S$ then $\textsf{per}(P[\tau\dd |P|-\tau+1])\leq \tau/4$; note that, otherwise, we would have found at least one fragment of length $2\tau+1$ whose KR fingerprint is in $H$.
    \begin{figure}[t]
        \centering
        \includegraphics[width=0.45\linewidth]{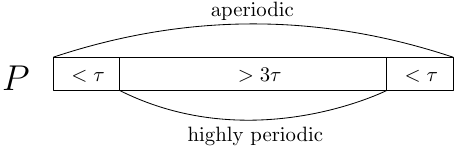}
        \caption{The quasi-periodic case: in this setting, the long highly-periodic fragment is neither a prefix nor a suffix, and so we can use any of its endpoints as an anchor.}
        \label{fig:quasi-periodic}
    \end{figure}
    \begin{example}
Let $\ell=20$, $\tau=4$, and $|P|=\ell$. Further assume that $P$ occurs in $S$ but $P[i-\tau \dd i+\tau]$, for all $i\in[5,16]$ was not found in $H$. Property 3 of set $\mathcal{P}$ tells us that if $P$ occurs in $S$ then $P[4 \dd 17]=P[\tau\dd |P|-\tau+1]$ has period at most $\tau/4$.  
\end{example}
In $\cO(|P|)$ time, we check if $\textsf{per}(P[\tau\dd |P|-\tau+1])\leq \tau/4$~\cite{DBLP:books/daglib/0020103}, and if so, we extend periodicity on both sides as much as possible, hence computing a periodic fragment $P[i\dd j]$, $i\leq \tau$ and $j\geq |P|-\tau+1$. If $i>1$, we construct $P[i\dd |P|]$ and $\overleftarrow{P[1\dd i]}$; otherwise, if $j<|P|$, we construct $P[j\dd |P|]$ and $\overleftarrow{P[1\dd j]}$. 
Note that it cannot be that both $i=1$ \emph{and} $j=|P|$, because then we would be in the previous case (fully-periodic case). We then spell the two sub-patterns in the two compacted tries, and obtain the occurrences of $P$ in $S$ by using a 2D range reporting query just as in Section~\ref{sec:index}. The total query time is $\ctO(|P|+\occ)$.
\end{description}

Note that if any of the above three cases is successful in finding occurrences of $P$ in $S$ we stop, because we would in fact have found \emph{all} the occurrences of $P$ in $S$. On the other hand, if none of the above cases is successful, then it means that $P$ does not occur in $S$.

Finally, let us handle the edge cases. Recall that, in the above, we have assumed that $\ell$ is an integer in $[20,\lfloor n/\log^2 n \rfloor]$. If $\ell<20=\cO(1)$, we simply construct the suffix tree of $S$ (as our only index) in $\cO(n)$ time~\cite{DBLP:conf/focs/Farach97}. The space occupied by the suffix tree is $\cO(n)$, which for any $\ell=\cO(1)$, is in $\cO(n/\ell)$. If $\ell>\lfloor n/\log^2 n \rfloor$ then we do not construct any text index. Upon a query $P$, with $|P|>\lfloor n/\log^2 n \rfloor$, we use a real-time constant-space string matching algorithm~\cite{DBLP:journals/tcs/BreslauerGM13} to report all occurrences of $P$ in $S$ in $\cO(n)$ time, in which case we have that $\cO(n)=\ctO(|P|+\occ)$. We have arrived at Theorem~\ref{the:worst-case}.

Our construction is randomized (correct with high probability) due to the use of KR fingerprints.

\section{Implementations}\label{sec:imp}
In this section, we provide the full details of our implementations, which have all been written in \texttt{C++}. 
In addition to the classic text indexes referred to in Section~\ref{sec:intro}, we have considered the r-index~\cite{DBLP:journals/jacm/GagieNP20}, a text index, which performs specifically well for highly repetitive text collections. We did not consider: \textbf{(i)} the suffix tree, as it is not competitive at all with respect to space; or \textbf{(ii)} sampling the suffix array with minimizers~\cite{DBLP:journals/spe/GrabowskiR17}, as their number is generally greater than (reduced) bd-anchors; see the results of~\cite{DBLP:conf/esa/LoukidesP21,TKDE2023}. As we show later in experiments (Section~\ref{sec:exp}), the set of randomized reduced bd-anchors can be computed \emph{faster} and using \emph{less space} than the set of reduced bd-anchors. Moreover, we show that the former set is \emph{always smaller} than the latter. We have thus used randomized reduced bd-anchors instead of reduced bd-anchors to construct our indexes. We have also not implemented our index with worst-case guarantees (Section~\ref{sec:worst-case}) as it relies on partitioning sets~\cite{DBLP:conf/cpm/KosolobovS24}, an extremely intricate sampling scheme, which is highly unlikely to be efficient in practice,\footnote{Personal communication with Dmitry Kosolobov.} as well as on the involved querying scheme consisting of many cases, which is also highly unlikely to be efficient in practice.

Since our focus is on algorithmic ideas (not on low-level code optimization) and to ensure fairness across different implementations, we have re-used the same algorithm or code whenever it was possible. For instance, although many other engineered implementations for classic text indexes exist, e.g.~\cite{DBLP:journals/algorithmica/GogKKPP19} for FM-index or~\cite{DBLP:conf/esa/BertramE021} for suffix array, we have based our implementations on \textsf{sdsl-lite}~\cite{sdsl} as much as possible for fairness:
\begin{description}
\item[\textsf{SA}:] The construct SA we used the \texttt{divsufsort} class, written by Yuta Mori, as included in \textsf{sdsl-lite}. To report all  occurrences of a pattern, we implemented the algorithm of Manber and Myers~\cite{DBLP:journals/siamcomp/ManberM93}
that uses as well the LCP array~\cite{DBLP:conf/cpm/KasaiLAAP01} augmented with a succinct RMQ data structure~\cite{DBLP:journals/siamcomp/FischerH11} (\texttt{rmq\_succinct\_sct} class).  
\item[\textsf{CSA}:] The CSA was constructed using the \texttt{csa\_sada} class of \textsf{sdsl-lite}. To report all occurrences of a pattern, we used the \textsf{SA} method. 
\item[\textsf{CST}:] The CST was constructed using the \texttt{cst\_sct3} class~\cite{DBLP:conf/spire/OhlebuschFG10} of \textsf{sdsl-lite}. To report all occurrences of a pattern, we traverse the tree (see Section~\ref{sec:prel}) by using its supported functionality. 
\item[\textsf{FM-index}:] The FM-index was implemented using the \texttt{csa\_wt} class of \textsf{sdsl-lite}. To report all occurrences of a pattern, the supported \texttt{count} and \texttt{locate} methods of \textsf{sdsl-lite} were used. 
\item[\textsf{r-index}:] The \textsf{r-index} is not part of \textsf{sdsl-lite} and so we had to resort to the open-source implementation provided by Gagie et al.~in~\cite{DBLP:journals/jacm/GagieNP20}. The binary \texttt{ri-build} was used to build the index (using the \texttt{divsufsort} class of the \textsf{sdsl-lite}) and then \texttt{ri-locate} was used to locate all occurrences of the patterns.
\item[\textsf{rBDA-compute}:] We implemented the  algorithm for computing the set of reduced bd-anchors as described in Section~\ref{sec:bd-anchors}. To improve construction space, we implemented the algorithm in fragments as described in Step 1 of the space-efficient algorithm presented in Section~\ref{sec:small}. The implementation takes the fragment length $b$ as input. We call these fragments \emph{blocks} henceforth.
\item[\textsf{rrBDA-compute}:] Similar to the above, we implemented the algorithm for computing the set of randomized reduced bd-anchors as described in Section~\ref{sec:bd-anchors}. The implementation takes the block length $b$ as input. As we will show in Section~\ref{sec:rbd-anchors_results}, randomized reduced bd-anchors performed consistently better than reduced bd-anchors. Therefore, we have based our index construction implementations, presented right below, on the notion of randomized reduced bd-anchors.

\item[\textsf{rrBDA-index I (ext)}:] The construction was implemented using Steps 1 to 3 as described in Section~\ref{sec:small}. For Step 2, we used the \textsf{pSAscan} algorithm of K{\"{a}}rkk{\"{a}}inen et al.~\cite{DBLP:conf/cpm/KarkkainenKP15a} to construct the \textsf{SA} in EM and the \textsf{EM-SparsePhi} algorithm of K{\"{a}}rkk{\"{a}}inen and Kempa~\cite{DBLP:journals/jea/KarkkainenK19} to compute the \textsf{LCP} array in EM. For Step 3, instead of constructing the compacted tries, we used the arrays \textsf{LSA}, \textsf{LLCP}, \textsf{RSA}, and \textsf{RLCP} directly. The \textsf{LLCP} and \textsf{RLCP} arrays were augmented each with an RMQ data structure (similar to \textsf{SA}). For Step 4, we implemented the 2D range reporting data structure of M{\"{a}}kinen and Navarro~\cite{DBLP:conf/latin/MakinenN06}. To report all occurrences of a pattern, we implemented the algorithm of Loukides and Pissis~\cite{DBLP:conf/esa/LoukidesP21} (see Section~\ref{sec:index}).

\item[\textsf{rrBDA-index I (int)}:] For the internal-memory construction,
we implemented Step 3 using the \textsf{PA} algorithm for sparse suffix sorting of Ayad et al.~from~\cite{DBLP:conf/latin/AyadLPV24}. Sparse suffix sorting is not included in \textsf{sdsl-lite}, and so we had to resort to \textsf{PA}, being the state of the art.

\item[\textsf{rrBDA-index II (ext)}:] The construction was implemented using Steps 1 to 3 as in \textsf{rrBDA-index I (ext)}. Unlike \textsf{rrBDA-index I (ext)}, however, no 2D range reporting data structure was used. We instead used the bidirectional search algorithm presented by Loukides and Pissis in~\cite{DBLP:conf/esa/LoukidesP21}. This algorithm reports all occurrences of a pattern $P$, by first searching for either $P[j\dd |P|]$ or $\overleftarrow{P[1\dd j]}$ using the four arrays, where $j$ is the reduced bd-anchor of $P[1\dd \ell]$; and then using letter comparisons to verify the remaining part of candidate occurrences. This index was the most efficient one in practice in~\cite{DBLP:conf/esa/LoukidesP21}; however, the query time is not bounded in the worst case.

\item[\textsf{rrBDA-index II (int)}:] Similar to \textsf{rrBDA-index I}, for the internal-memory construction, we implemented Step 3 using the \textsf{PA} algorithm of Ayad et al.~from~\cite{DBLP:conf/latin/AyadLPV24}.

\end{description}

\section{Experiments}\label{sec:exp}

\subsection{Experimental setup} We downloaded five  datasets (texts) from the \textsf{Pizza\&Chili} corpus~\cite{pizzachili} -- see Table~\ref{tab:datasets} for their characteristics. We generated patterns of length 32, 64, 128, 256, 512, and 1024 for all datasets with 500,000 distinct patterns created per length. The patterns were generated by selecting occurrences uniformly at random from the datasets. We also downloaded the \emph{Homo sapiens} genome (version GRCh38.p14), which we denote by HUMAN. Its length $n$ is 3,136,819,257 and the alphabet size $\sigma$ is 15. We generated patterns of length 64, 256, 1,024, 4,096, and 16,384 for HUMAN with 500,000 distinct patterns created per length. Similar to the \textsf{Pizza\&Chili} datasets, these patterns were generated by selecting occurrences uniformly at random from HUMAN. 

The experiments ran on an Intel Xeon Gold 648 at 2.5GHz with 192GB RAM. All programs were compiled with \texttt{g++} version \texttt{10.2.0} at the \texttt{-O3} optimization level. Our code and datasets are freely available at \url{https://github.com/lorrainea/rrBDA-index}. 

\begin{table}[t]
    \centering\scalebox{1}{
    \begin{tabular}{|c|c|c|} \hline
         Dataset & Length $n$ & Alphabet size $\sigma$ \\ \hline \hline
         DNA & 209,714,087 & 15  \\ \hline
         PROTEINS & 209,006,085 & 24 \\ \hline
         XML & 204,198,133 & 95 \\ \hline
         SOURCES & 166,552,125 & 225 \\ \hline
         ENGLISH & 205,627,746 & 222 \\ \hline
         HUMAN & 3,136,819,257 & 15 \\ \hline
    \end{tabular}}
    \caption{Datasets characteristics.}
    \label{tab:datasets}
\end{table}

\paragraph{Parameters}
For \textsf{rBDA-compute}, \textsf{rrBDA-compute}, \textsf{rrBDA-index I} (\textsf{int} and \textsf{ext}) and \textsf{rrBDA-index II} (\textsf{int} and \textsf{ext}), we set $\ell$ to the pattern length, $b$ to $25$K, unless stated otherwise, and $r$ to  $\lceil4 \log\ell /\log \sigma\rceil$. We also set $M$ (RAM), used in \textsf{rrBDA-index I} (\textsf{ext}) and \textsf{rrBDA-index II} (\textsf{ext}), 
to $\Theta(|\mathcal{A}_{\ell,r}|)$ as this is the final size of index  $\mathcal{I}_{\ell,r}(S)$. 

\paragraph{Measures}
Four measures were used: 
index size; query time;
construction space; and construction time;  (see Section~\ref{sec:intro}). To measure the query and construction time, the \texttt{steady\_clock} class of \texttt{C++}  was used with the elapsed time measured in nanoseconds (ns). 
To measure the index size for each implementation as accurately as possible, the data structures required for querying the patterns were output to a file in secondary memory. The text size was not counted as part of the index size for any implementation. The construction space was measured by recording the maximum resident set size in kilobytes (KB) using the
\texttt{/usr/bin/time -v} command.

\paragraph{Results} In Sections~\ref{sec:rbd-anchors_results} to~\ref{sec:HS_results}, we present the results for the datasets of Table~\ref{tab:datasets}. To avoid cluttering the figures, we omit the results of \textsf{rrBDA-index I} (resp., \texttt{int} and \texttt{ext)}, as they were consistently outperformed by \textsf{rrBDA-index II} (resp., \texttt{int} and \texttt{ext)} in all four measures used.

\subsection{Comparing reduced bd-anchors to randomized reduced bd-anchors}\label{sec:rbd-anchors_results}

In the following, we compare reduced bd-anchors (\textsf{rBDA-compute}) to our new notion of randomized reduced bd-anchors (\textsf{rrBDA-compute}) along three dimensions (density, construction time, and construction space) using the first five datasets of Table~\ref{tab:datasets}. 

Figure~\ref{fig:rbda-construction-count} shows that the number of bd-anchors in both schemes, \textsf{rBDA-compute} and \textsf{rrBDA-compute}, decreases as $\ell$ increases, as expected. Furthermore, the trends of both schemes are the same. However, the randomized reduced bd-anchors are fewer than the reduced bd-anchors (and thus their density, i.e., their number divided by the string length $n$, is smaller), for every tested $\ell$ value and every tested dataset. Specifically, the number of randomized reduced bd-anchors is $17.8\%$ lower on average (over all datasets) and up to $46.8\%$ lower. The fact that randomized reduced bd-anchors are fewer is encouraging, since it leads to indexes with lower construction space.

Figure~\ref{fig:rbda-construction-time} shows that it is much more efficient to compute our notion of randomized reduced bd-anchors compared to reduced bd-anchors, and this is consistent over all tested $\ell$ values and datasets. Specifically, the time needed by \textsf{rrBDA-compute} was smaller than that of \textsf{rBDA-compute} by $32.4\%$ on average (over all datasets) and up to $77.6\%$ lower. On the other hand, the simple $\Theta(n\ell)$-time algorithm of~\cite{DBLP:conf/esa/LoukidesP21}, which computes reduced bd-anchors and is represented by the $\Theta(n\ell)$ line, is even slower than \textsf{rBDA-compute}. In particular, \textsf{rBDA-compute} becomes more than \emph{two orders of magnitude faster} than the $\Theta(n\ell)$ algorithm as $\ell$ increases. 

As suggested by Theorems~\ref{the:main} and~\ref{the:main2}, the runtime of \textsf{rBDA-compute} and of \textsf{rrBDA-compute} should not be affected by $\ell$ in the case of a uniformly random input string. Even if our datasets are real and thus not uniformly random, we observe that the runtime of \textsf{rBDA-compute} and \textsf{rrBDA-compute} in Figure~\ref{fig:rbda-construction-time} is not affected too much by the value of $\ell$; an exception to this is the case of SOURCES, which is explained by the fact that this dataset, in particular, is very far from being \emph{uniformly random}, which is assumed by our theoretical findings. Specifically, in this dataset, the number of $(w,k)$-minimizers is much larger than what is expected on average with increasing $\ell$, thus more ties need to be broken (Lemma~\ref{lem:main}), and this is why the runtime increases with $\ell$. 
Nevertheless, even in this bad case, \textsf{rBDA-compute} is up to $112\times$ faster than the implementation of the simple $\Theta(n\ell)$-time algorithm~\cite{DBLP:conf/esa/LoukidesP21}, and \textsf{rrBDA-compute} is up to $177\times$ faster.

\begin{figure*}[ht]
\begin{center}
\hspace{3.5cm}
\begin{subfigure}[b]{0.55\textwidth}
  \includegraphics[width=6cm]{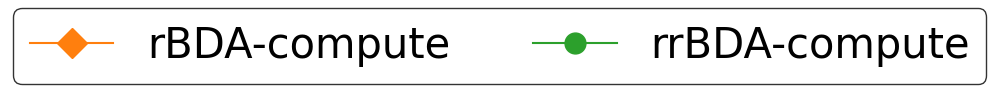}
 \end{subfigure}
 \end{center}
 
\subfloat[DNA]
  {\includegraphics[width=.2\linewidth]
  {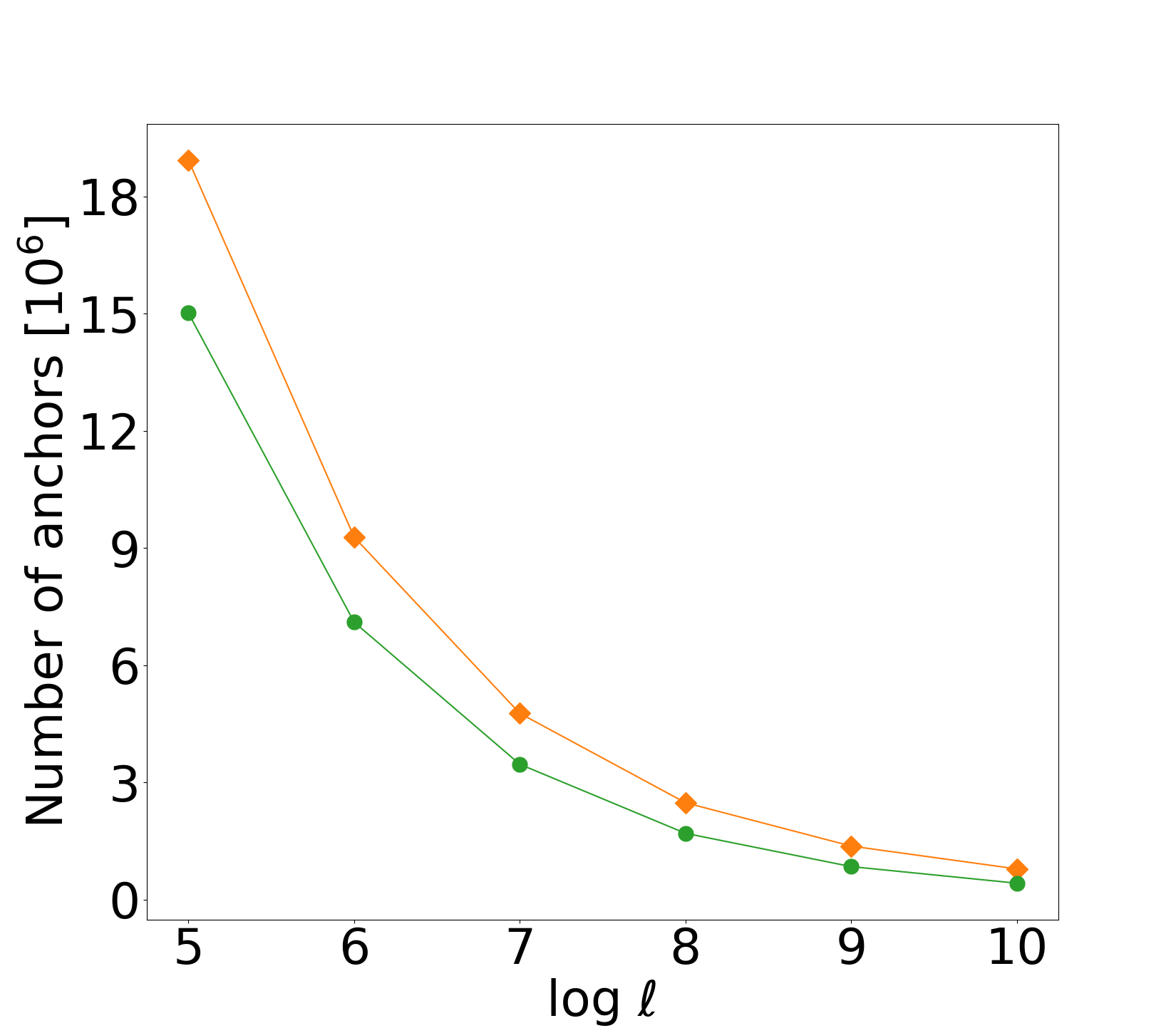}}
\subfloat[PROTEINS]
  {\includegraphics[width=.2\linewidth]
  {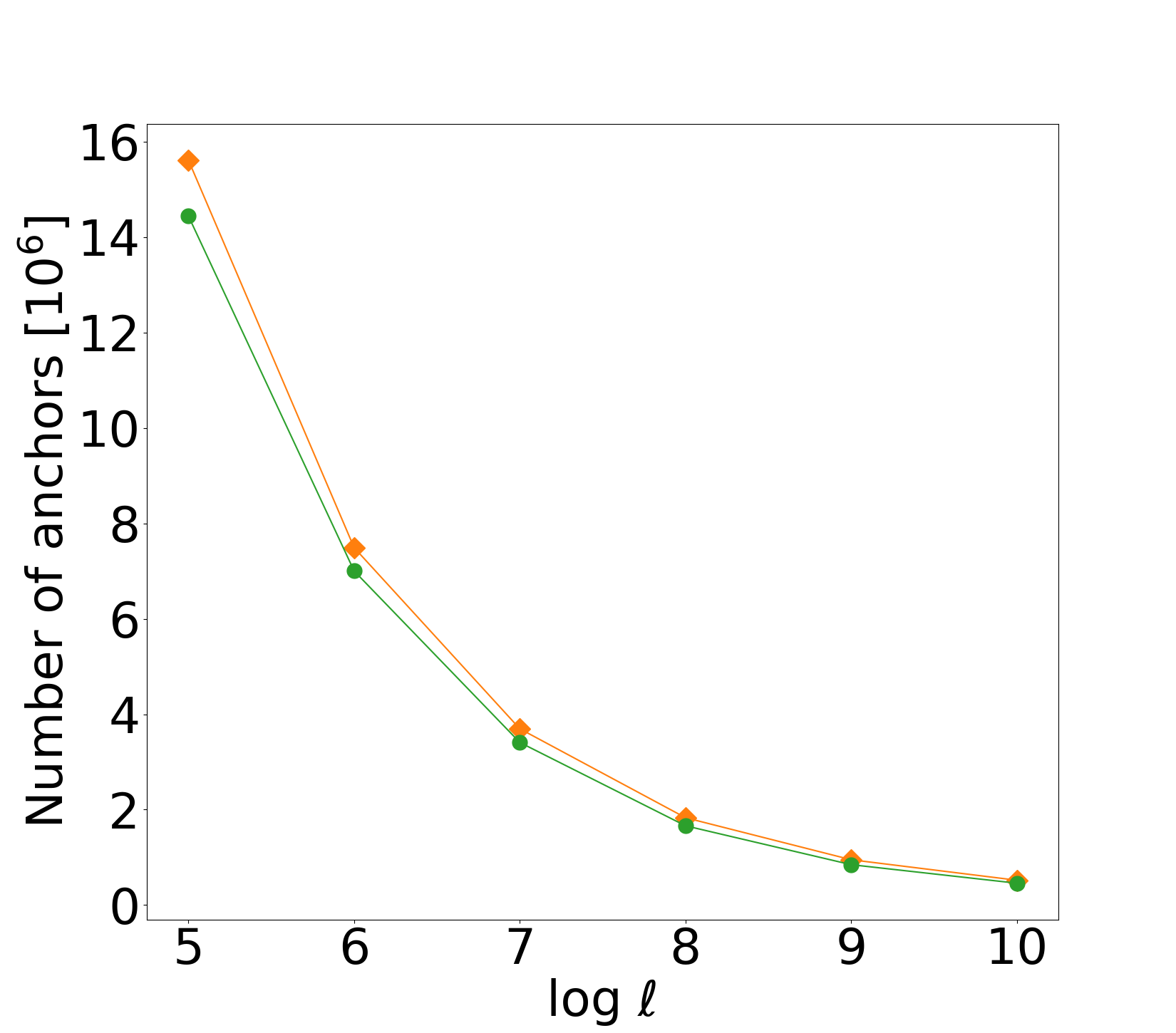}}
\subfloat[XML]
  {\includegraphics[width=.2\linewidth]
  {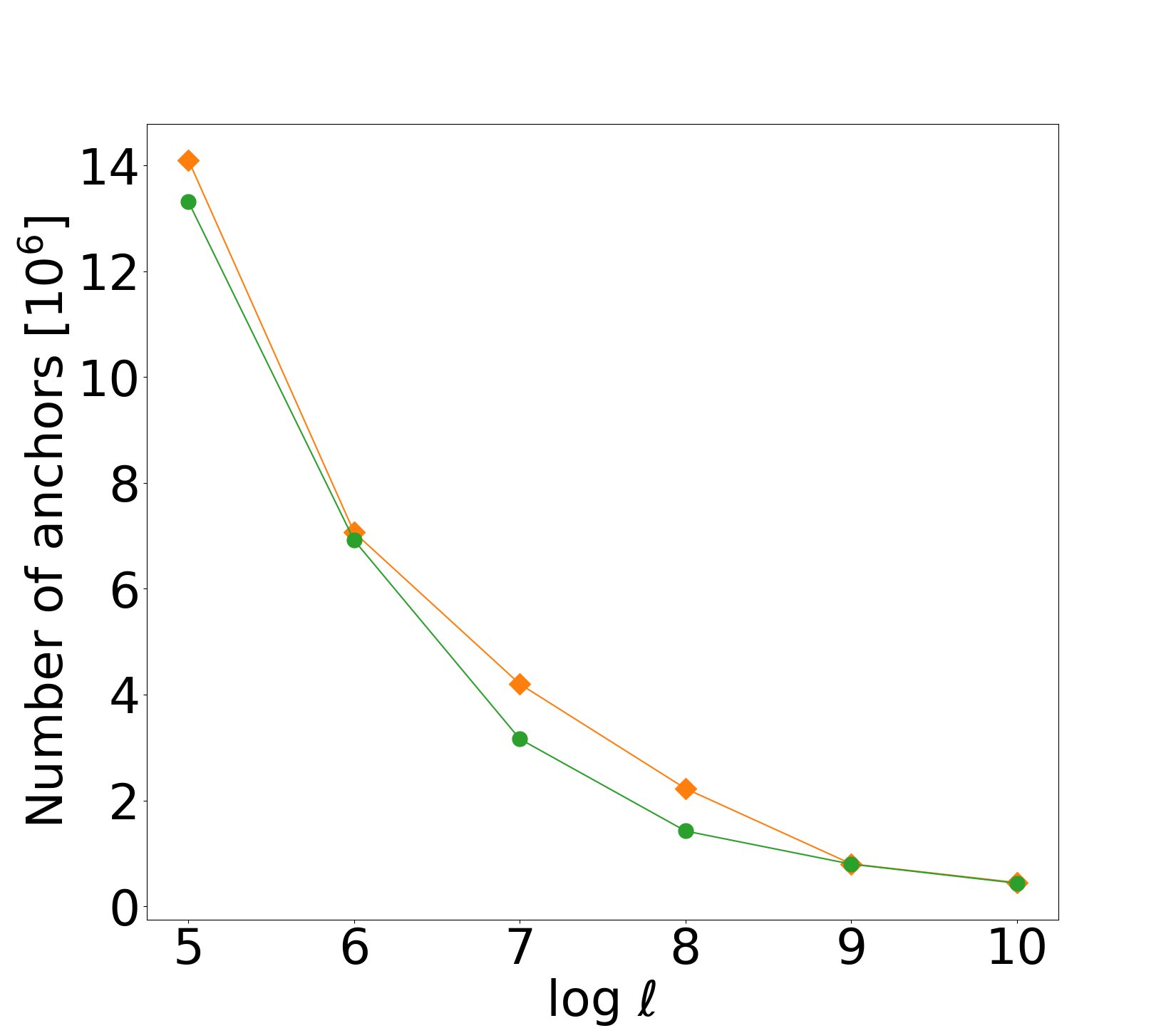}}  
\subfloat[SOURCES]
  {\includegraphics[width=.2\linewidth]
  {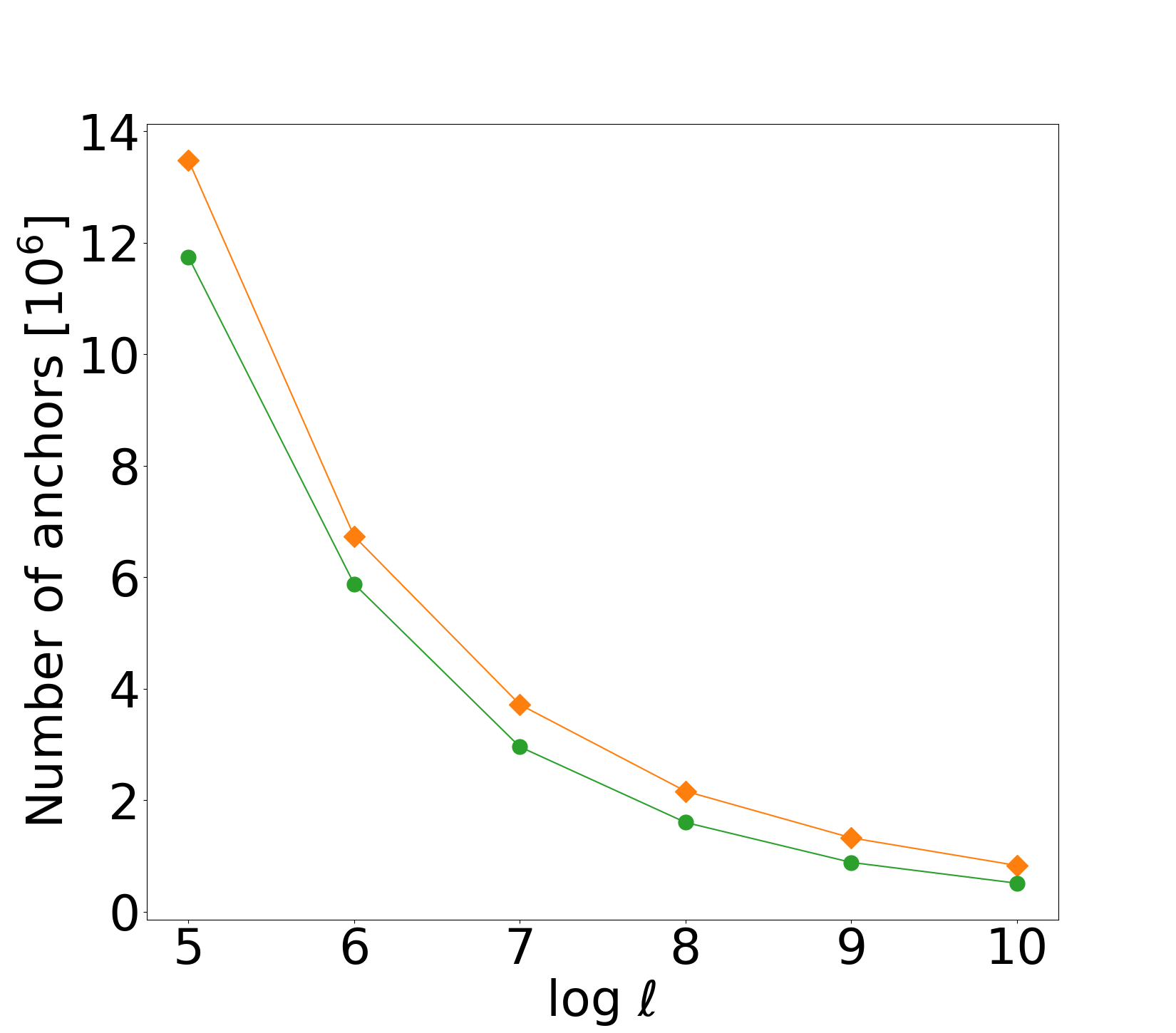}}
\subfloat[ENGLISH]
  {\includegraphics[width=.2\linewidth]
  {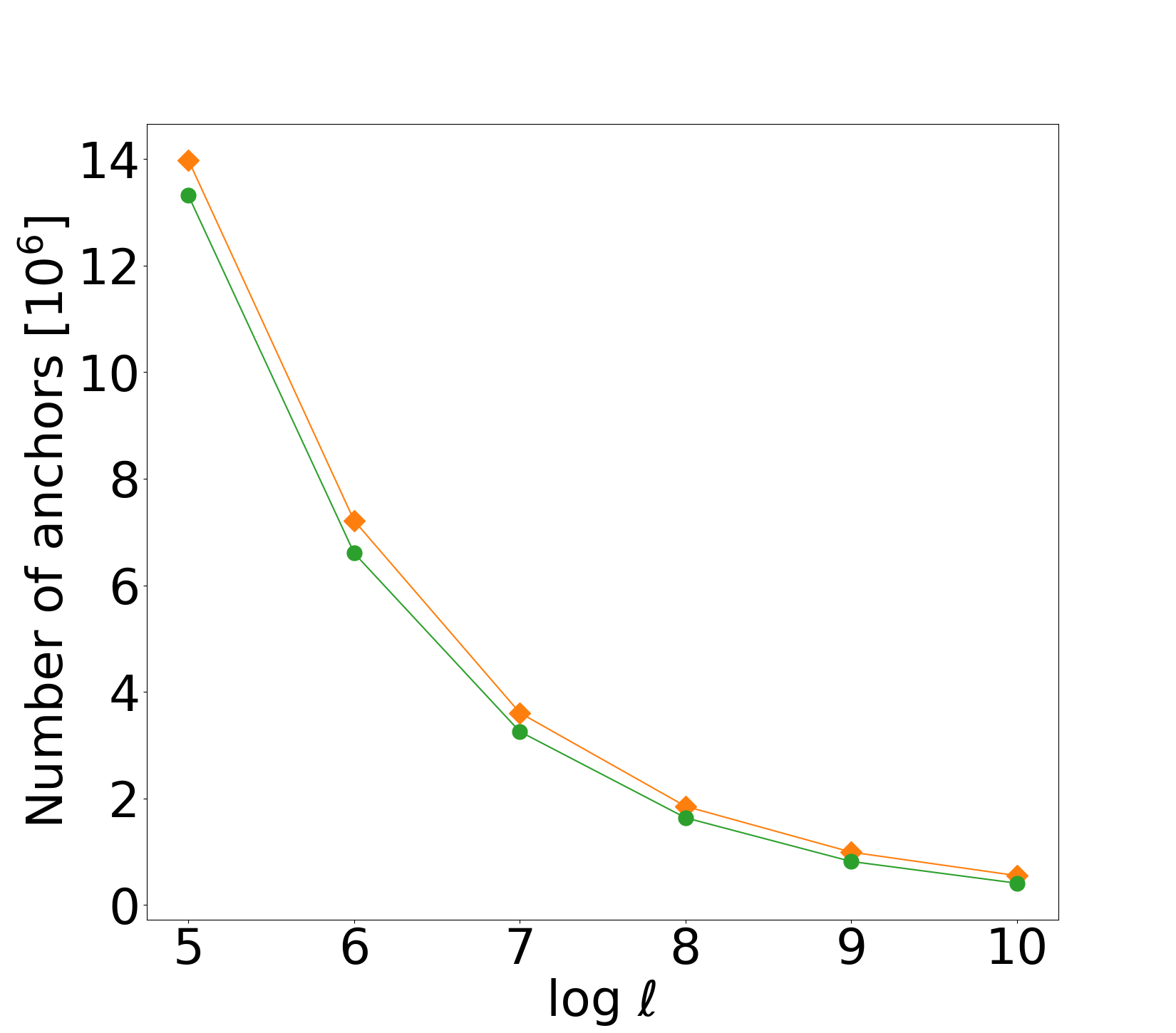}}
   \caption{\centering \label{fig:rbda-construction-count}Number of bd-anchors output by rBDA-compute and rrBDA-compute for varying $\ell$ (log-scale).}
\end{figure*}

\begin{figure*}[ht]
\begin{center}
\hspace{2.5cm}
\begin{subfigure}[b]{0.55\textwidth}
  \includegraphics[width=7.5cm]{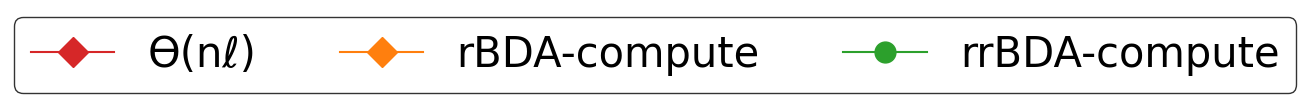}
 \end{subfigure}
 \end{center}
 
\subfloat[DNA]
  {\includegraphics[width=.2\linewidth]
  {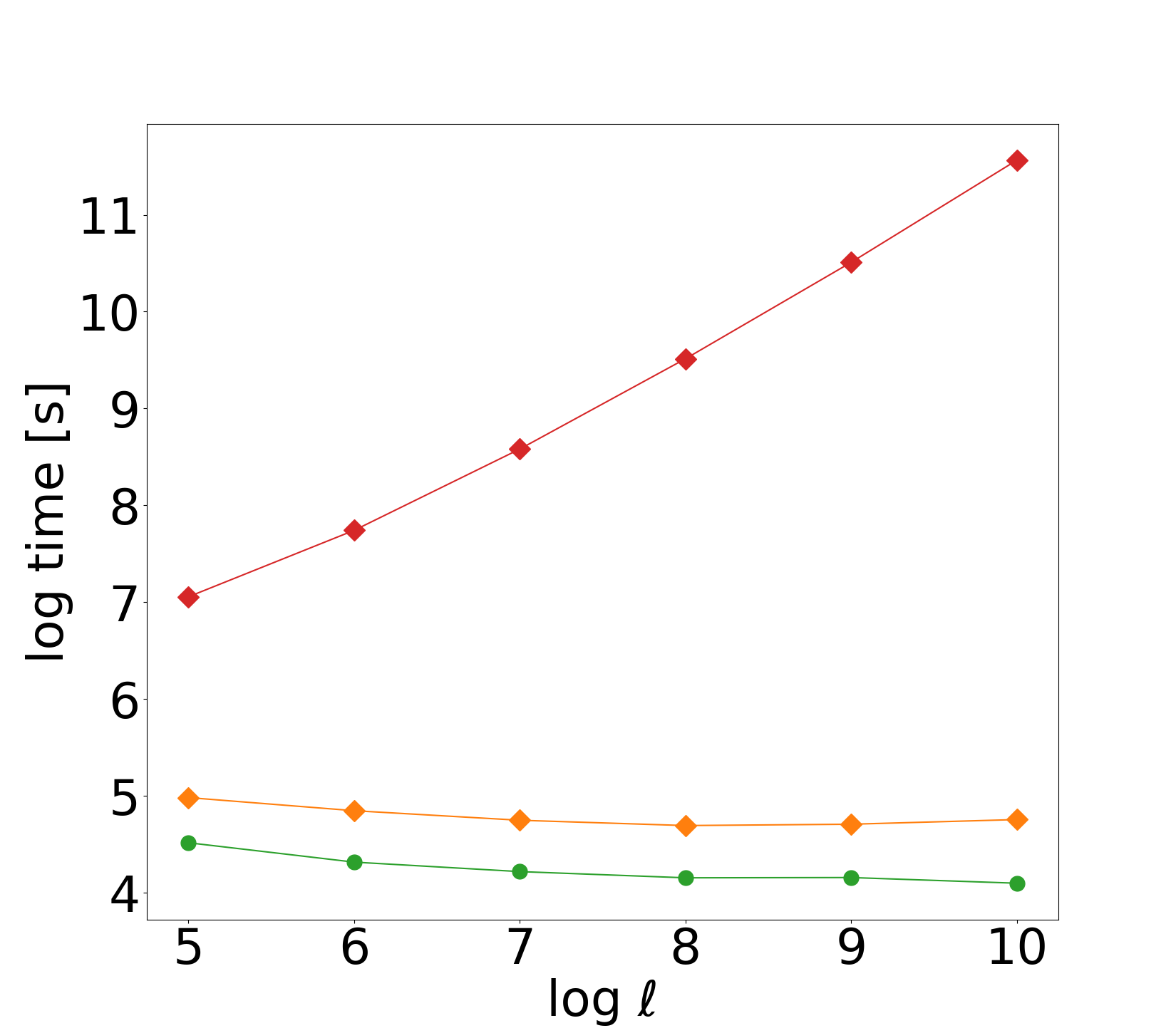}}
\subfloat[PROTEINS]
  {\includegraphics[width=.2\linewidth]
  {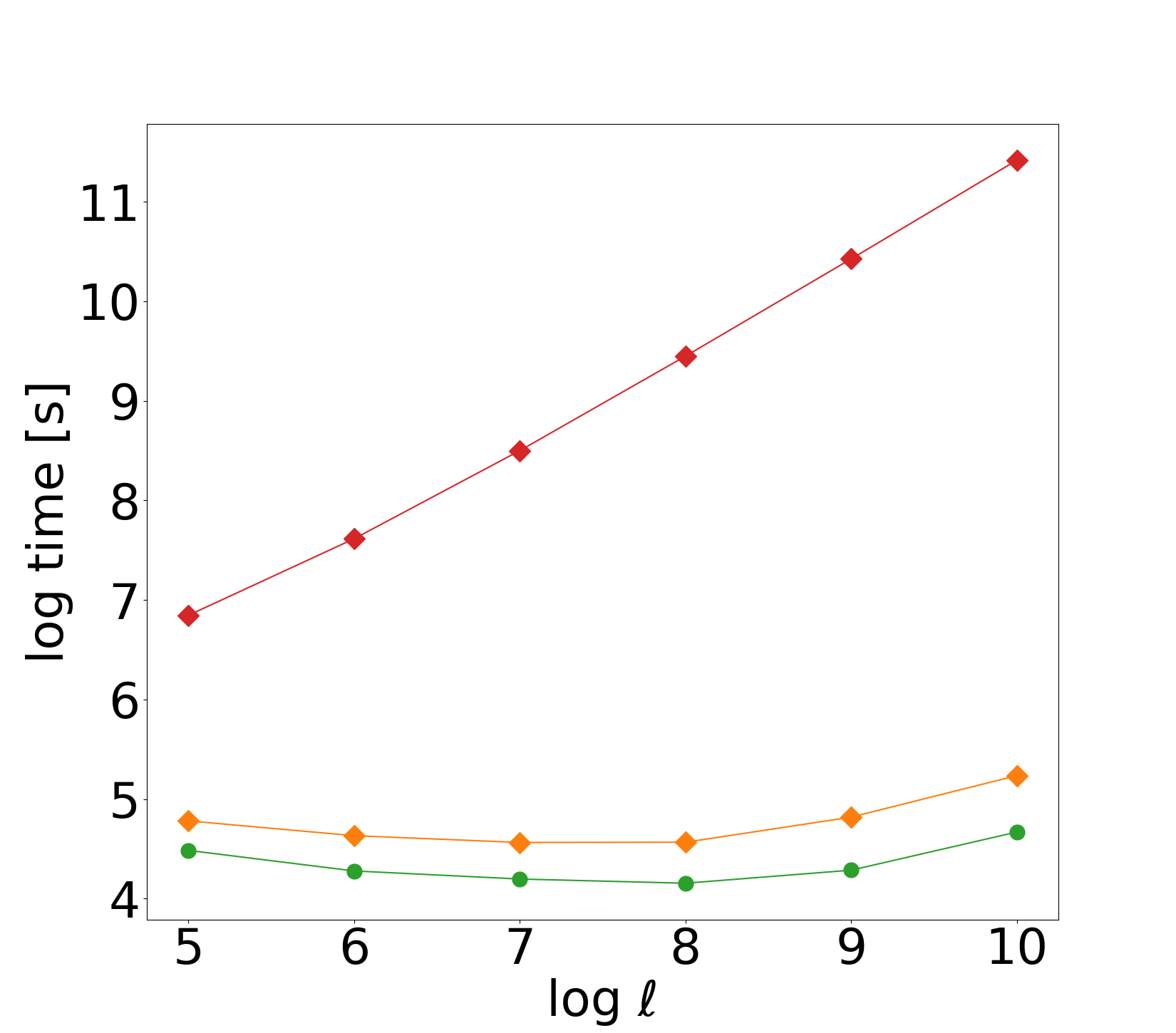}}
\subfloat[XML]
  {\includegraphics[width=.2\linewidth]
  {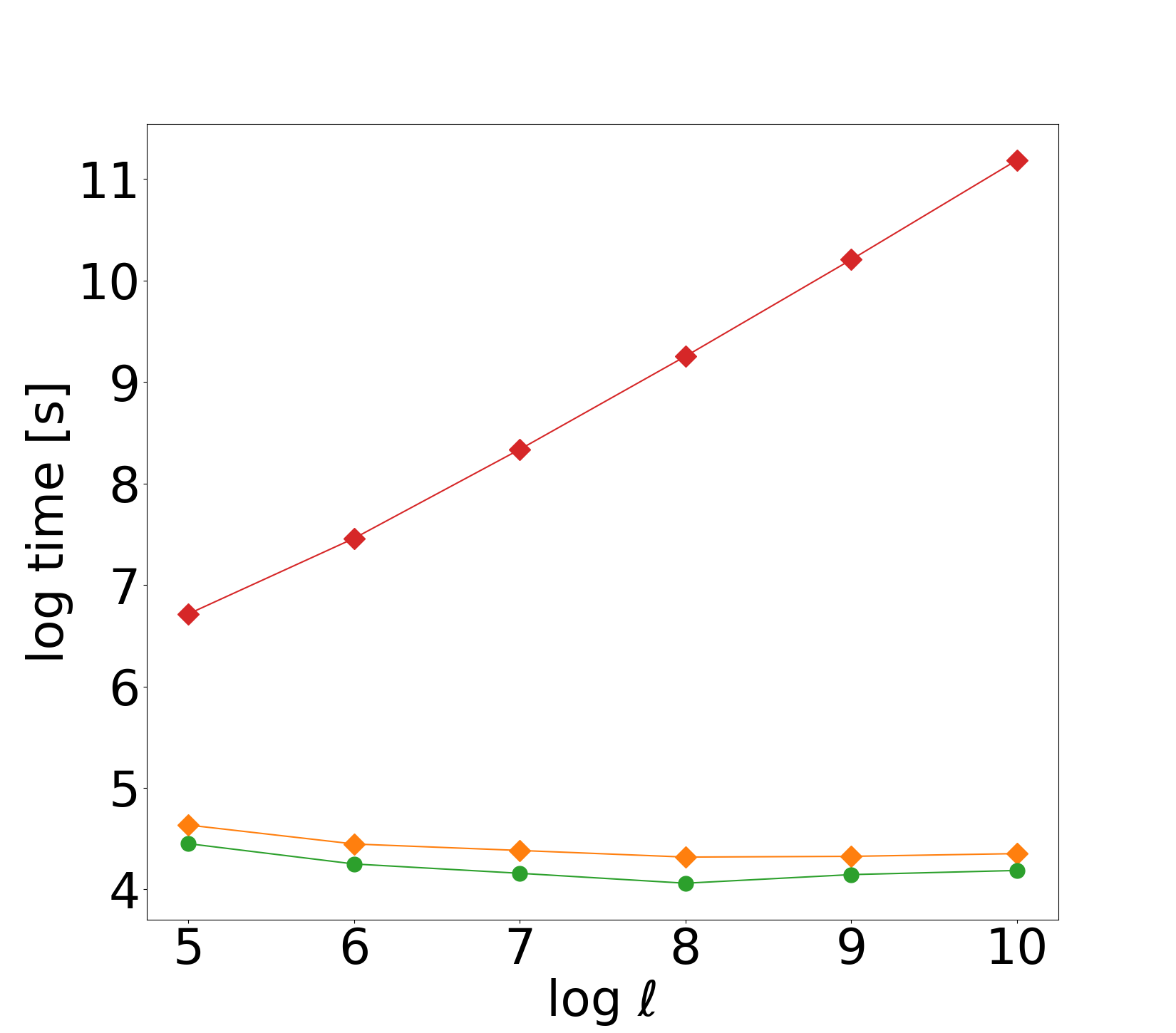}}  
\subfloat[SOURCES]
  {\includegraphics[width=.2\linewidth]
  {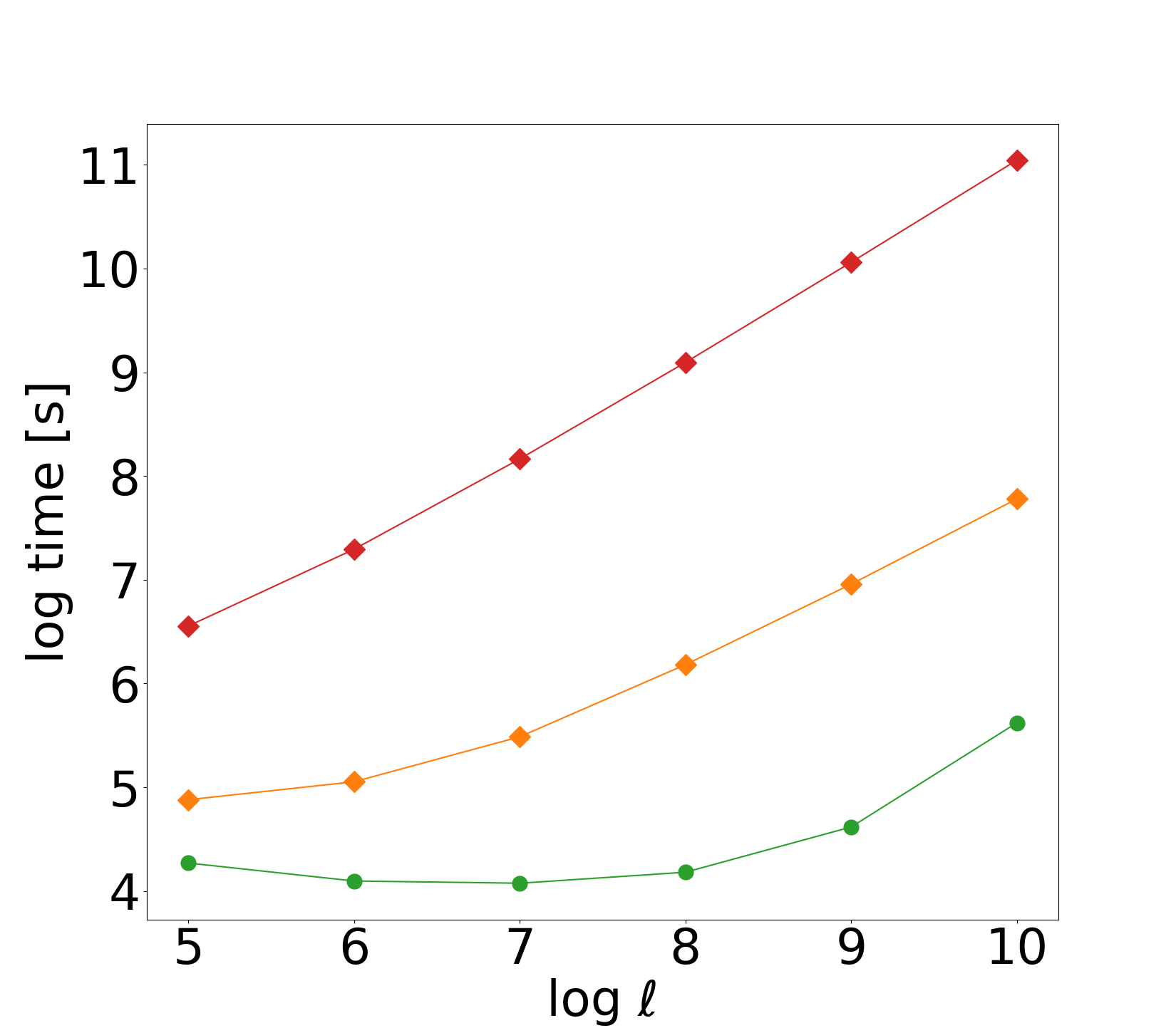}}
\subfloat[ENGLISH]
  {\includegraphics[width=.2\linewidth]
  {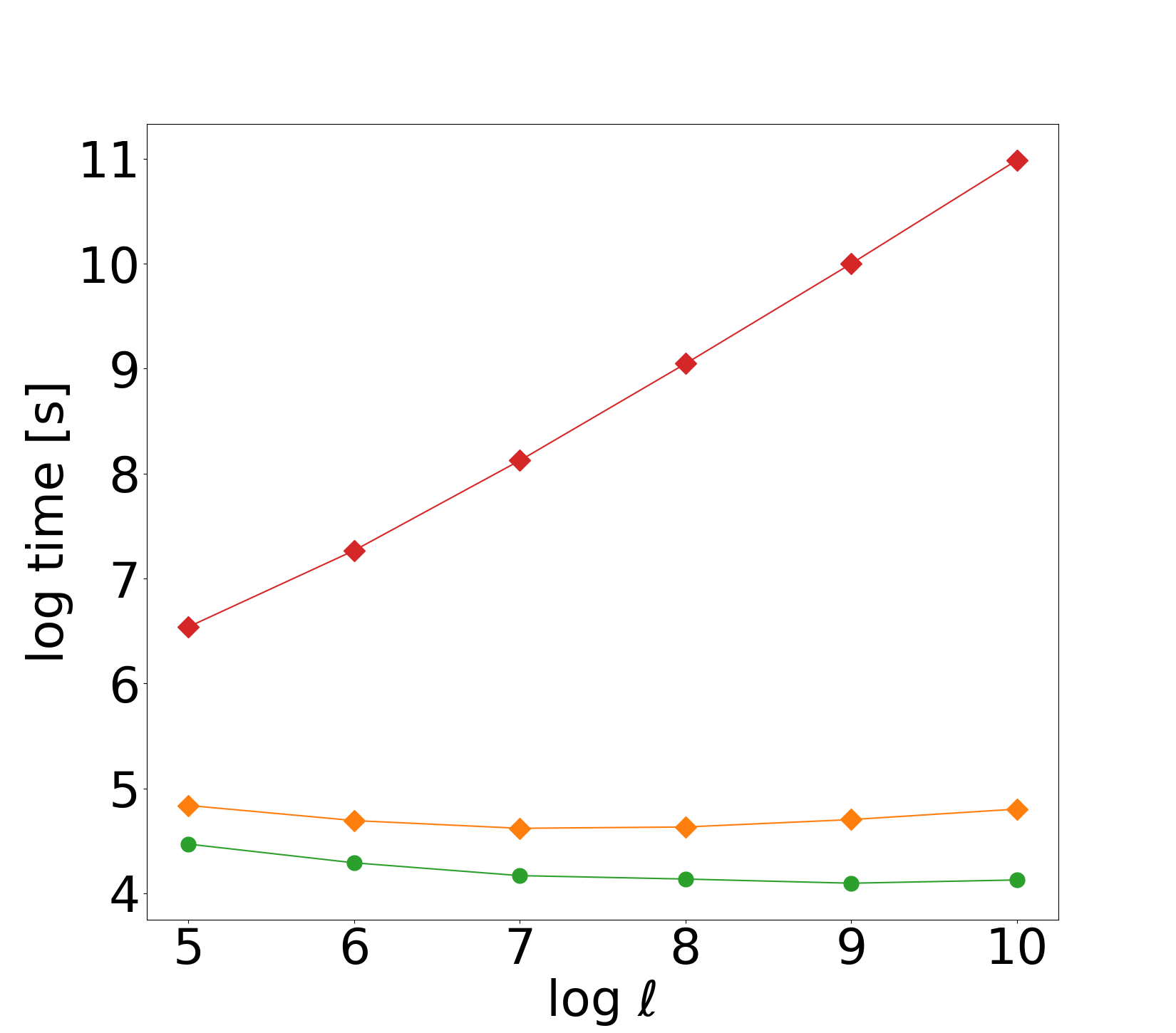}}
   \caption{\centering \label{fig:rbda-construction-time}Elapsed time to construct the set of reduced and randomized bd-anchors (seconds in log-scale) for varying $\ell$ (log-scale).}
\end{figure*}

Figure~\ref{fig:rbda-construction-space} shows that, for all tested $\ell$ values and datasets, \textsf{rBDA-compute} and \textsf{rrBDA-compute} have a \emph{similar} trend to the $\Theta(n\ell)$-time implementation~\cite{DBLP:conf/esa/LoukidesP21}.   
In particular, as $\ell$ increases the space required to construct bd-anchors is reduced.
This is expected since for increasing $\ell$ the size of the set of bd-anchors decreases. 
Also, note that the space for \textsf{rrBDA-compute} is smaller by $12.8\%$ on average (over all datasets) compared to that for \textsf{rBDA-compute} and by up to $49.1\%$ compared to that for \textsf{rBDA-compute}. These space-time results establish the practical usefulness of our Theorems~\ref{the:main} and~\ref{the:main2}. 

\begin{figure*}[ht]
\begin{center}
\hspace{2.5cm}
\begin{subfigure}[b]{0.55\textwidth}
  \includegraphics[width=7.5cm]{legend_anchors.png}
 \end{subfigure}
 \end{center}
 
\subfloat[DNA]
  {\includegraphics[width=.2\linewidth]
  {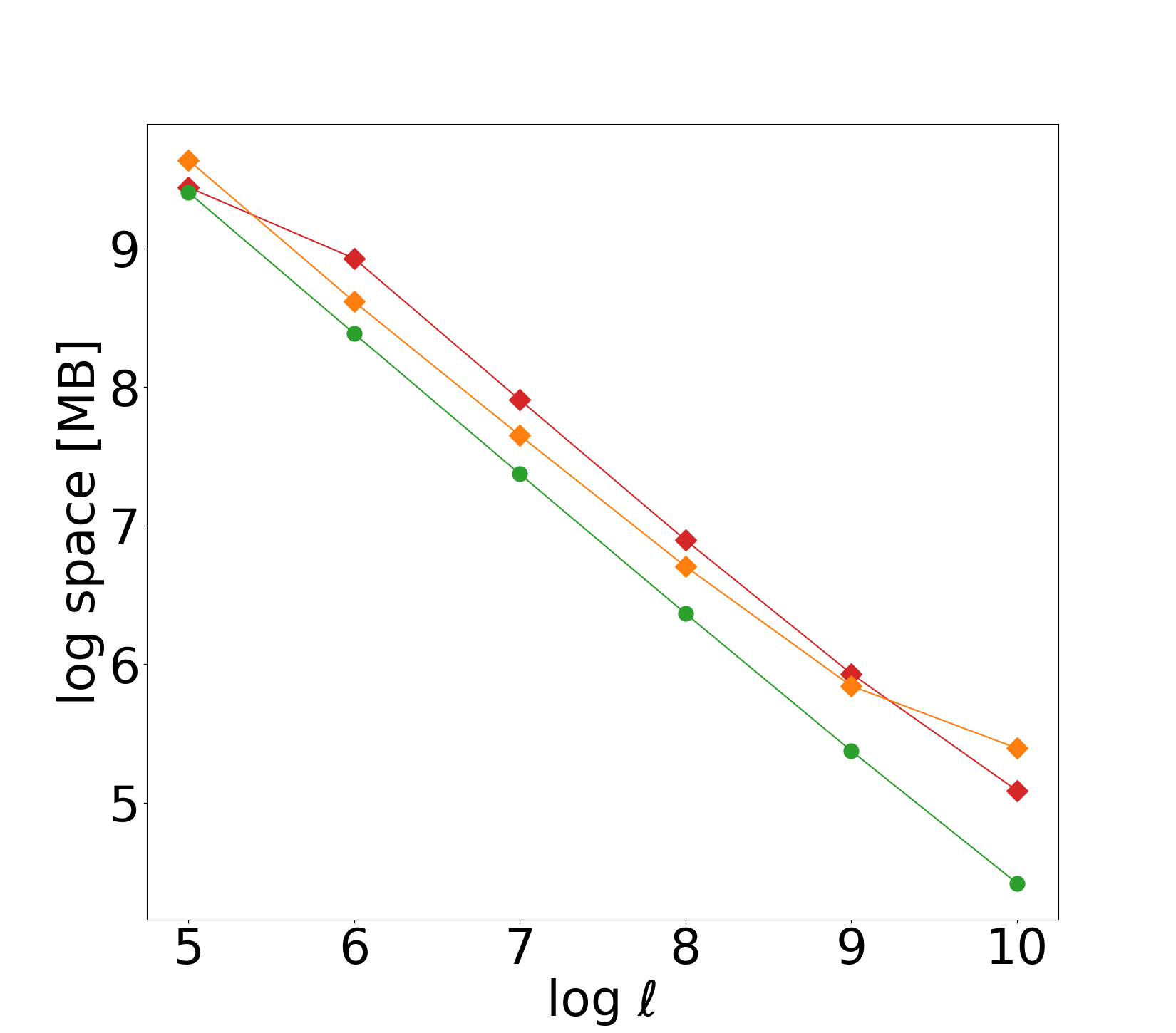}}
\subfloat[PROTEINS]
  {\includegraphics[width=.2\linewidth]
  {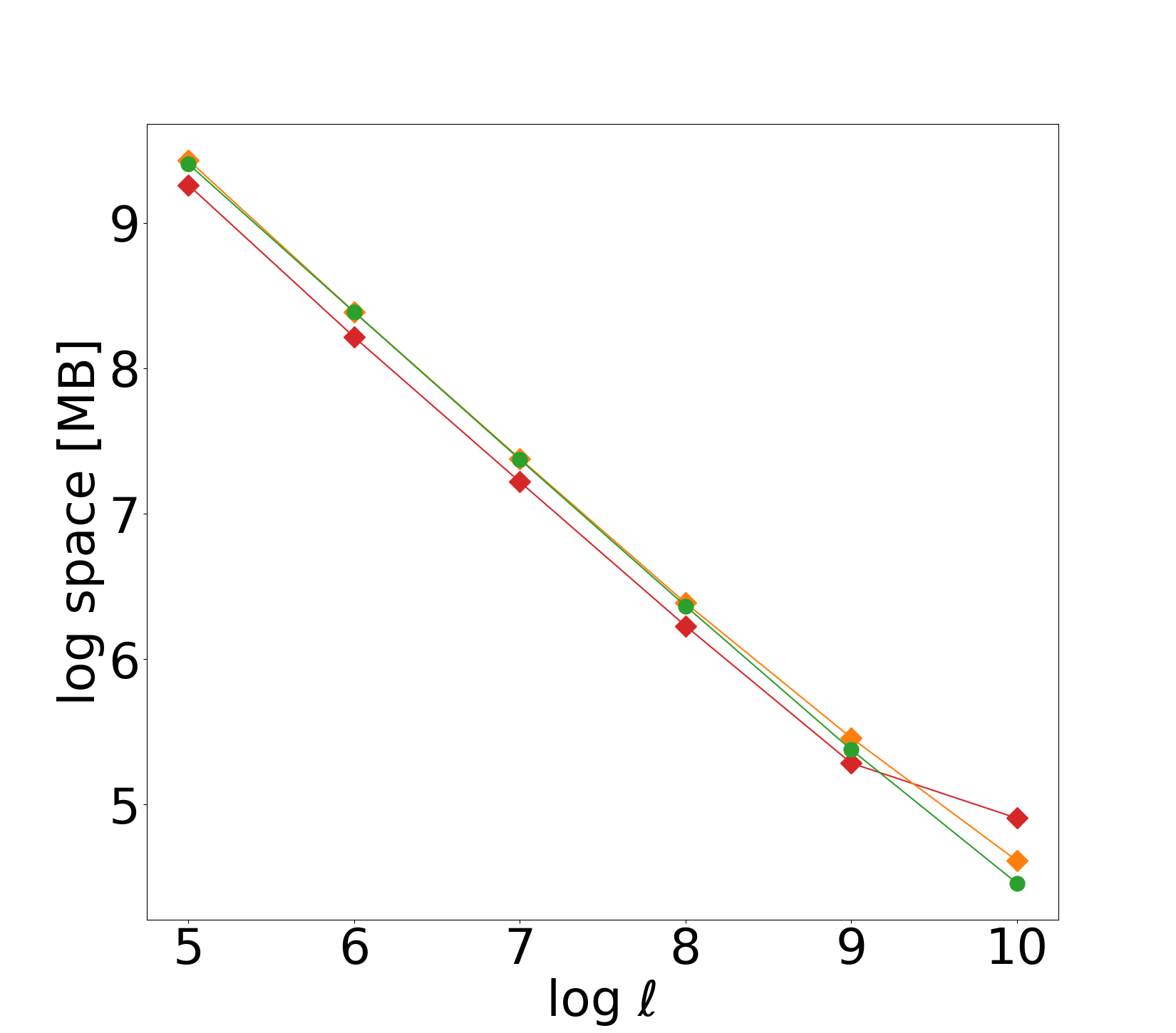}}
\subfloat[XML]
  {\includegraphics[width=.2\linewidth]
  {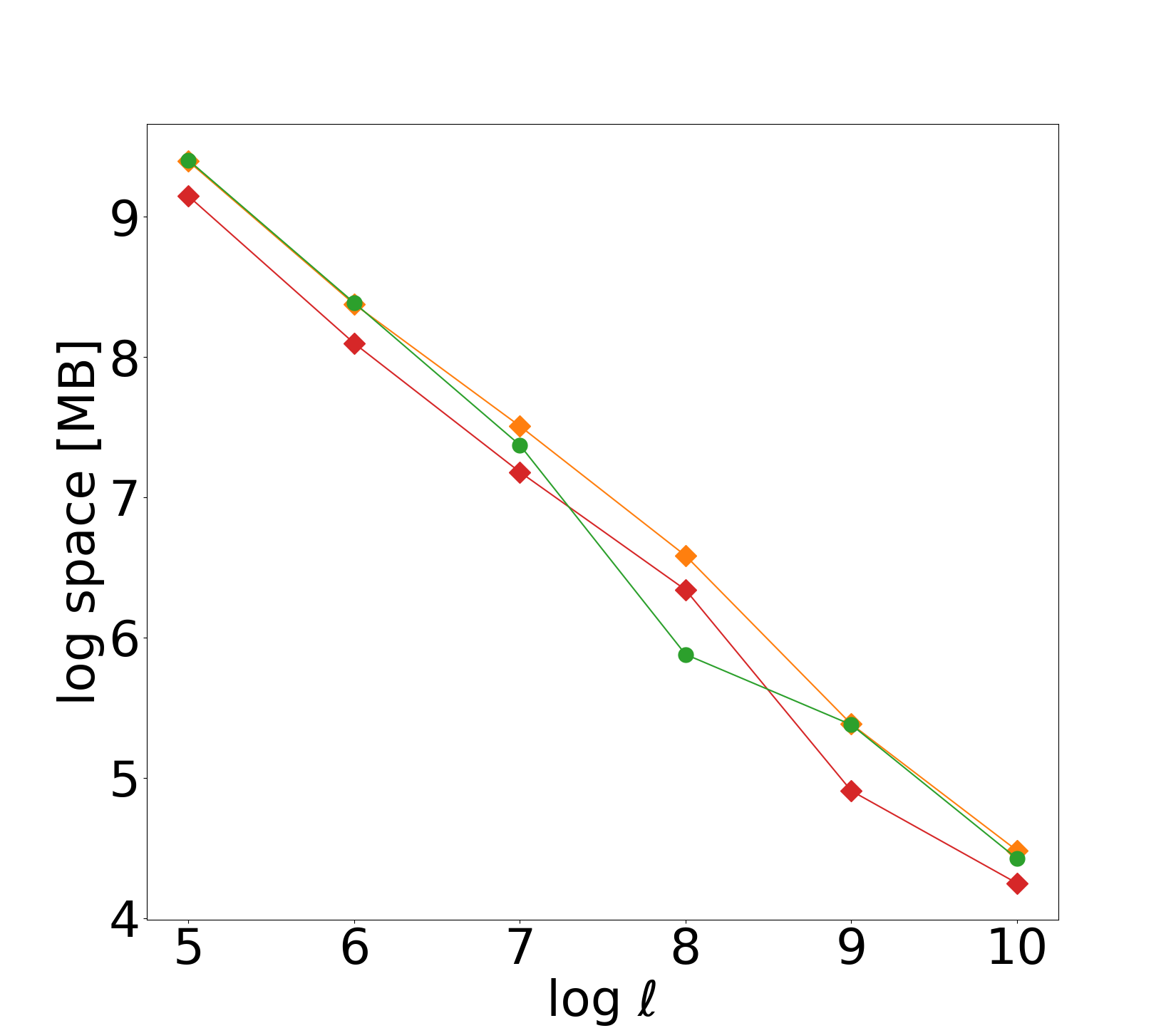}}  
\subfloat[SOURCES]
  {\includegraphics[width=.2\linewidth]
  {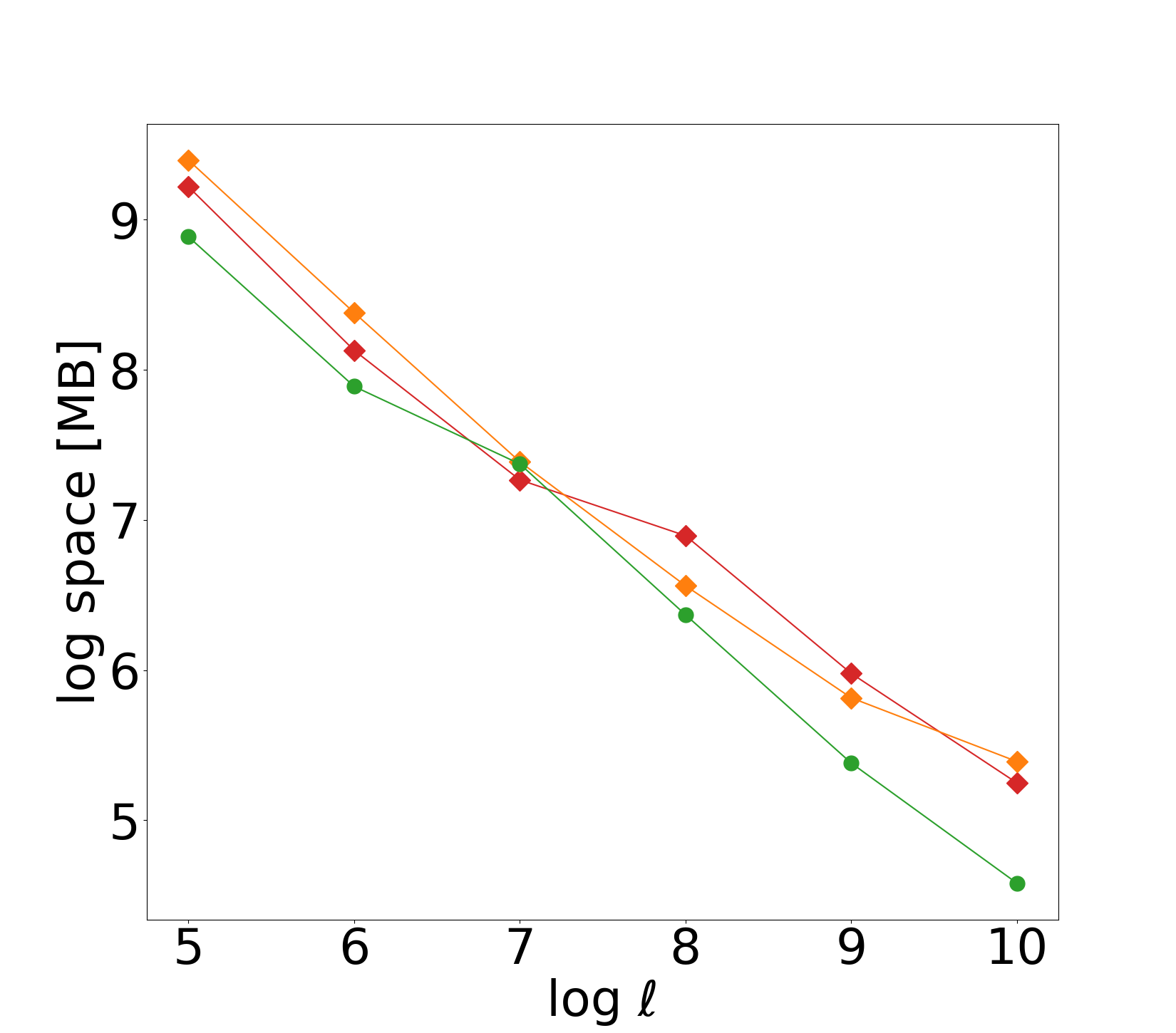}}
\subfloat[ENGLISH]
  {\includegraphics[width=.2\linewidth]
  {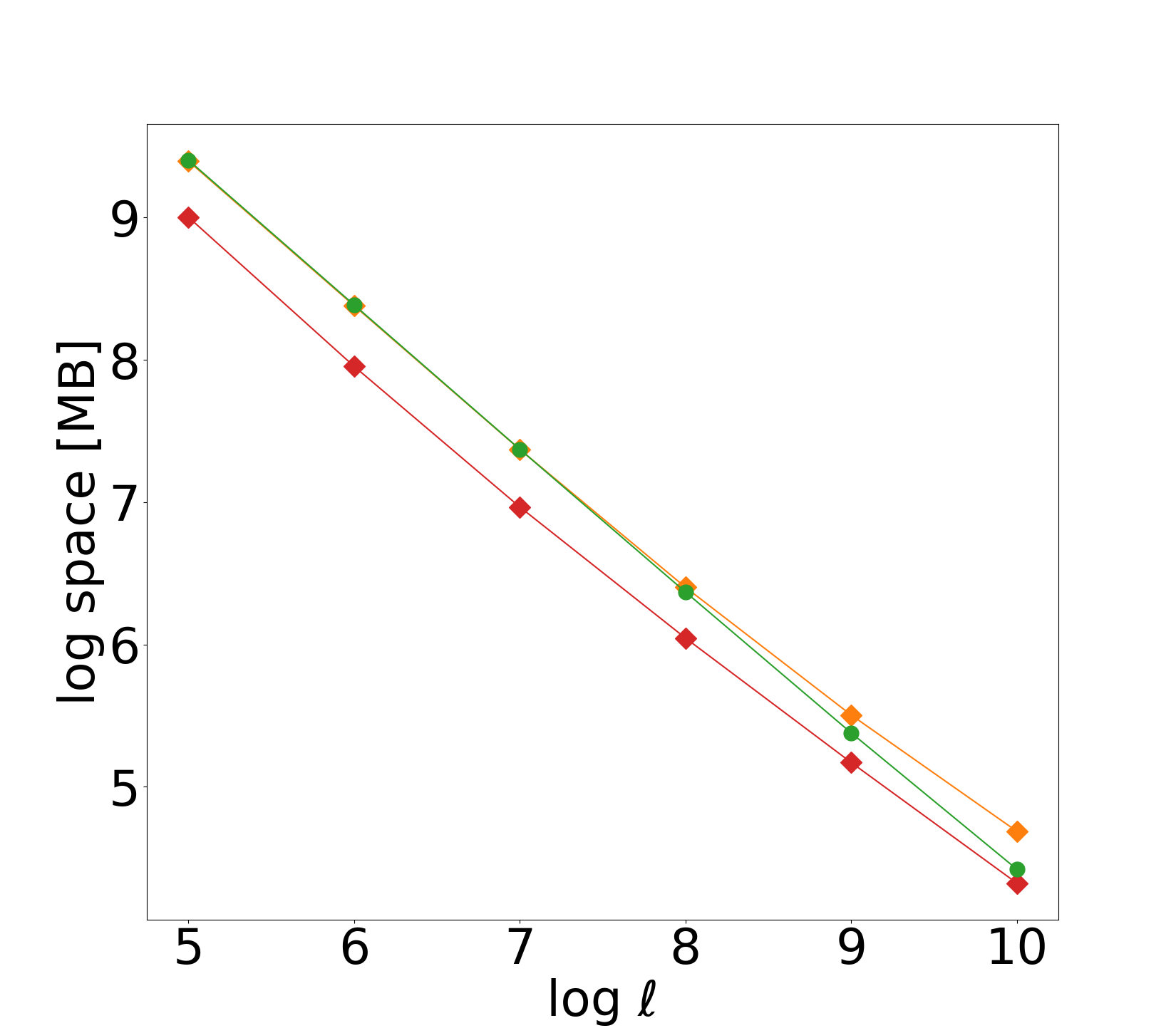}}
 \caption{\centering \label{fig:rbda-construction-space}Space required to construct the set of reduced and randomized bd-anchors (MB in log-scale) for varying $\ell$ (log-scale).}
\end{figure*}

\subsection{Index size}\label{sec:size_results}
Figure~\ref{fig:index-construction-size} shows the index size for the first five datasets of Table~\ref{tab:datasets} when using the \textsf{rrBDA-compute} implementation of Section~\ref{sec:imp} and \textsf{rrBDA-index II} (both \texttt{int} and \texttt{ext} versions have the same index size).  As expected, the index size occupied by \textsf{rrBDA-index II} decreases with increasing $\ell$. Notably, for all datasets and $\ell\geq 64$, \textsf{rrBDA-index II} is smaller than all other indexes except for the \textsf{FM-index}. This is remarkable, given that our index is not compressed and as we will show it allows much faster querying. When $\ell \geq 512$, \textsf{rrBDA-index II} \emph{outperforms all indexes} for all datasets. Specifically, when $\ell=512$, \textsf{rrBDA-index II} requires $59.1\%$ less space than \textsf{FM-index} on average (over all datasets), while when $\ell=1024$ our index requires $77.9\%$ less space. 

\begin{figure*}[ht]
\begin{subfigure}[b]{0.49\textwidth}
\hspace{2.5cm}
  \includegraphics[width=13cm]{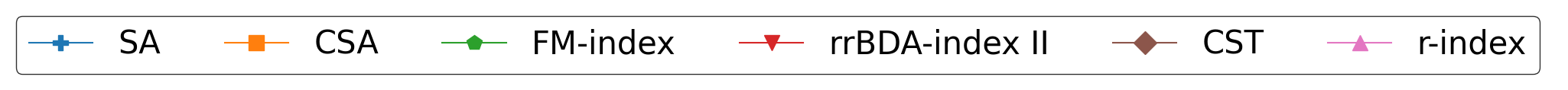}
 \end{subfigure}
 
\subfloat[DNA]
  {\includegraphics[width=.2\linewidth]
  {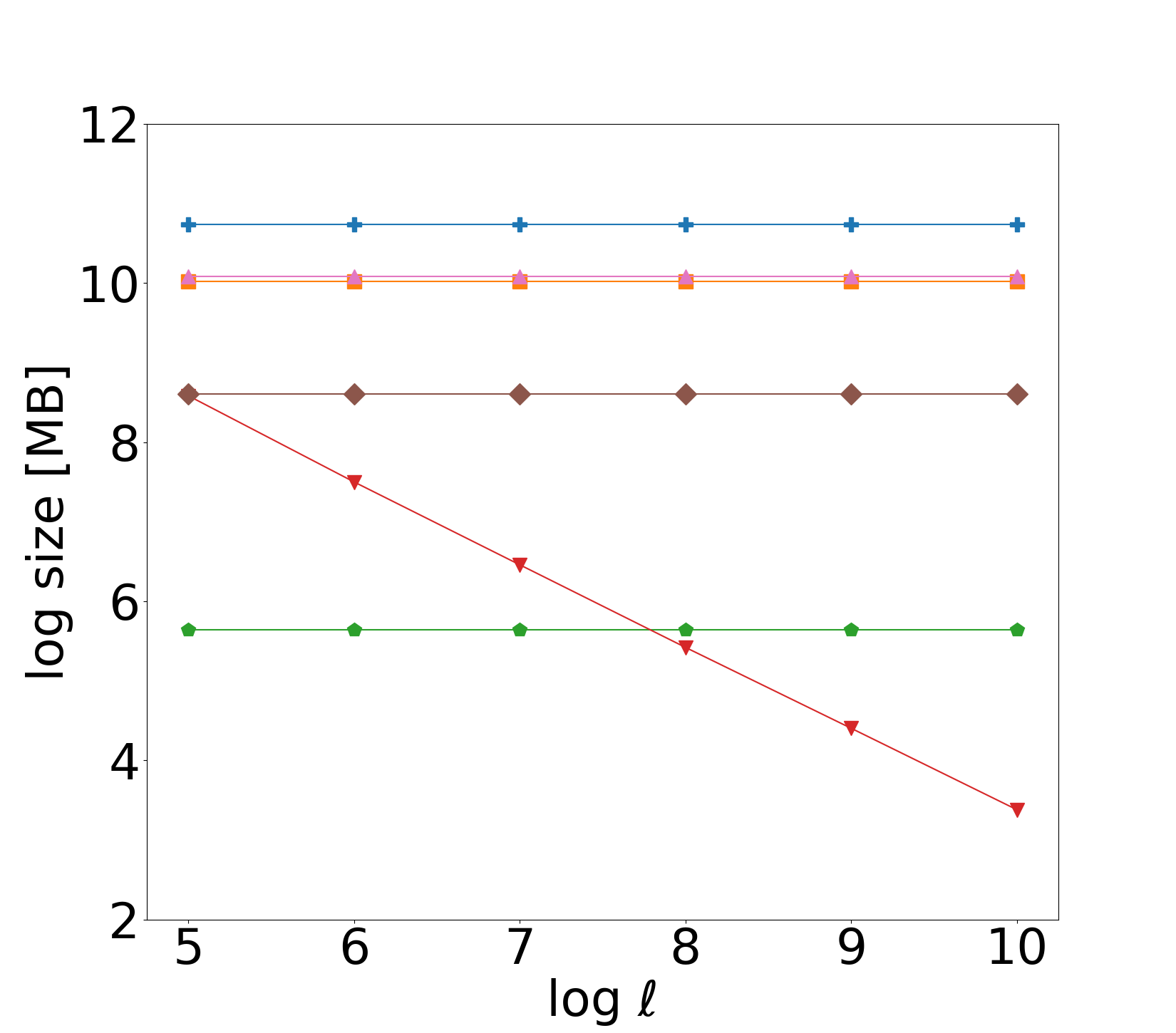}}
\subfloat[PROTEINS]
  {\includegraphics[width=.2\linewidth]
  {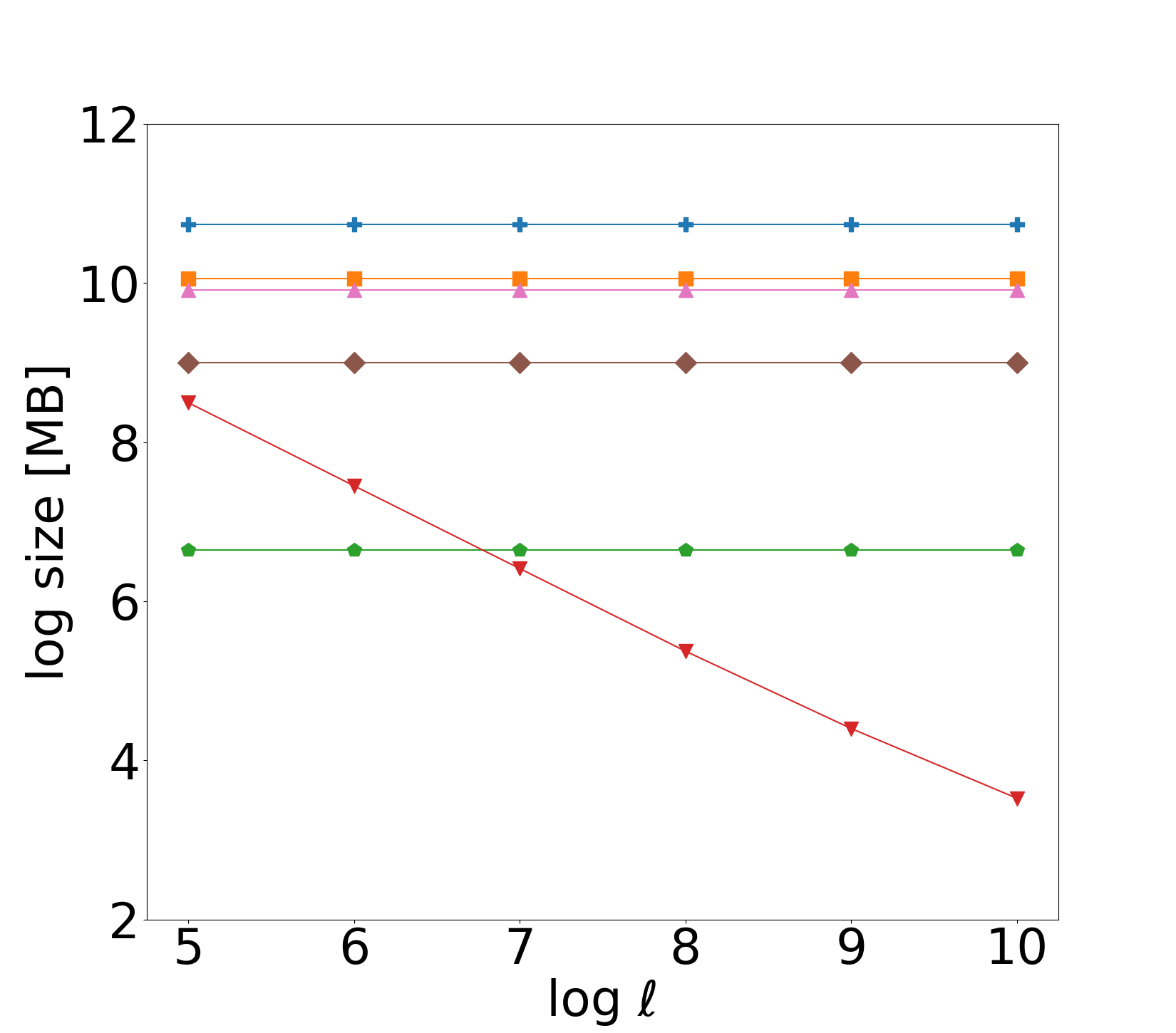}}
\subfloat[XML]
  {\includegraphics[width=.2\linewidth]
  {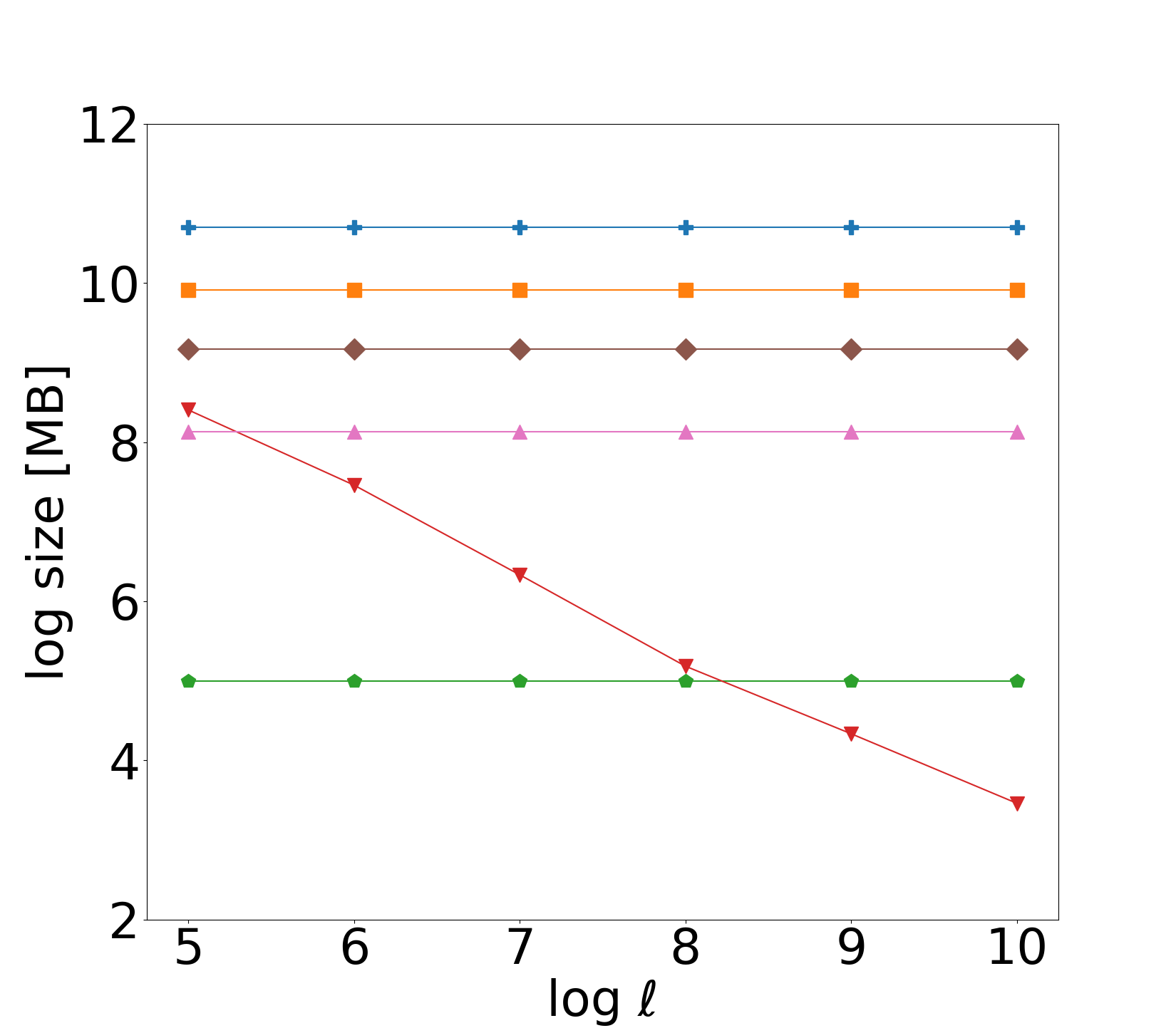}}  
\subfloat[SOURCES]
  {\includegraphics[width=.2\linewidth]
  {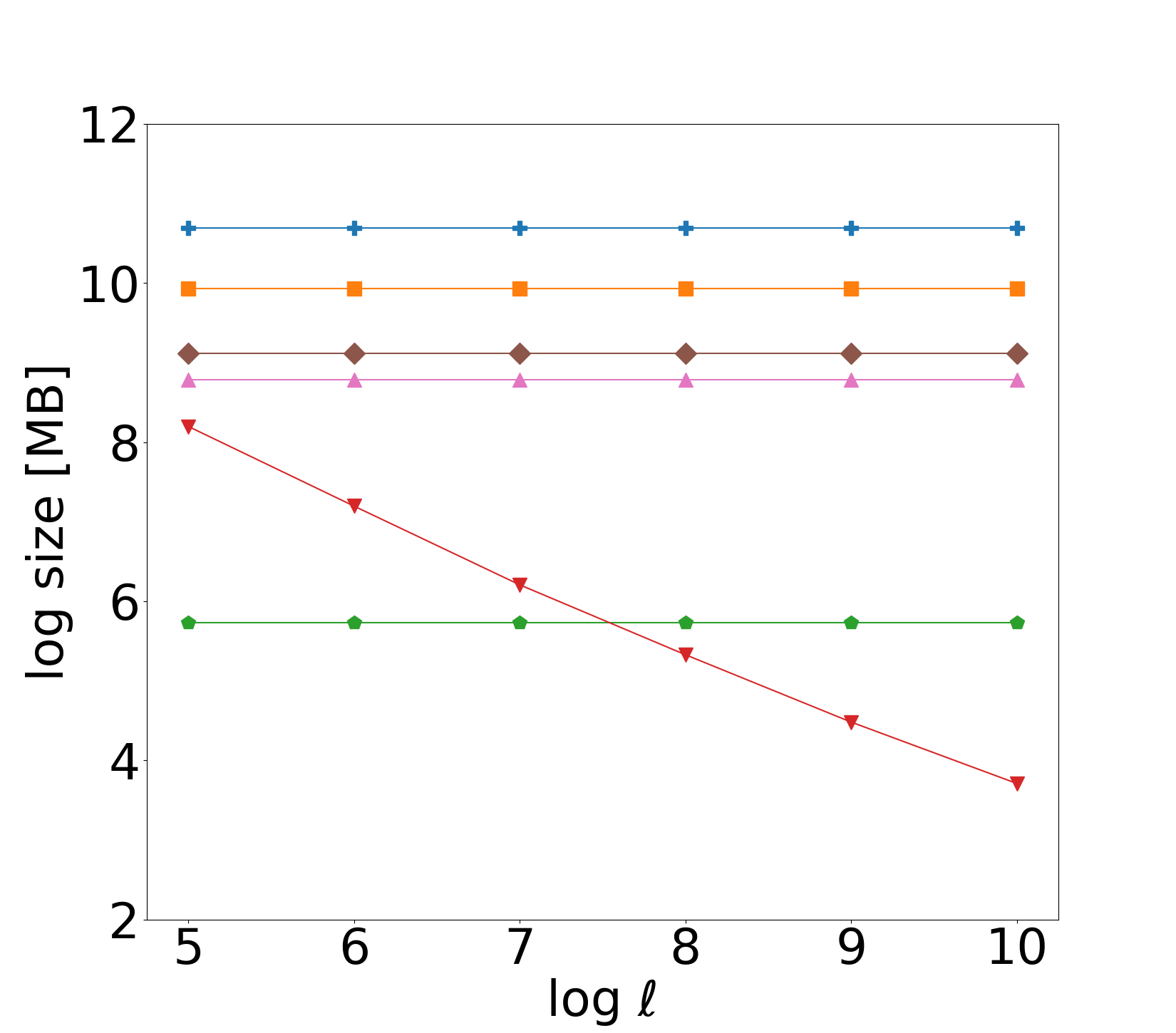}}
\subfloat[ENGLISH]
  {\includegraphics[width=.2\linewidth]
  {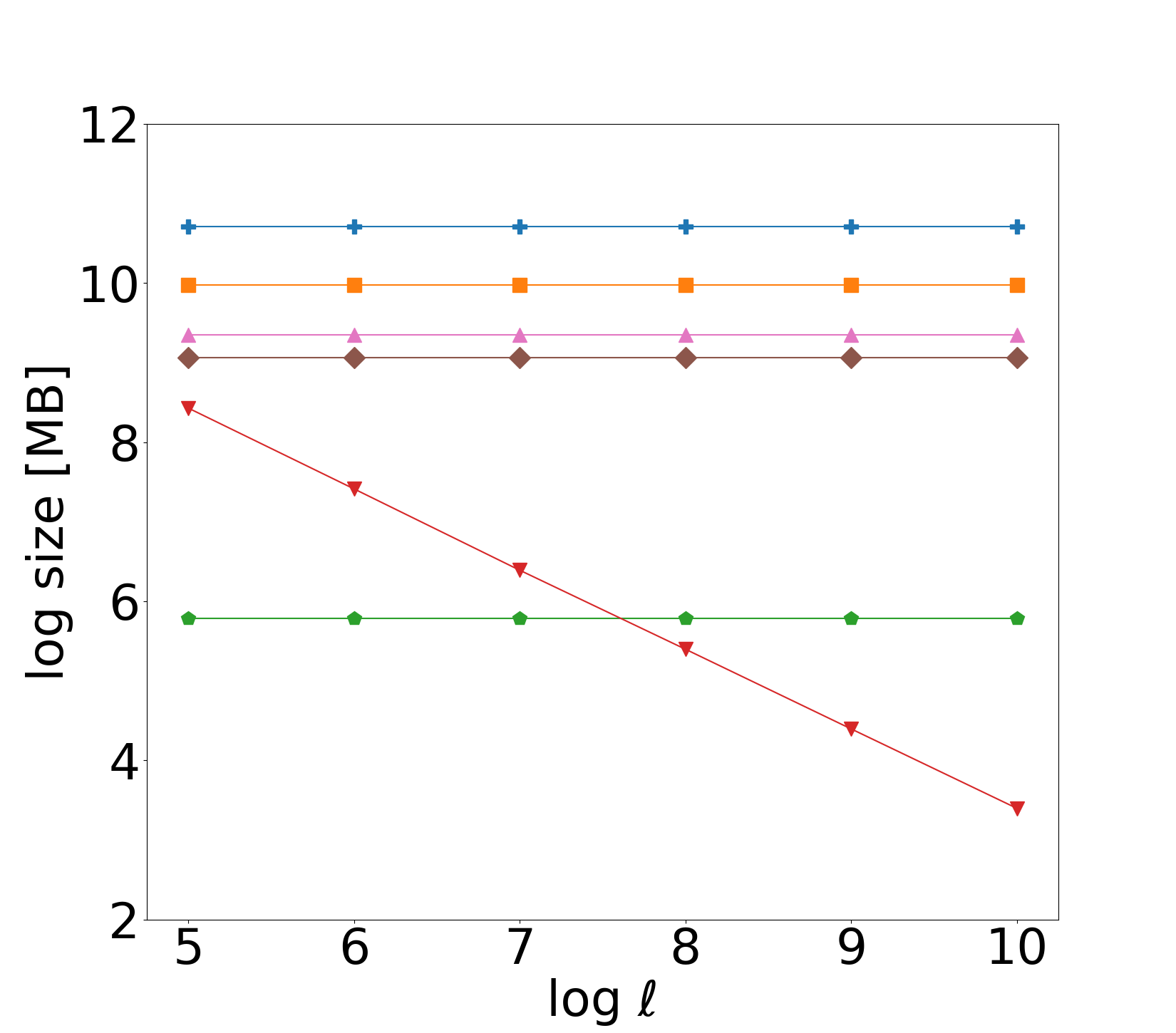}}
  \caption{\label{fig:index-construction-size}Size of different indexes (MB in log-scale) for varying $\ell$ (log-scale).}
\end{figure*}

\subsection{Query time}\label{sec:query_time_results}
Figure~\ref{fig:pattern-matching-time} shows the average query time (over all patterns) for the first five datasets in Table~\ref{tab:datasets} when  
$\ell=|P|$. Both \textsf{rrBDA-index II (int)} and \textsf{rrBDA-index II (ext)} have the same query time, so we write  \textsf{rrBDA-index II} to refer to either. 
For all datasets and $\ell\geq 64$, \textsf{rrBDA-index II} is up to \emph{several orders of magnitude} faster than the compressed indexes, especially for large alphabets, which is consistent with the observations made in~\cite{fm-index_large_alphabet_1,fm-index_large_alphabet}. Notably, \textsf{rrBDA-index II} is even faster than the \textsf{SA}, with the single exception of $\ell=32$ in the XML dataset.   

\begin{figure*}[ht]
\begin{subfigure}[b]{0.49\textwidth}
\hspace{2.5cm}
  \includegraphics[width=13cm]{legend_size.png}
 \end{subfigure}
 
\subfloat[DNA]
  {\includegraphics[width=.2\linewidth]
  {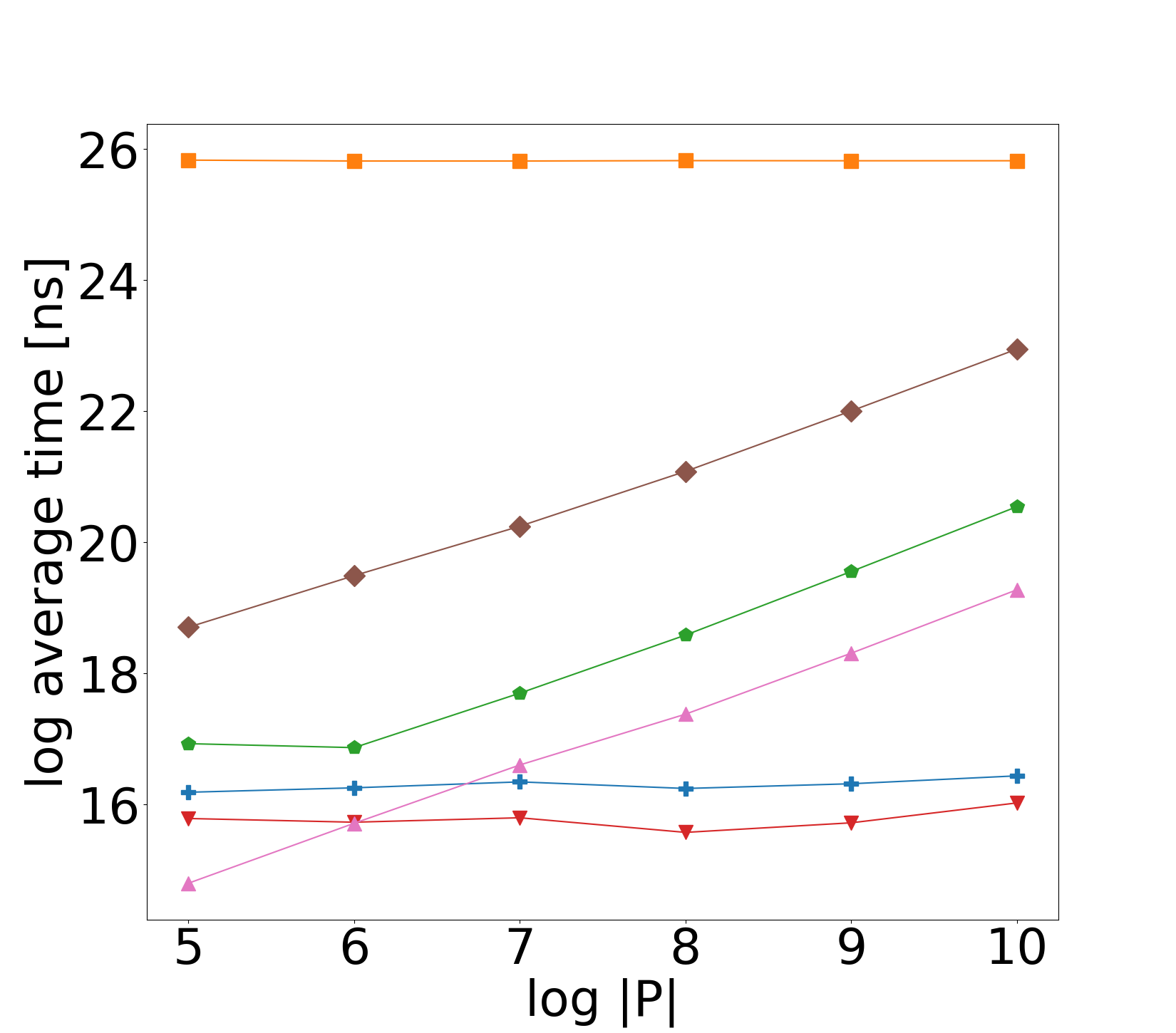}}
\subfloat[PROTEINS]
  {\includegraphics[width=.2\linewidth]
  {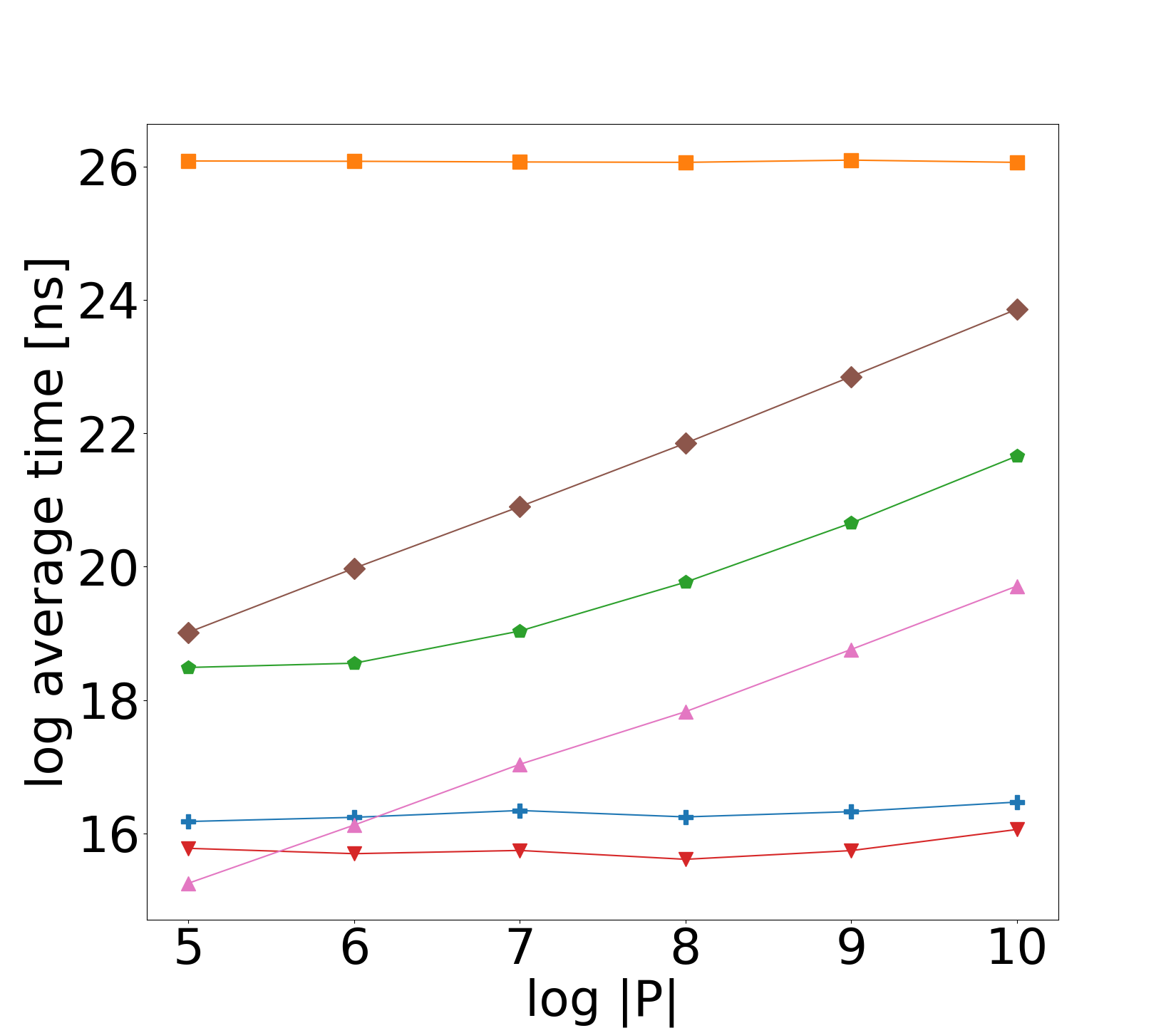}}
\subfloat[XML]
  {\includegraphics[width=.2\linewidth]
  {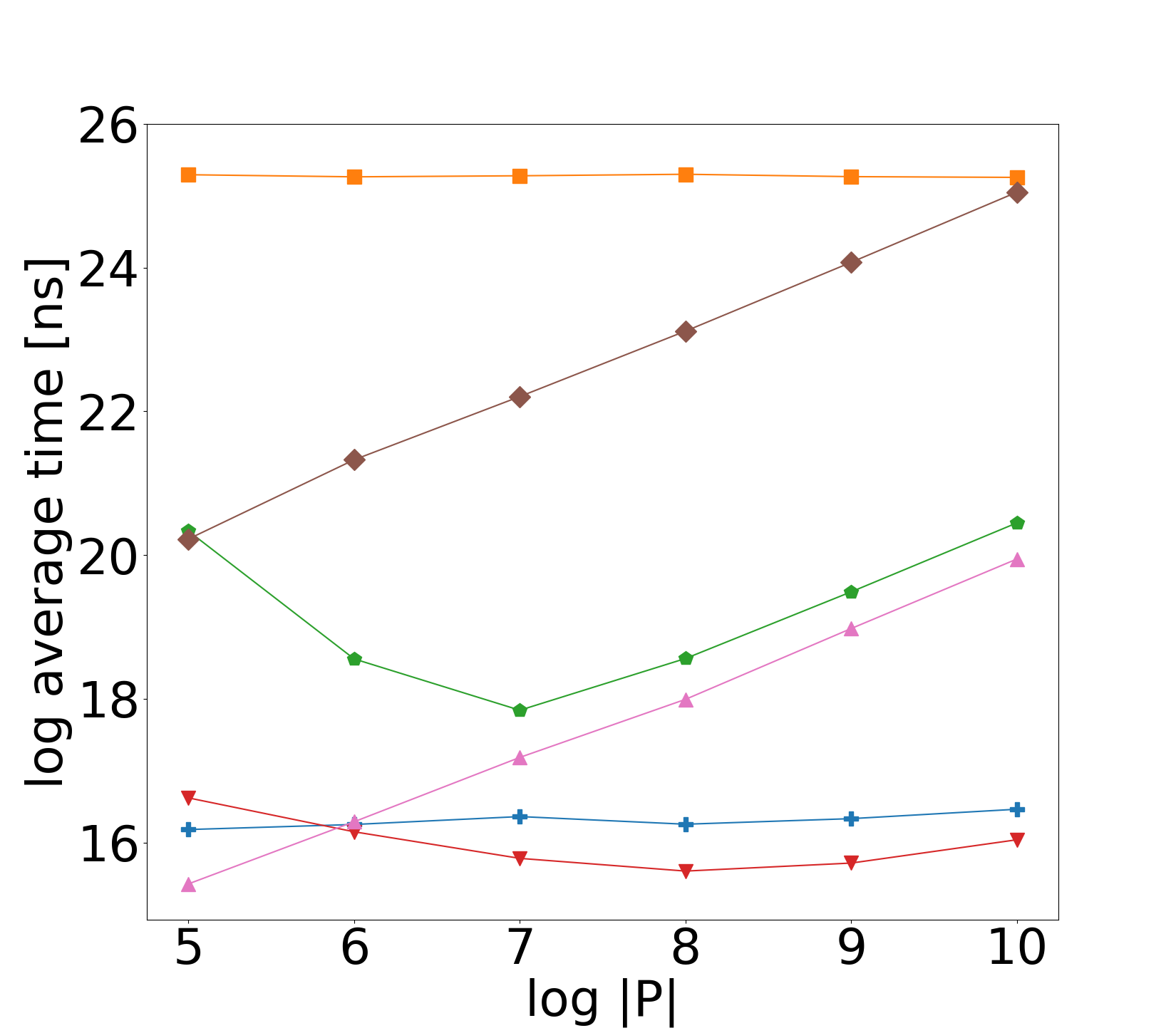}}  
\subfloat[SOURCES]
  {\includegraphics[width=.2\linewidth]
  {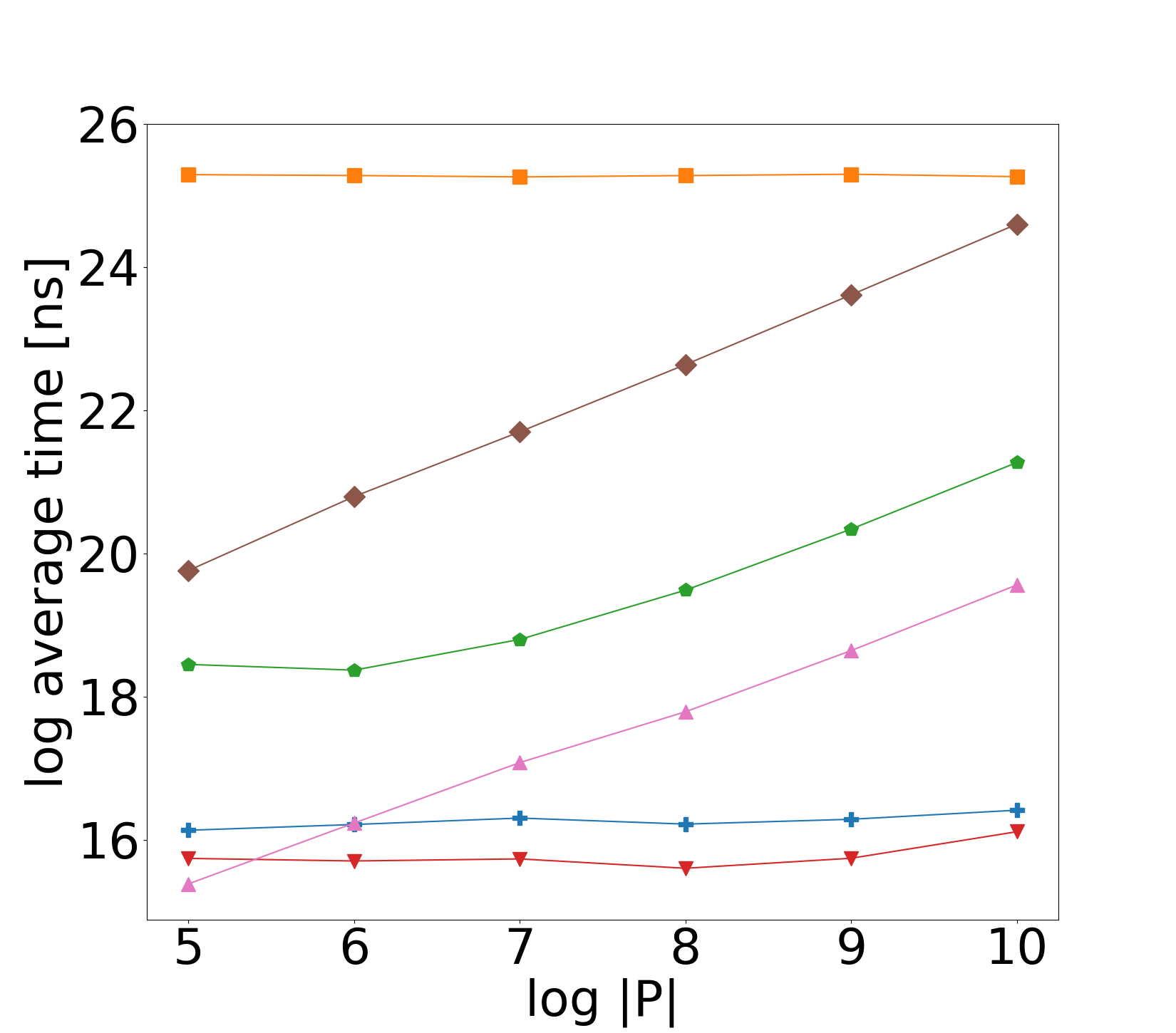}}
\subfloat[ENGLISH]
  {\includegraphics[width=.2\linewidth]
  {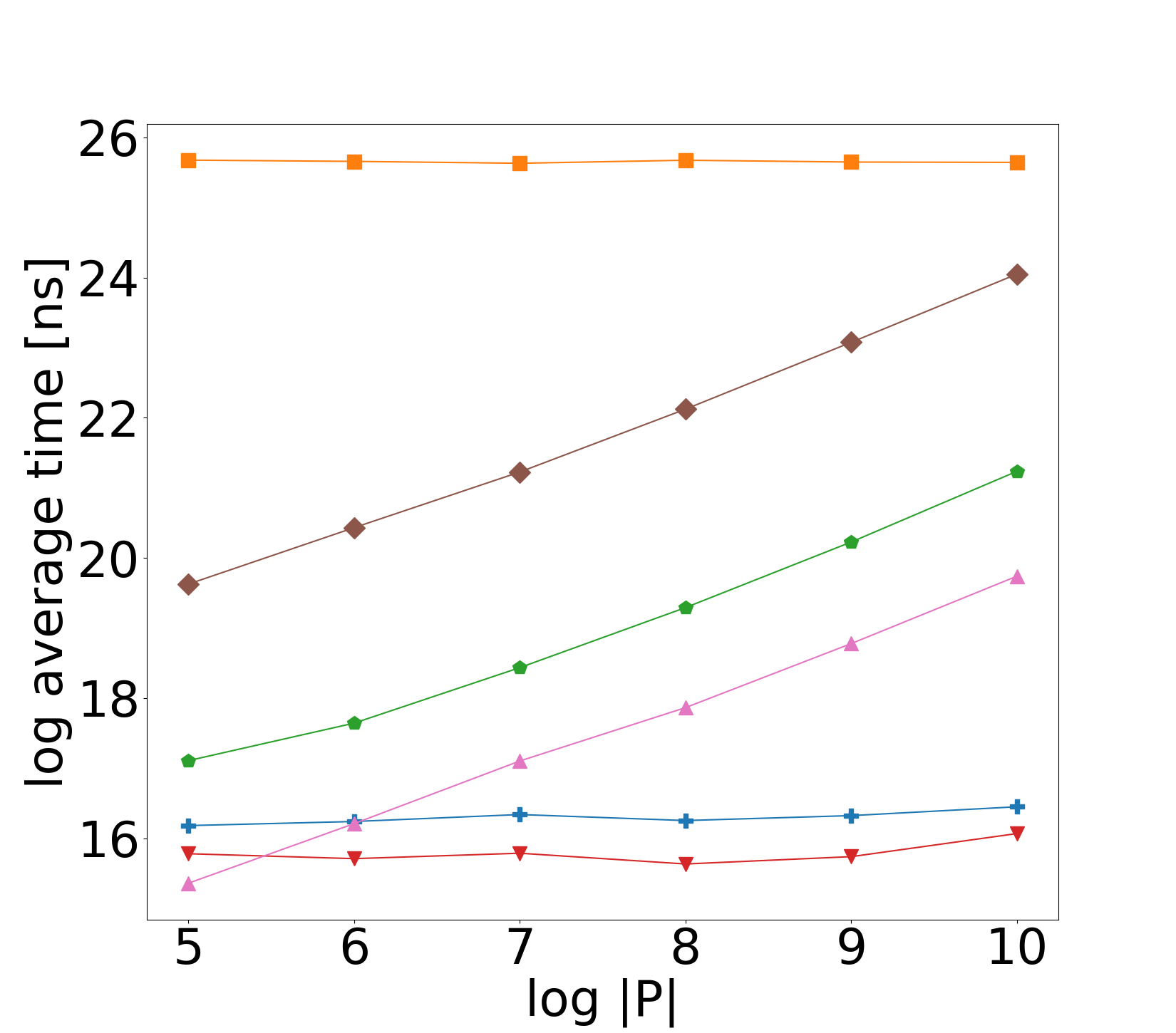}}
\caption{\label{fig:pattern-matching-time}Average time for pattern matching (nanoseconds in log-scale) for varying $|P|$ (log-scale).}
\end{figure*}

\subsection{Index construction space}\label{sec:con_space_results}
Figure~\ref{fig:index-construction-space} shows the index construction space required for the first five datasets of Table~\ref{tab:datasets}. Notably, for all datasets when $\ell \geq 128$, \textsf{rrBDA-index II (int)} and \textsf{rrBDA-index II (ext)} outperform all other indexes. As expected, the construction space required by \textsf{rrBDA-index II (int)} decreases with increasing $\ell$. The construction space of \textsf{rrBDA-index II (ext)} decreases with $\ell$ initially but remains almost the same for $\ell \geq 256$. This is because approximately 600MB of space (internal memory) are used by the implementations constructing arrays \textsf{SA} and \textsf{LCP} in external memory.  

\begin{figure*}[ht]
\begin{subfigure}[b]{0.49\textwidth}
\hspace{0.5cm}
  \includegraphics[width=16cm]{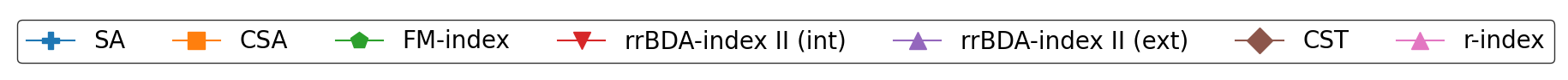}
 \end{subfigure}
 
\subfloat[DNA]
  {\includegraphics[width=.2\linewidth]
  {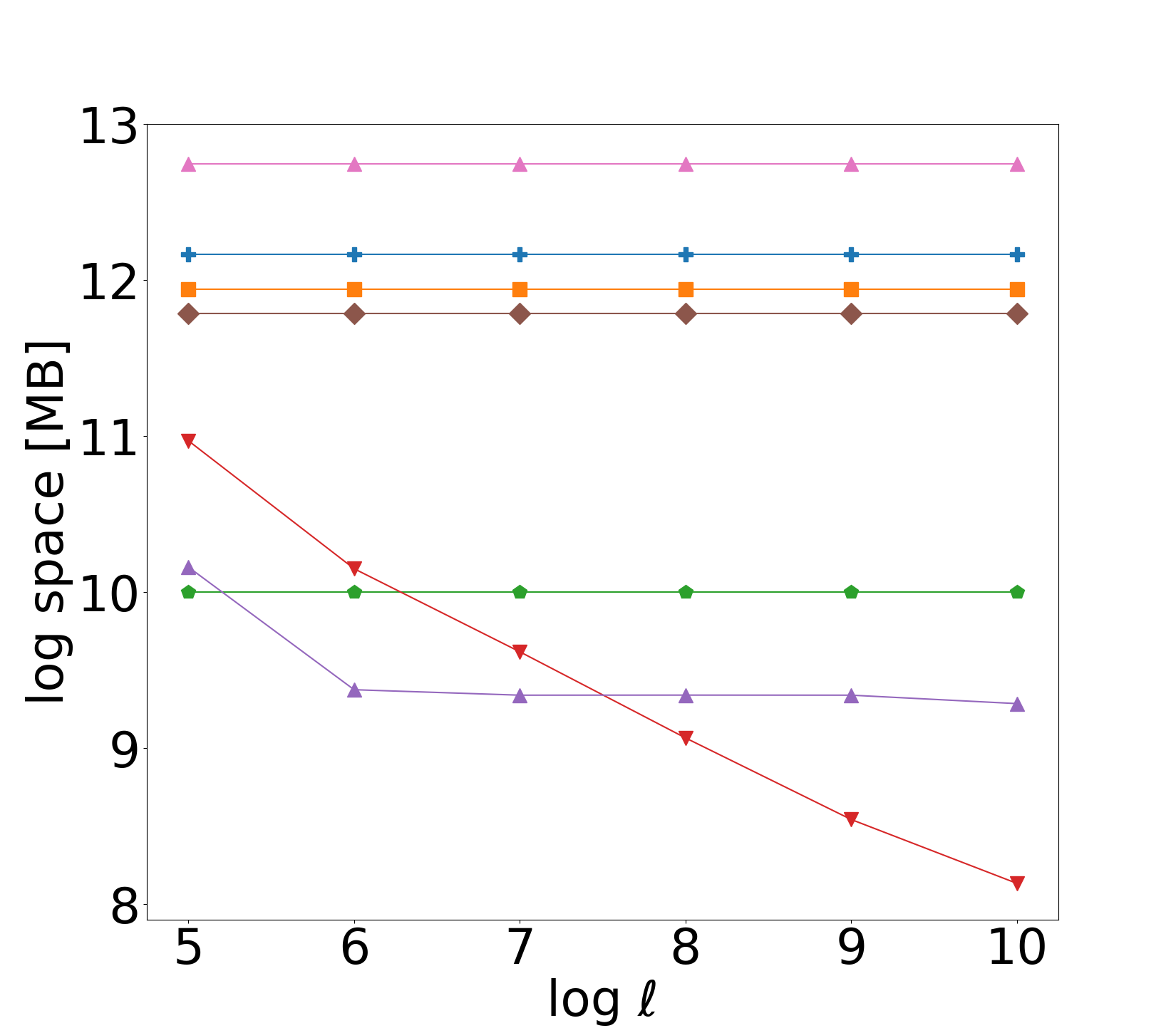}}
\subfloat[PROTEINS]
  {\includegraphics[width=.2\linewidth]
  {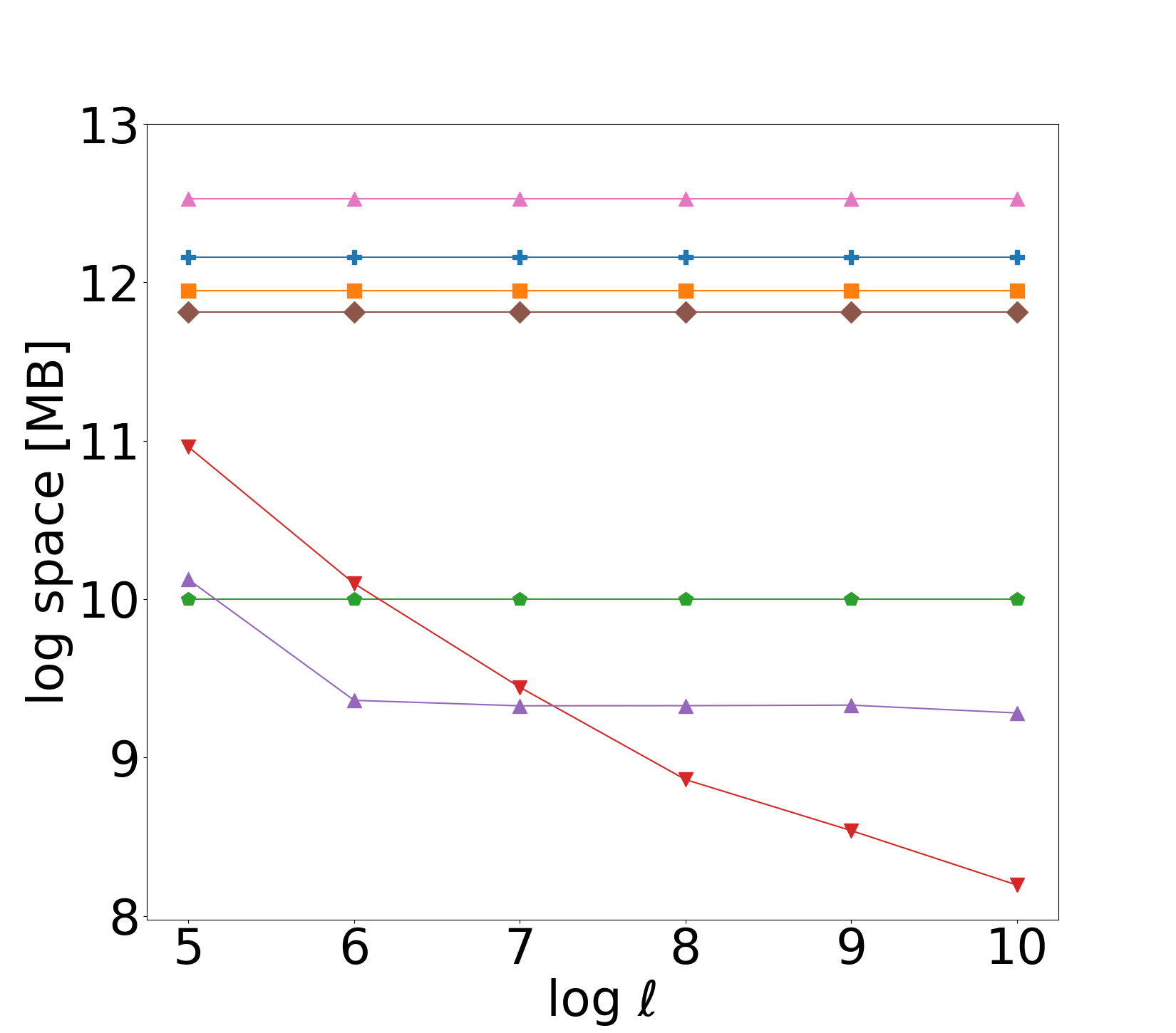}}
\subfloat[XML]
  {\includegraphics[width=.2\linewidth]
  {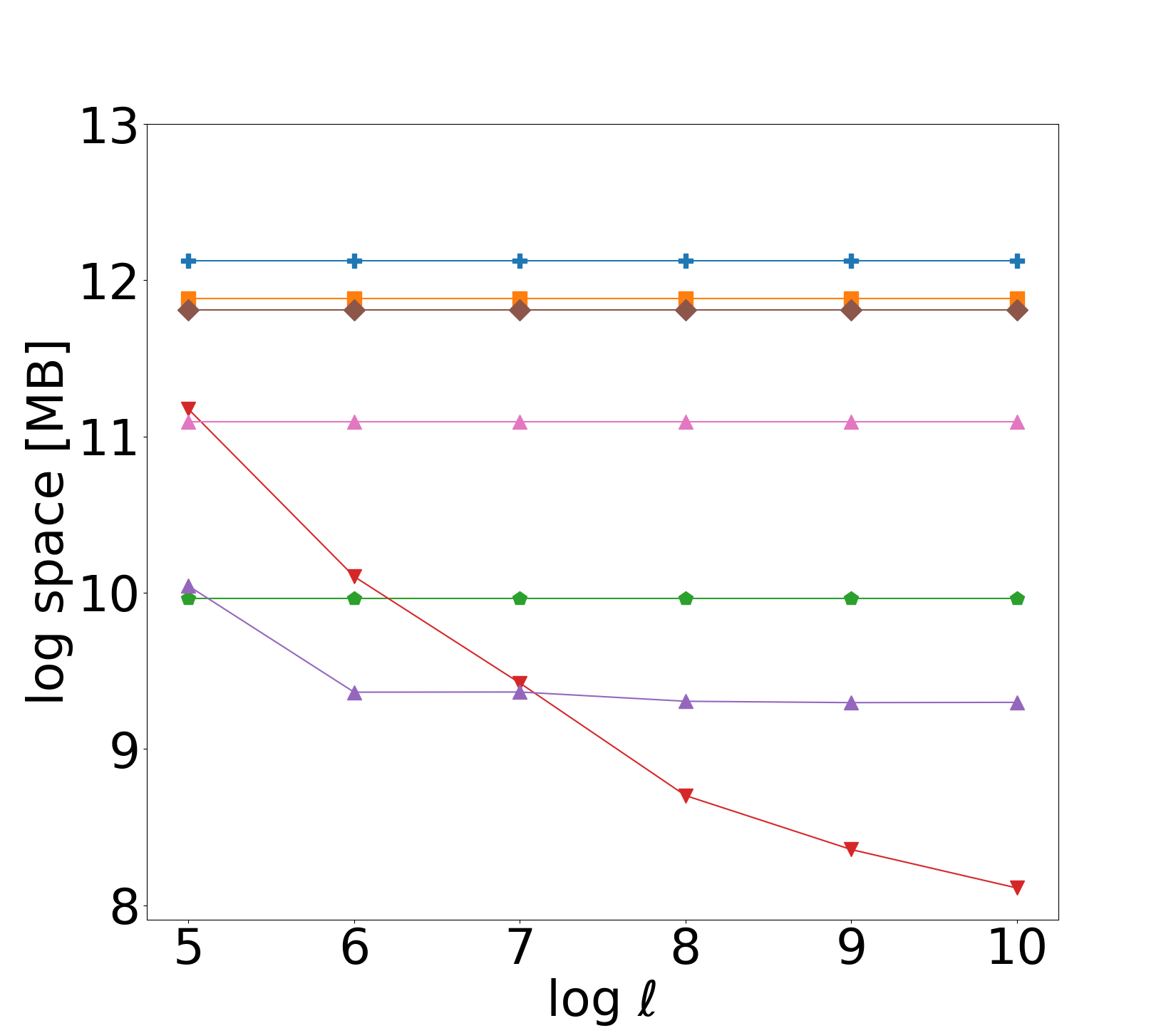}}  
\subfloat[SOURCES]
  {\includegraphics[width=.2\linewidth]
  {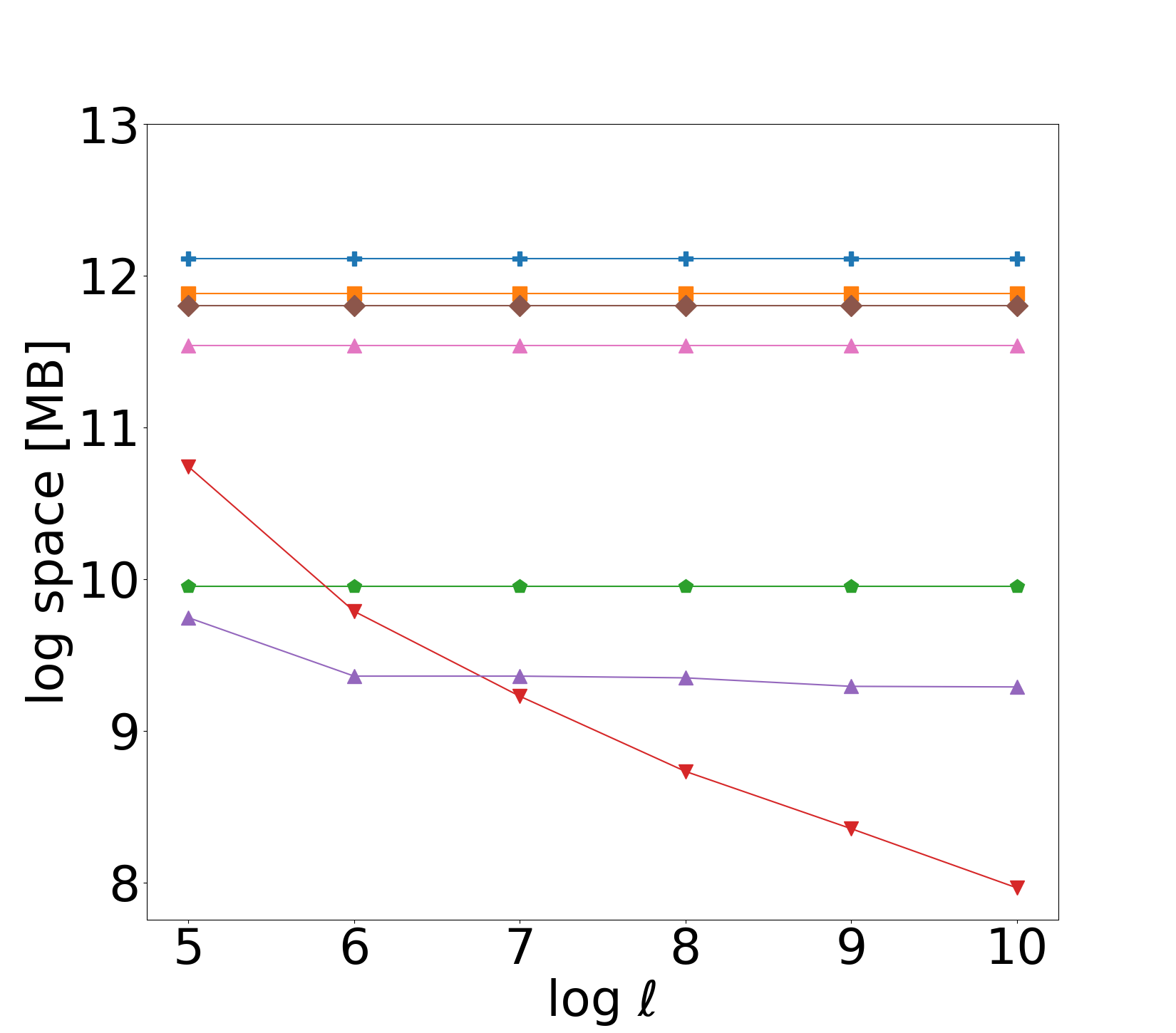}}
\subfloat[ENGLISH]
  {\includegraphics[width=.2\linewidth]
  {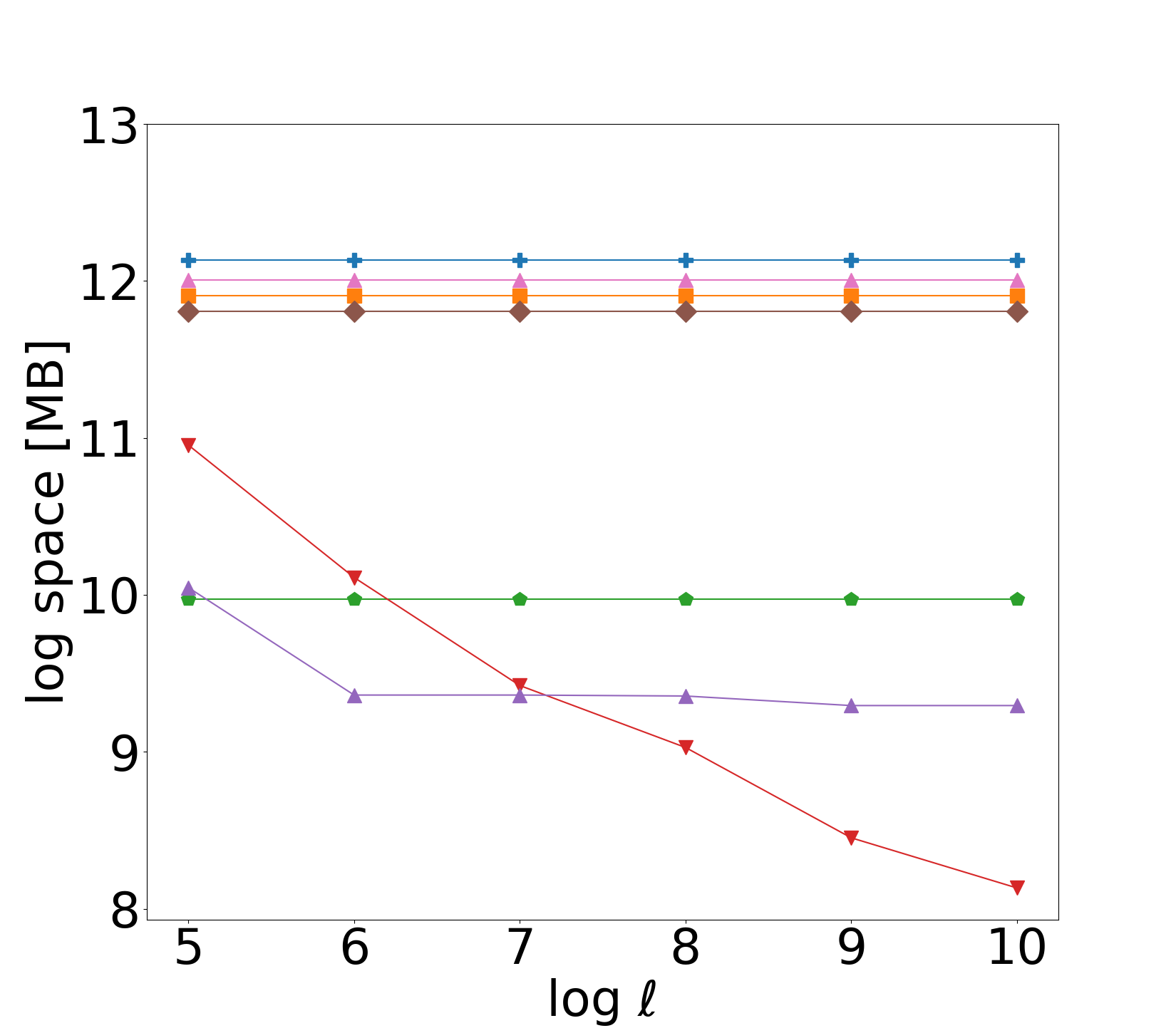}}
 \caption{\label{fig:index-construction-space}Space required to construct different indexes (MB in log-scale) for varying $\ell$ (log-scale).}
\end{figure*}

\subsection{Index construction time}\label{sec:con_time_results}
Figure~\ref{fig:index-construction-time} shows the time needed to construct all indexes for the first five datasets of Table~\ref{tab:datasets}. 
We set $b=25$K for \textsf{rrBDA-index II (int)} and \textsf{rrBDA-index II (ext)}. For all datasets and $\ell$ values, \textsf{rrBDA-index II (ext)} outperforms \textsf{CSA} and is  outperformed by the \textsf{SA}, \textsf{FM-index}, \textsf{CST}, and \textsf{r-index}. 
It also outperforms \textsf{rrBDA-index II (int)} for $\ell=32$ but performs worse  for all larger $\ell$ values. This is because, as $\ell$ increases, the number of bd-anchors decreases significantly, and thus \textsf{rrBDA-index II (int)} becomes faster (e.g., it performs similarly to one or more of the competitors for $\ell=1024$). On the other hand, as $\ell$ increases,  \textsf{rrBDA-index II (ext)} still needs to construct arrays \textsf{SA} and \textsf{LCP} in external memory, which is the bottleneck for the construction time.  

\begin{figure*}[ht]
\begin{subfigure}[b]{0.49\textwidth}
\hspace{0.5cm}
  \includegraphics[width=16cm]{legend_time_space.png}
 \end{subfigure}
 
\subfloat[DNA]
  {\includegraphics[width=.2\linewidth]
  {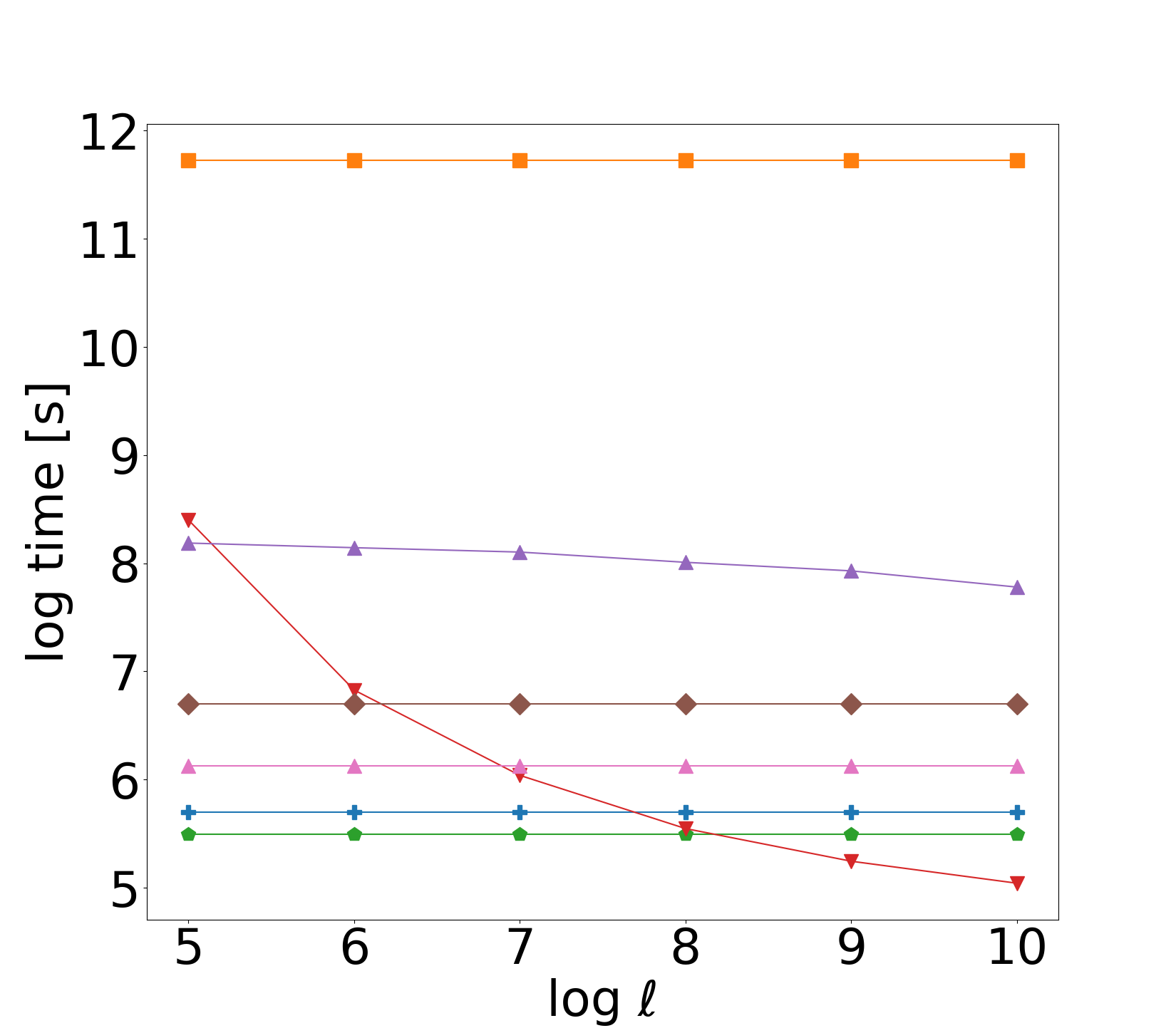}}
\subfloat[PROTEINS]
  {\includegraphics[width=.2\linewidth]
  {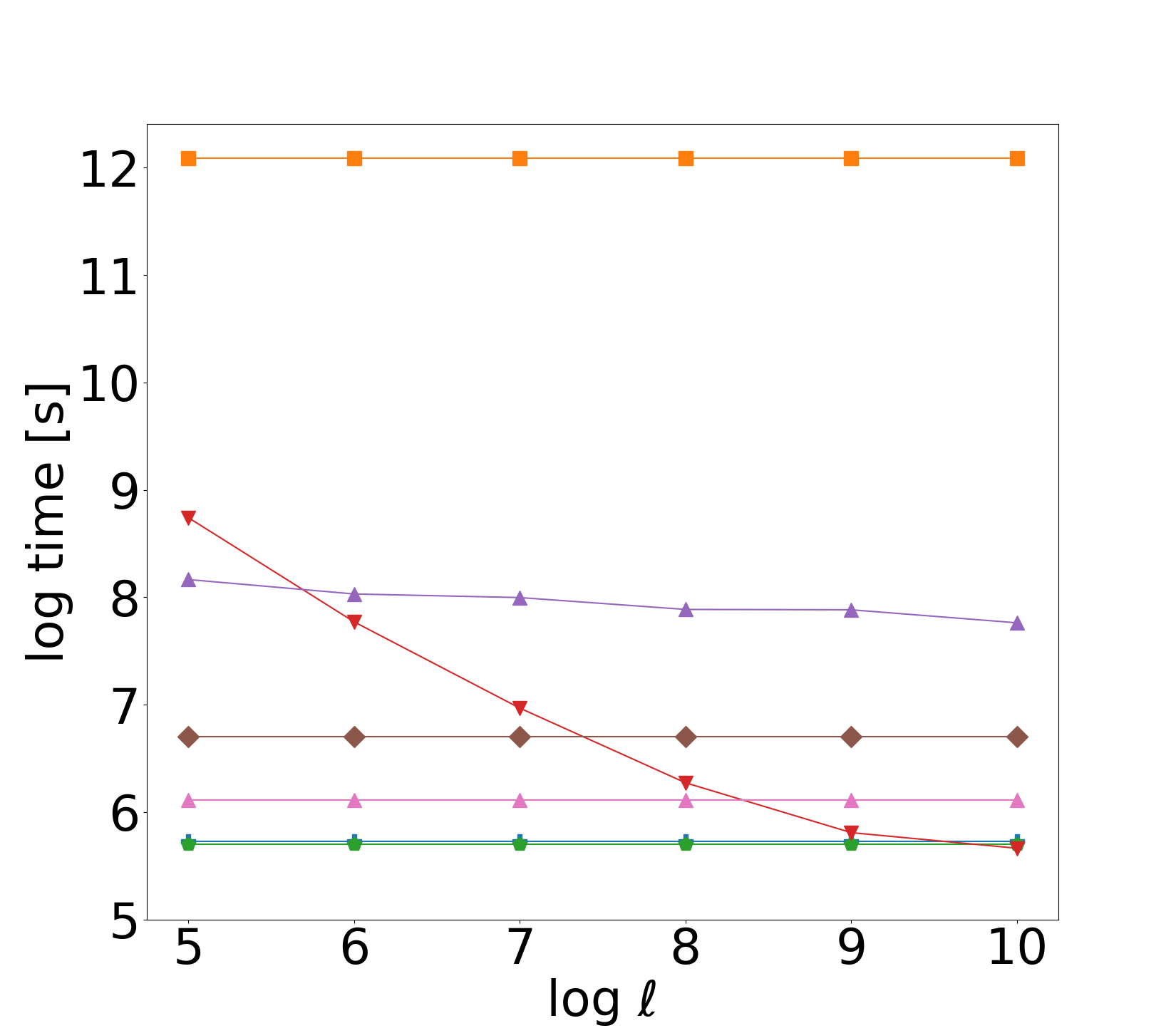}}
\subfloat[XML]
  {\includegraphics[width=.2\linewidth]
  {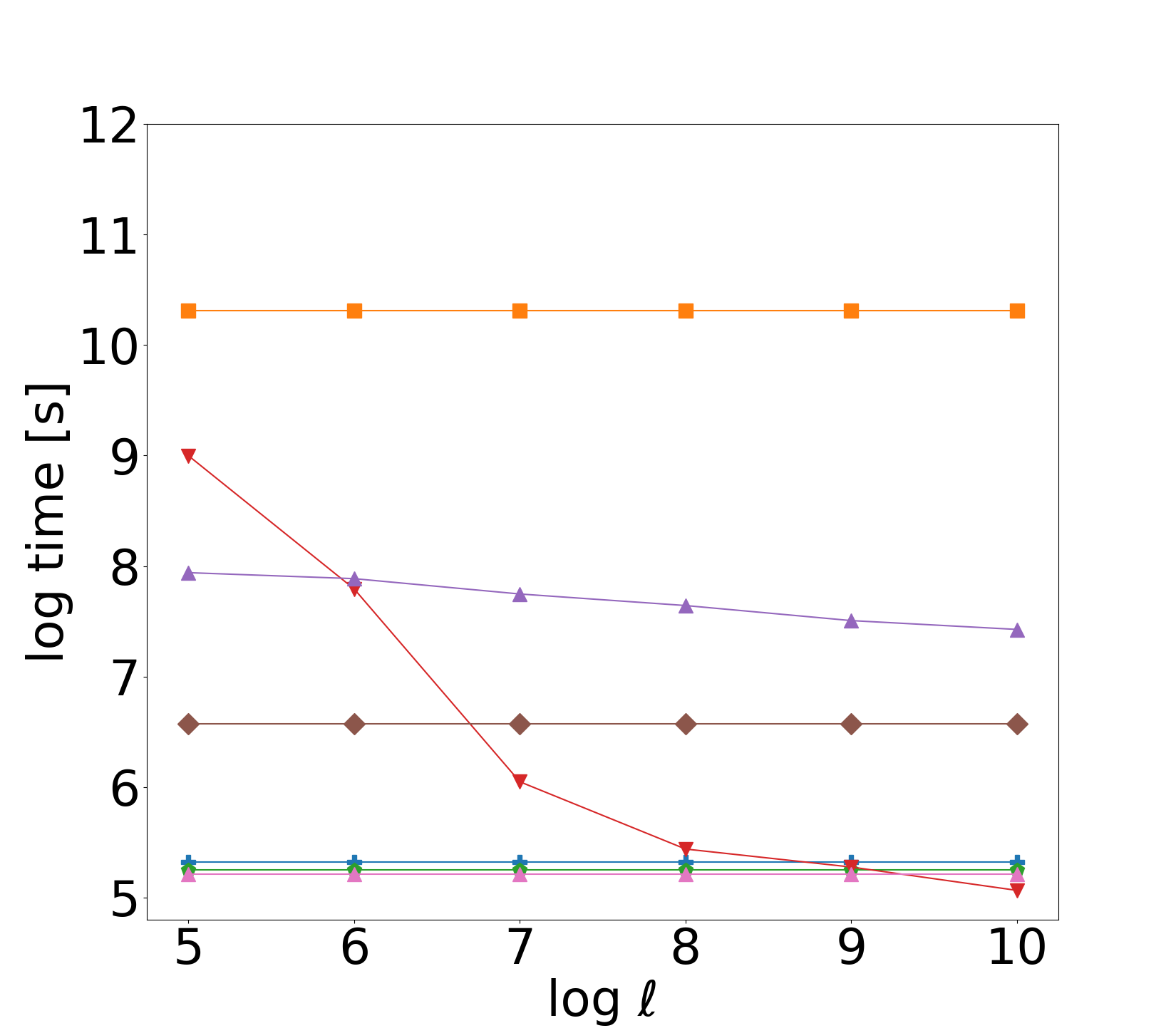}}  
\subfloat[SOURCES]
  {\includegraphics[width=.2\linewidth]
  {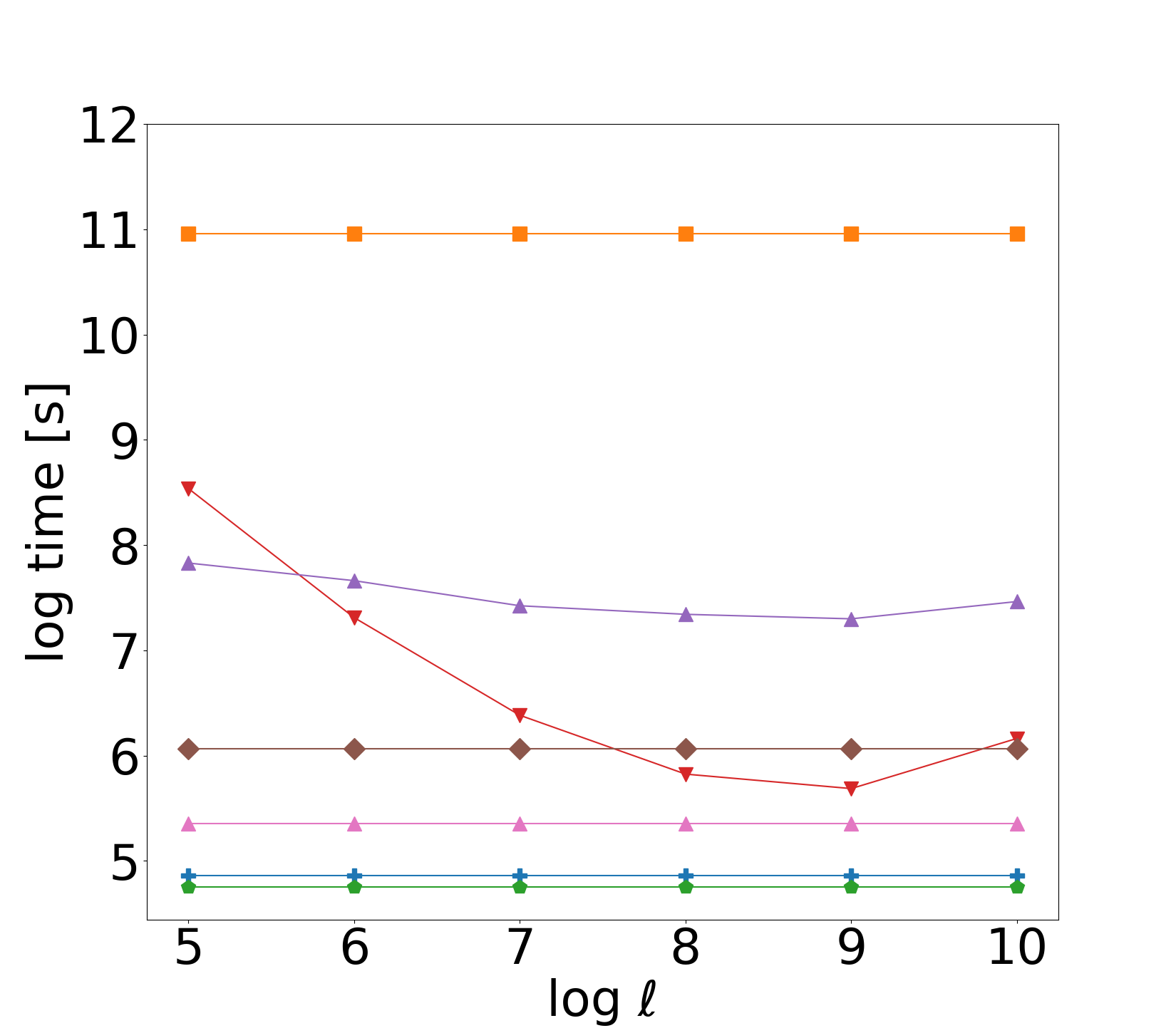}}
\subfloat[ENGLISH]
  {\includegraphics[width=.2\linewidth]
  {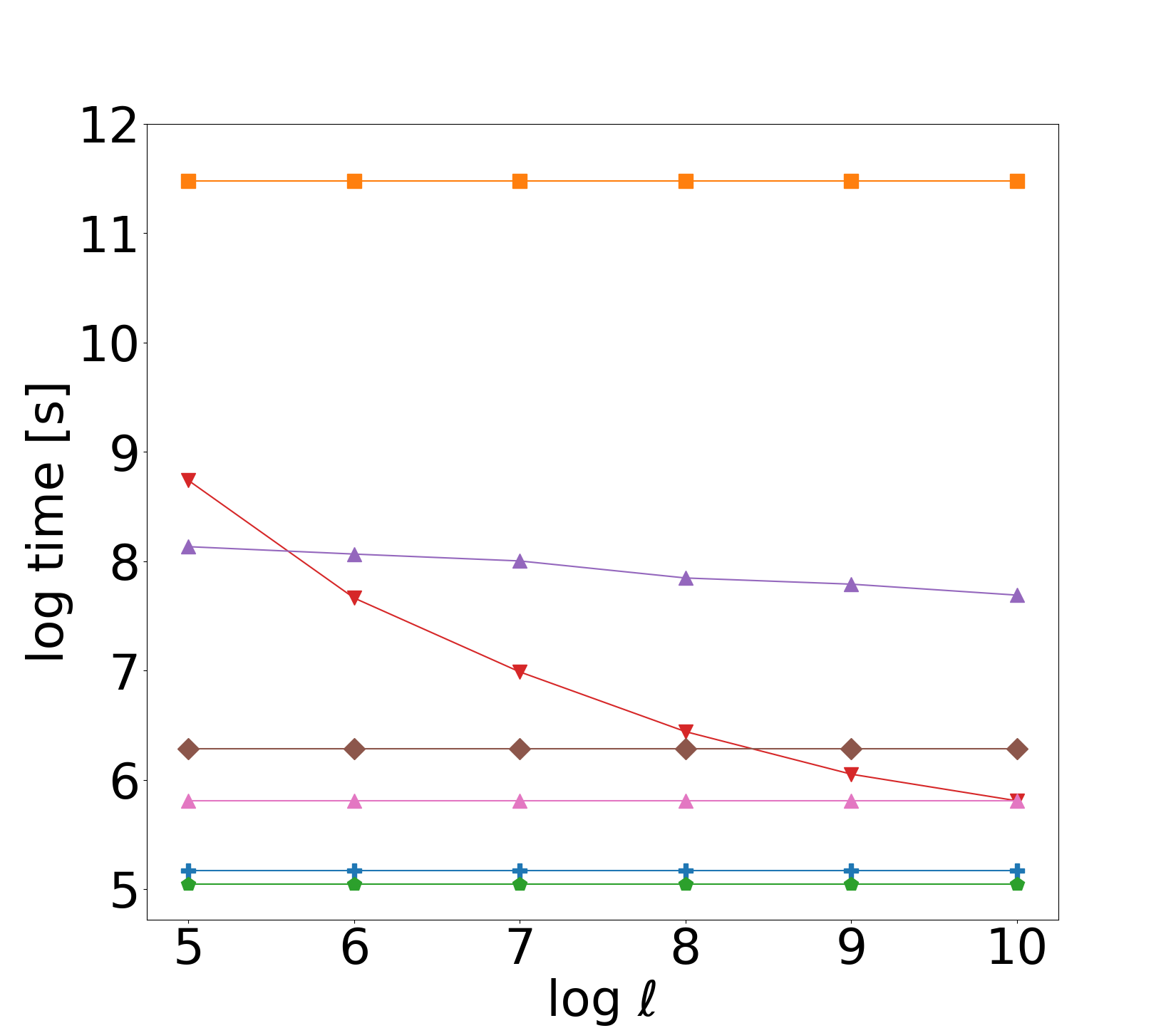}}
\caption{\label{fig:index-construction-time}Elapsed time to construct different indexes (seconds in log-scale) for varying $\ell$ (log-scale).}
\end{figure*}

\subsection{Results on the HUMAN dataset}\label{sec:HS_results}

Figure~\ref{fig:human_anchors} shows that the number of randomized reduced bd-anchors (\textsf{rrBDA-compute}) is much smaller than that of randomized bd-anchors (\textsf{rBDA-compute}) ($25.2\%$ on average) and that both of these numbers decrease as $\ell$ increases.  
This result is analogous to that of Figure~\ref{fig:rbda-construction-count}. 

Figure~\ref{fig:human_size} shows that 
\textsf{rrBDA-index II} (both \textsf{int} and \textsf{ext} versions have the same index size) outperforms both \textsf{FM-index} and \textsf{r-index} for $\ell\geq 64$ in terms of index size (we excluded the other indexes as they were outperformed by the latter two). 
For example, for $\ell=2^{14}=16,384$, \textsf{rrBDA-index II} takes less than 16MB, while \textsf{FM-index} and \textsf{r-index} take 1GB and 16GB, respectively. 

\begin{figure*}[ht]
 \subfloat[]
 {\includegraphics[width=0.2\linewidth]{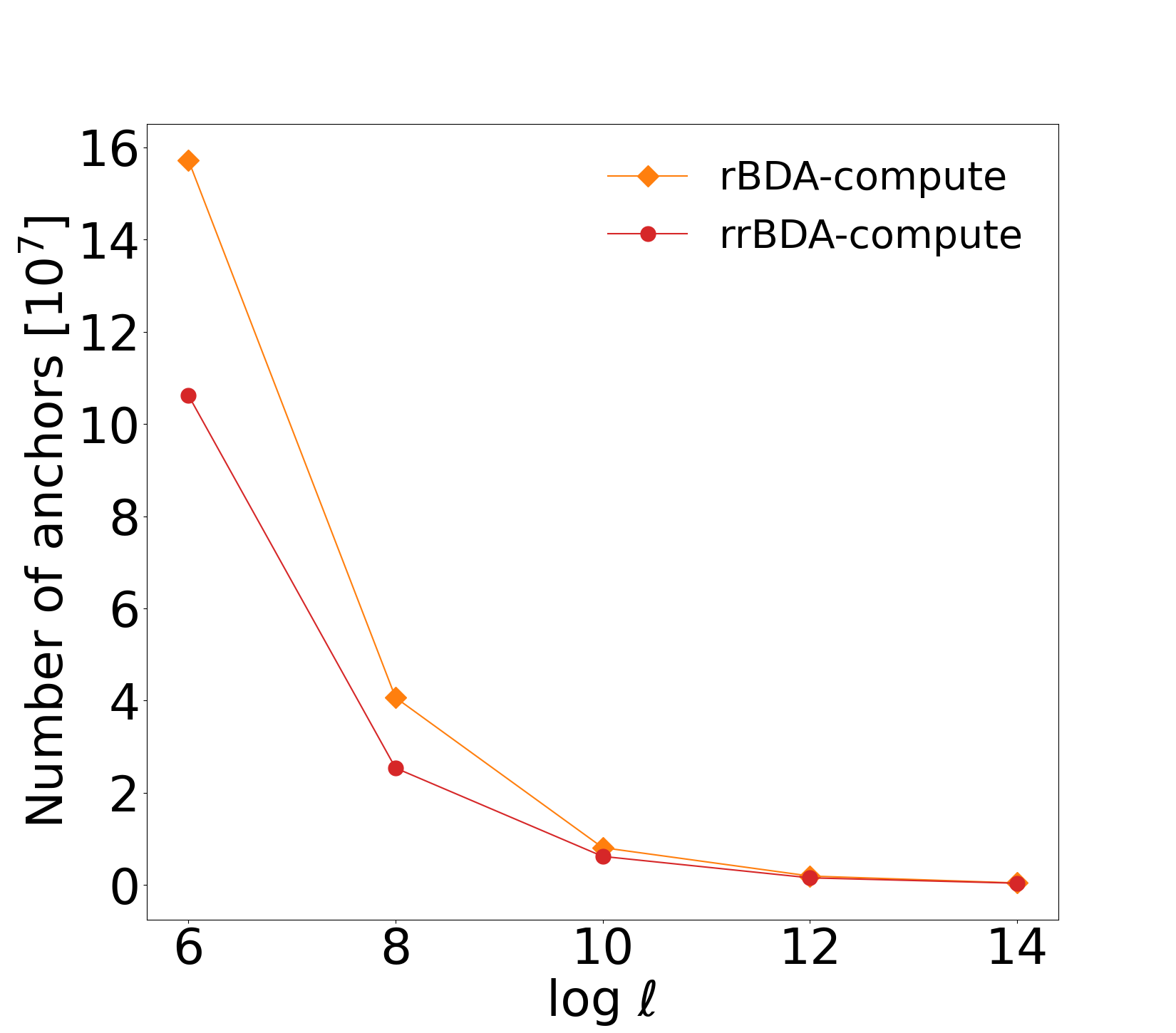}\label{fig:human_anchors}}
  \subfloat[]
{\includegraphics[width=0.2\linewidth]{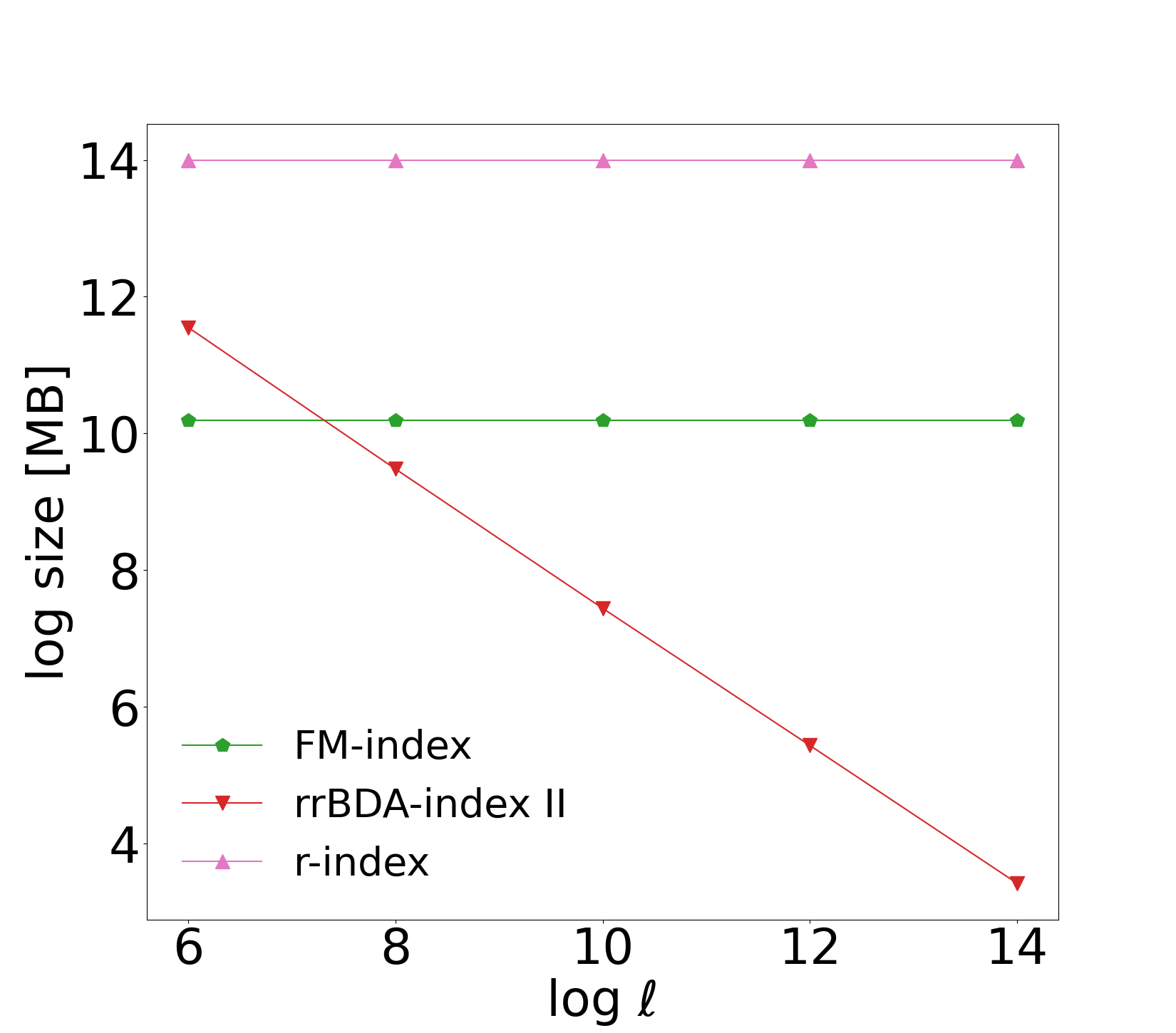}\label{fig:human_size}}
\subfloat[]
{\includegraphics[width=0.2\linewidth]{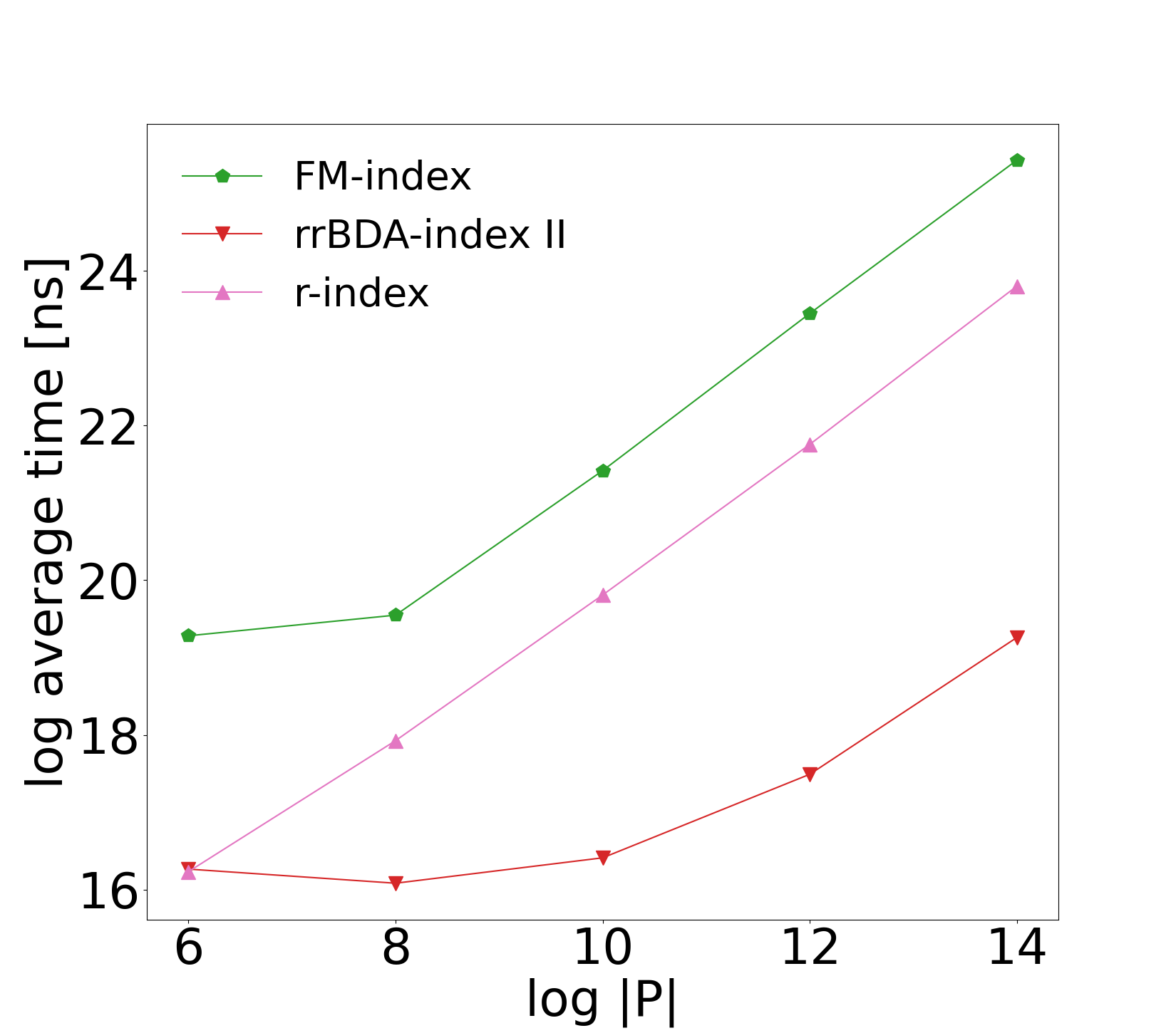}\label{fig:human_pattern}}
\subfloat[]
{\includegraphics[width=0.2\linewidth]{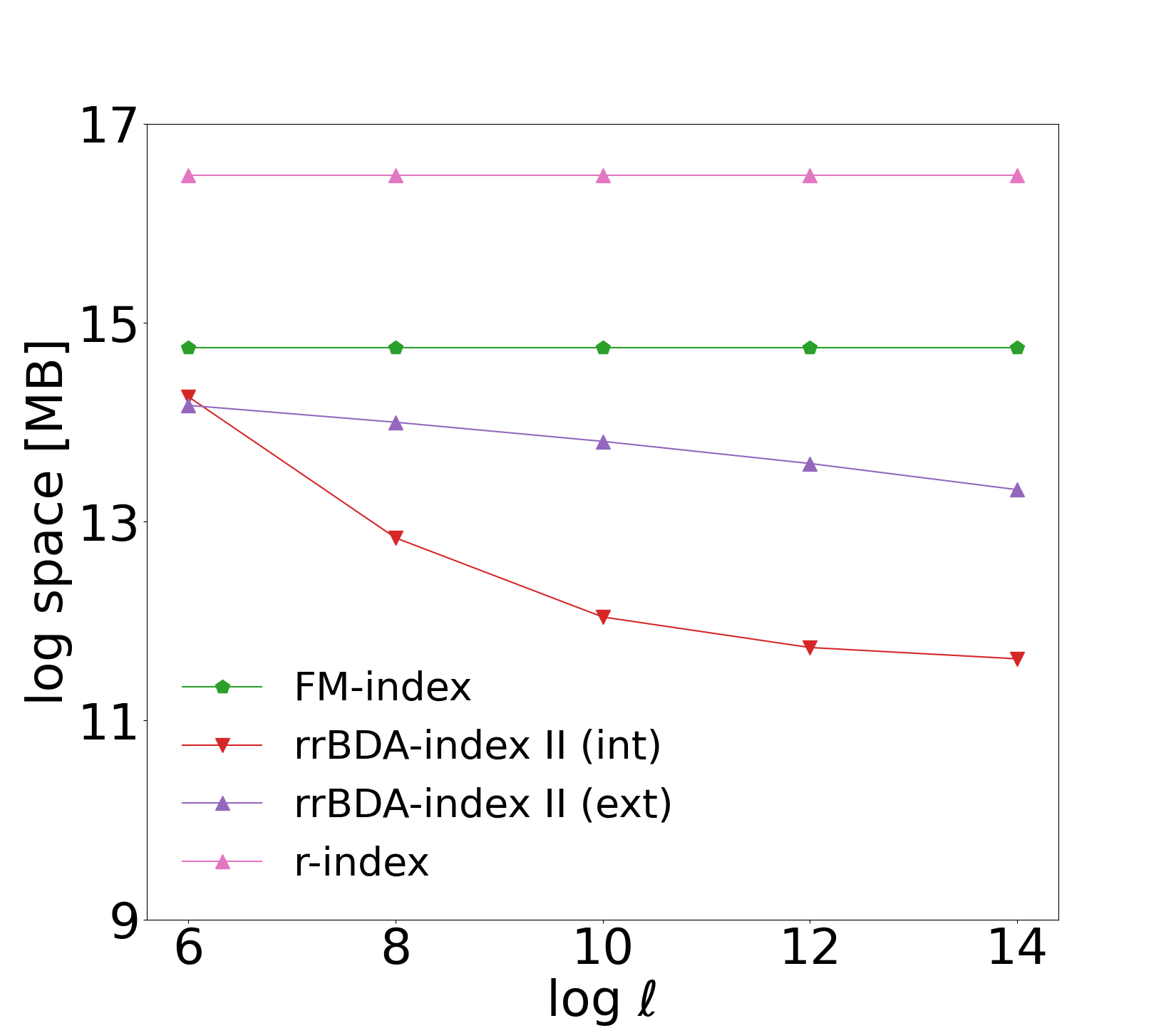}\label{fig:human_space}}
\subfloat[]
{\includegraphics[width=0.2\linewidth]{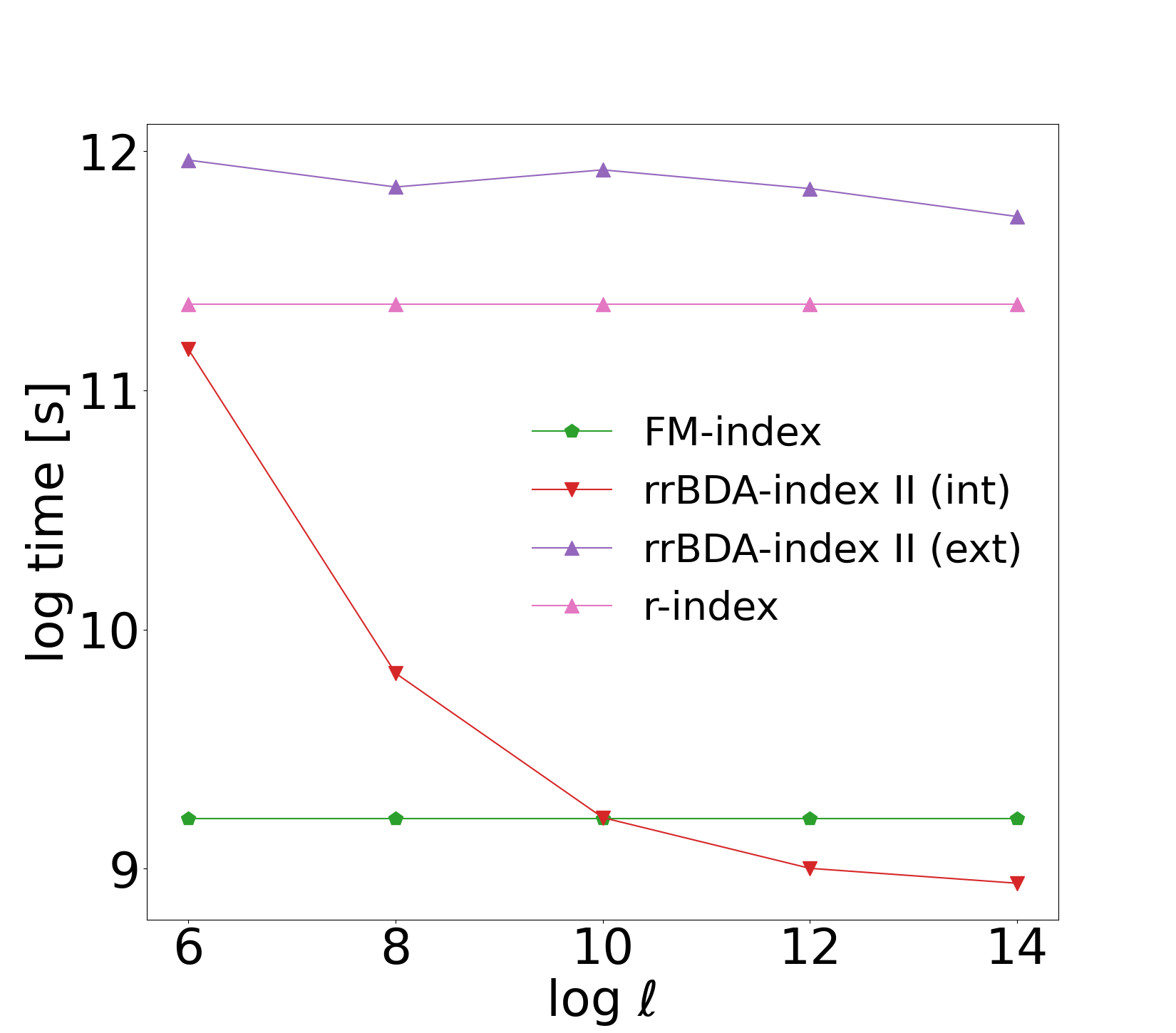}\label{fig:human_time}}
\caption{\label{fig:human}(a) Number of bd-anchors for HUMAN and varying $\ell$ (log-scale). (b) Size of different indexes (MB in log-scale) for HUMAN and varying $\ell$ (log-scale). (c) Average time for pattern matching (nanoseconds in log-scale) on HUMAN and for varying $|P|$ (log-scale). (d) Space required to construct different indexes (MB in log-scale) for HUMAN and varying $\ell$ (log-scale). (e) Elapsed time to construct different indexes (seconds in log-scale) for HUMAN and varying $\ell$ (log-scale).}
\end{figure*}

Figure~\ref{fig:human_pattern} shows the average query time for HUMAN  with $\ell=|P|$. As $\ell$ increases, \textsf{rrBDA-index II} becomes more than one order of magnitude faster than the \textsf{FM-index} and \textsf{r-index}. Note that for all $\ell$, \textsf{rrBDA-index II} is faster than the \textsf{FM-index} and \textsf{r-index}.

Figure~\ref{fig:human_space} shows that
\textsf{rrBDA-index II (int)} outperforms all other indexes in terms of construction space, and the difference between \textsf{rrBDA-index II (int)} and these indexes becomes larger as $\ell$ increases. The results are analogous to those in Figure~\ref{fig:index-construction-space}. 
For example, for $\ell=2^{14}$ \textsf{rrBDA-index II (int)} requires at least $2\times$ less space than \textsf{rrBDA-index II (ext)} and more than $4\times$ less space than \textsf{FM-index}. At the same time, \textsf{rrBDA-index II (ext)} substantially outperforms \textsf{FM-index} and \textsf{r-index} for all tested $\ell$ values.

Figure~\ref{fig:human_time} shows that 
\textsf{rrBDA-index II (int)} performs better as $\ell$ increases, and even outperforms the best competitor, \textsf{FM-index} for $\ell>2^{10}$.\footnote{We set $b = 130K$, as this dataset is much larger than those above~\cite{DBLP:journals/pvldb/AyadLP23}.} At the same time, \textsf{rrBDA-index II (ext)} performs worse than \textsf{FM-index} and \textsf{r-index}. The reason that \textsf{rrBDA-index II (ext)} performs worse is the construction of \textsf{SA} and the \textsf{LCP} array, as explained before. 
The results are analogous to those for DNA in Figure~\ref{fig:index-construction-time}. 

The presented results highlight the scalability of our construction algorithms as well as the superiority of \textsf{rrBDA-index II} for long patterns. 

\section{Final remarks}\label{sec:finale}

We have shown that our implementation of the bd-anchors index should be the practitioners' choice for long patterns. When $\ell \geq 512$, our index outperforms all indexes for all datasets in terms of index size. When $\ell \geq 64$, our index outperforms all indexes for all datasets in terms of query time. When $\ell \geq 64$, our index outperforms all indexes for all datasets in terms of construction space. The construction time of our index is up to $5\times$ slower than the state of the art. However, we believe that this does not make our index less practical as in many cases of practical interest our index can be constructed only once and queried multiple times.

An obvious criticism against the bd-anchors index is that its theoretical guarantees rely on the average-case model and on the assumption that we have at hand a lower bound on $\ell$. For the former, we have shown, using benchmark datasets (and not using synthetic uniformly random ones), that our index generally outperforms the state of the art for long patterns, which is the main claim here. For the latter, in most real-world applications we can think of, we do have a lower bound on the length of patterns (see Section~\ref{sec:intro:contributions}). For example, in long-read alignment, this bound is in the order of several hundreds. Finally, the state-of-the-art compressed indexes also have a crucial assumption, which is implicit and thus oftentimes largely neglected. The assumption is that the size $\sigma$ of the alphabet is much smaller than the length $n$ of the text. In the worst case, i.e., when $\sigma=\Theta(n)$, the state-of-the-art compressed indexes have no advantage compared to the suffix array or the suffix tree: they occupy $n\log\sigma=\Theta(n\log n)$ bits. 

The theoretical highlight of this paper is Theorem~\ref{the:worst-case}. The underlying index has $\cO(n/\ell)$ size, it can be constructed in near-optimal time using $\cO(n/\ell)$ working space, and it supports near-optimal pattern matching queries. Since it works for any integer $\ell\in[1,n]$, it can be seen as a parameterization of the classic suffix tree index~\cite{DBLP:conf/focs/Farach97}. Our construction is currently randomized; it is an interesting open problem to make it deterministic.

\section*{Acknowledgments}
The authors thank Panagiotis Charalampopoulos for helpful suggestions on Theorem~\ref{the:worst-case}.
This work is partially supported by the PANGAIA and ALPACA projects that have received funding from the European Union's Horizon 2020 research and innovation programme under the Marie Skłodowska-Curie grant agreements No 872539 and 956229, respectively.

\bibliographystyle{plain}
\bibliography{references}

\end{document}